\newcommand{\citeyearp}[1]{[\citeyear{#1}]}
\setlist{itemsep=0pt, topsep=2pt}
\newcommand{\email}[1]{\href{mailto:#1}{#1}}
\DeclareMathOperator*{\argmax}{arg\,max}
\pgfplotsset{compat=1.18}
\pgfplotsset{compat=1.18}
\pgfplotsset{compat=newest}
\newcommand{\agents}{\mathcal{N}}
\newcommand{\actions}{\mathcal{A}}
\newcommand{\set}{S}
\newcommand{\cost}{c}
\newcommand{\reals}{\mathbb{R}}
\newcommand{\contract}{\boldsymbol{\alpha}}
\newcommand{\Seq}{S^\dagger}
\newcommand{\opt}{OPT}
\newcommand{\sstar}{S^{\star}}
\newcommand{\downscaling}{\Psi}
\newcommand{\ind}[1]{\mathds{1}[#1]}
\def\moverlay{\mathpalette\mov@rlay}
\def\mov@rlay#1#2{\leavevmode\vtop{%
   \baselineskip\z@skip \lineskiplimit-\maxdimen
   \ialign{\hfil$\m@th#1##$\hfil\cr#2\crcr}}}
\newcommand{\charfusion}[3][\mathord]{
    #1{\ifx#1\mathop\vphantom{#2}\fi
        \mathpalette\mov@rlay{#2\cr#3}
      }
    \ifx#1\mathop\expandafter\displaylimits\fi}
\newcommand{\bigcupdot}{\charfusion[\mathop]{\bigcup}{\cdot}}
\definecolor{SetA}{RGB}{0,0,255}     
\definecolor{SetB}{RGB}{0,160,0}     
\definecolor{SetC}{RGB}{255,0,0}     
\definecolor{SetD}{RGB}{160,0,160}   
\definecolor{SetE}{RGB}{100,100,100} 
\definecolor{SetF}{RGB}{0,128,128} 
\newcommand{\lb}{L}
\newcommand{\rb}{R}
\newcommand{\intersection}{M}
\newtheorem{theorem}{Theorem}[section]
\newtheorem{lemma}[theorem]{Lemma}
\newtheorem{proposition}[theorem]{Proposition}
\newtheorem{definition}[theorem]{Definition}
\newtheorem{example}[theorem]{Example}
\newtheorem{remark}[theorem]{Remark}
\begin{document}


\title{Combinatorial Contract Design:\\
Recent Progress and Emerging Frontiers\thanks{
This project has been partially funded by the European Research Council (ERC) under the European Union's Horizon Europe Program (grant agreement No.~101170373), by an Amazon Research Award, by the NSF-BSF (grant no.~2020788), and by the Israel Science Foundation Breakthrough Program (grant No.~2600/24), and by a grant from TAU Center for AI and Data Science. 

The author is profoundly grateful to all co-authors whose collaborations are surveyed in this article, and in particular to Paul D\"utting, Tomer Ezra, Yoav Gal-Tzur, Thomas Kesselheim, Tomasz Ponitka, Maya Schlesinger and Inbal Talgam-Cohen for generous feedback and thoughtful suggestions during its preparation.
}}

\author{Michal Feldman\thanks{Blavatnik School of Computer Science and Artificial Intelligence, Tel Aviv University (\email{mfeldman@tau.ac.il}, \url{https://www.mfeldman.sites.tau.ac.il/}).}}

\date{}

\maketitle

\begin{abstract} 
Contract theory studies how a principal can incentivize agents to exert costly, unobservable effort through performance-based payments. While classical
economic models provide elegant characterizations of optimal solutions, modern applications, ranging from online labor markets and healthcare to AI delegation and blockchain protocols, call for an algorithmic perspective. The challenge is no longer only which contracts induce desired behavior, but whether such contracts can be computed efficiently. This viewpoint has given rise to \emph{algorithmic contract design}, paralleling the rise of algorithmic mechanism design two decades ago.

This article focuses on \emph{combinatorial contracts}, an emerging frontier within algorithmic contract design, where agents may choose among exponentially many combinations of actions, or where multiple agents must work together as a team, and the challenge lies in selecting the right composition. These models capture a wide variety of real-world contracting environments, from hospitals coordinating physicians across treatment protocols to firms hiring teams of engineers for interdependent tasks. We review three combinatorial settings: (i) a single agent choosing multiple actions, (ii) multiple agents with binary actions, and (iii) multiple agents each selecting multiple actions. For each, we highlight structural insights, algorithmic techniques, and complexity barriers. Results include tractable cases such as gross substitutes reward functions, hardness results, and approximation guarantees under value- and demand-oracle access.
By charting these advances, the article maps the emerging landscape of combinatorial contract design, and highlights fundamental open questions and promising directions for future work.
\end{abstract}

\section{Introduction}
\label{sec:intro}
At the heart of contract theory lies a fundamental tension: a principal delegates a task to an agent, yet the agent's effort and behavior are not directly observable. Performance-based payment schemes --- or contracts --- serve as the instrument for aligning incentives. Classic economic theory offers a powerful framework for this principal–agent problem, offering deep insights into moral hazard and incentive provision~\citep{GrossmanHart83,Holmstrom79,milgrom-holmstrom87,HolmstromMilgrom91,Salanie17,Mirrlees75}, a body of work recognized by the 2016 Nobel Prize awarded to 
Hart and Holmstr\"om
~\citeyearp{Nobel}.

In recent years, this subject has taken on new life at the interface of economics and computation. As interactions increasingly occur in computational environments --- online labor platforms \citep{KaynarS22,HoSV16}, influencer marketing \citep{DecarolisGP20}, healthcare services \citep{BastaniGB18}, delegating machine learning tasks~\citep{CaiDP15} or blockchain protocols \citep{EyalS18,CongHe19} --- the question is no longer only which contracts induce desired behavior, but also whether they can be computed efficiently. This computational perspective has given rise to \emph{algorithmic contract design}, a growing research frontier that parallels the rise of algorithmic mechanism design two decades ago. 
Moreover, as AI agents take on increasingly complex tasks, questions of incentive alignment and computational tractability are only becoming more pressing \citep{Hadfield-Menell19a,WangEtAl2023,SaigET24}.

This research agenda was pioneered by~\citet{BabaioffFN06,BabaioffFNW12}, 
and further shaped by early influential contributions such as~\citep{HoSV16} and~\citep{DuttingRT19}.
The analogy with mechanism design is instructive. Mechanism design asks how to allocate resources and set payments when agents have private information. Its algorithmic counterpart reshaped the field by mapping tractability frontiers and providing approximation guarantees where exact solutions were infeasible. Contract theory addresses an orthogonal challenge: not eliciting private information but incentivizing costly, hidden actions. Bringing algorithmic tools to contract theory similarly opens new directions --- highlighting structural insights and charting the computational landscape of finding optimal (or approximately optimal) contracts.

\paragraph{From simple to combinatorial contracts.}
The simplest principal-agent model involves a single agent, choosing a single action from a set of available actions, with the optimal contract derived by linear programming. But practical applications rarely stay within such clean boundaries. 
In reality, employers, platforms, and policymakers regularly face situations where incentives must be coordinated across a team, across tasks, or across both. Multiple agents interact, decisions involve collections of actions, and outcomes may depend on complex combinations of efforts, taken by different agents.

Consider a hospital contracting with a team of physicians, where each physician chooses among multiple treatment protocols; or a tech company hiring several engineers, each able to pursue different combinations of development tasks; or a logistics firm employing couriers, each choosing routes composed of multiple delivery stops. In all these examples, the contract must account not only for who exerts effort, but also which combination of actions they undertake. Such settings naturally lead to what we call combinatorial contracts.

The combinatorial viewpoint dramatically enlarges the design space. Even with a single agent facing multiple possible actions, the principal must consider tradeoffs across a potentially exponential set of action combinations. With multiple agents, the challenge expands further: an agent's best response now depends on the actions of others, raising coordination issues. This interdependence creates additional obstacles, such as the risk of agents ``free riding'' on the effort of their teammates, making it more difficult to align incentives. 

The richness of combinatorial settings brings both challenges and opportunities. The primary challenge is computational: designing contracts over exponentially large action and agent spaces calls for new algorithmic tools and techniques. At the same time, viewing these problems through a computational lens uncovers structural properties that would otherwise remain hidden. These insights, in turn, form the basis for algorithmic techniques and approximation schemes that achieve strong performance guarantees. In this way, combinatorial contracts are emerging as a central paradigm in algorithmic contract design, much as combinatorial auctions have in algorithmic mechanism design~\citep{nisan2007combinatorial}.

\paragraph{This article.}
This article focuses exclusively on combinatorial contracts, emphasizing two prominent dimensions that have received substantial attention in recent years: multiple actions and multiple agents. 
We begin in Section~\ref{sec:model} with combinatorial preliminaries that are relevant across all models. We then review three settings, each treated in a dedicated section: Section~\ref{sec:model1} examines the single-agent, multiple-action model, where an agent chooses subsets of actions; Section~\ref{sec:model2} turns to the multi-agent, binary-action model, where each agent decides whether to exert costly effort and outcomes depend on the collective behavior of the group; and Section~\ref{sec:model3} combines the two dimensions in the multi-agent, multi-action model.

Our goal is to summarize what is currently known about these models: which reward structures admit tractable algorithms, what hardness barriers arise, and what approximation guarantees can be established. By mapping these results, we aim to chart the emerging landscape of combinatorial contracts and provide a foundation for further exploration of this new frontier in the intersection of economics and computation. 

Notably, there are additional directions within combinatorial contract design that are equally important, but which we do not cover here. These include models with multiple outcomes~\citep{DuttingRT21} or multiple principals~\citep{AlonLST23}, both of which introduce significant new incentive and computational challenges.

Finally, an earlier survey of algorithmic contract design appeared in~\citep{DuttingFT24-survey}, offering a broad overview of the field. The present article complements that work by providing a more detailed treatment of combinatorial contracts, with particular attention to the multi-action and multi-agent dimensions.

\section{Combinatorial optimization preliminaries}
\label{sec:model}

The study of combinatorial contracts naturally builds on tools from combinatorial optimization. Before turning to the specific models of contract design, we review the key definitions and algorithmic primitives that will be used throughout the survey.

\subsection{Set functions}
\label{sec:set-functions}

Let $U=[n]$ be a ground set of $n$ elements.  
A \emph{set function} is a mapping $f:2^U \to \reals$ that assigns a real value to each subset of $U$, with $f(S)$ denoting the value associated with $S \subseteq U$.  
In what follows, the notation $[n]$ will be used for different underlying sets, such as the action space in Section~\ref{sec:model1} or the set of agents in Section~\ref{sec:model2}.

We will restrict attention to set functions that are \emph{normalized}, meaning $f(\emptyset)=0$, and \emph{monotone}, i.e., whenever $S \subseteq T \subseteq U$, it holds that $f(S) \leq f(T)$.

For two sets $S,T \subseteq U$, the \emph{marginal value} of $S$ relative to $T$ is defined as $f(S \mid T) = f(S \cup T) - f(T)$.
If $S$ is a singleton $\{j\}$, we write $f(j \mid T)$ for convenience.  

In this survey, we focus primarily on well-studied families of set functions, especially those in the hierarchy of complement-free valuations introduced by \cite{LehmannLN06} in the study of combinatorial auctions (see also~\citep{BlumrosenN06}), while occasionally considering classes that allow for complementarities.

\begin{definition}
Let $U$ be a set of size $n$. 
A set function 
$f: 2^U \rightarrow \reals_{\geq 0}$
is said to be: 
\begin{itemize}
    \item \emph{Additive} if there exist $f_1,\ldots,f_n$
    such that $f(S)=\sum_{i\in S}f_i$ for every set $S \subseteq U$.
    \item \emph{Gross substitutes (GS)} if it is submodular (see below) and it satisfies the following triplet condition: for any set $S \subseteq U$, and any three elements $i,j,k \not\in S$, it holds that 
    \[
    f(i \mid S) + f(\{j,k\} \mid S) \leq \max\left(f(j \mid S) + f(\{i,k\} \mid S), f(k \mid S) + f(\{i,j\} \mid S)\right).
    \vspace{-0.4cm}
    \]
    \item \emph{Submodular} if for any two sets $S \subseteq T \subseteq U$, and any element $j \not\in T$, $f(j \mid T) \leq f(j \mid S)$. This class captures the property of diminishing marginal contribution.
    \item \emph{XOS} if it is a maximum over additive functions. 
    That is, there exists a set of additive functions $f_1,\ldots, f_{\ell}$ such that for every set $S \subseteq U$, 
    $
    f(S)=\max_{i \in [\ell]} \left( f_i(S)\right)
    $.
    \item \emph{Subadditive} if for any two sets $S, T \subseteq U$, it holds that $f(S)+f(T) \geq f(S \cup T)$.
    \item \emph{Supermodular} if for any two sets $S \subseteq T \subseteq U$, and any action $j \not\in T$, $f(j \mid T) \geq f(j \mid S)$.
\end{itemize}
\label{def:classes}
\end{definition}

With the exception of the supermodular class, all of the above families are \emph{complement-free}.  
It is a classical result \citep{LehmannLN06} that these families form a strict hierarchy:
\[
\text{Additive} \;\subset\; \text{GS} \;\subset\; \text{Submodular} \;\subset\; \text{XOS} \;\subset\; \text{Subadditive}.
\]

\subsection{Oracle access to set functions}
\label{sec:oracle}

Since explicitly representing $f$ requires exponential size, it is standard to access the function through oracle queries.  
Two types of queries are commonly used:  
\begin{itemize}
    \item \emph{Value query:} input is a set $S \subseteq U$; output is its value $f(S)$.  
    \item \emph{Demand query:} input is a nonnegative price vector $p = (p_1,\ldots,p_n) \in \mathbb{R}^n_{\ge 0}$; output is a set $S \subseteq U$ that maximizes $f(S) - \sum_{i \in S} p_i$.
\end{itemize}
The collection of all sets maximizing $f(S)-\sum_{i \in S}p_i$ is called the \emph{demand} of $f$ under prices $p$, and is denoted by  $D_f(p)=\argmax_{S \subseteq \actions} \{f(S)-\sum_{i \in S}p_i\}$.
In the context of combinatorial markets, the demand is the collection of the most preferred bundles of goods, given a valuation function $f$ (over bundles of goods) and item prices $p_1,\ldots,p_n$.

Demand queries are known to be stronger than value queries. Indeed, one can simulate value queries using poly-many demand queries, but not the other way around: for many natural classes of set functions, one cannot answer a demand query with poly-many value queries, under standard complexity assumptions~\citep{BlumrosenN09}.

\subsection{Demand oracle for gross substitutes (GS) functions}
\label{sec:demand-gs}

One exception to the hardness of demand queries is the gross substitutes (GS) class, where it is well-known that a demand query can be solved using the greedy procedure described in Algorithm~\ref{alg:gs-demand-query}. This is in fact a characterization of GS functions. 
(\Cref{alg:ultra-demand-query} is a variant of~\Cref{alg:gs-demand-query}, which we will return to in~\Cref{sec:ultra}.)

\begin{theorem}[GS characterization~\citep{PaesLeme17}]
    \label{thm:gs-greedy}
    A function $f$ is GS if and only if, for any price vector $p$, a demand query for $(f,p)$ can be solved using Algorithm \textsf{GreedyGS} (Algorithm~\ref{alg:gs-demand-query}). 
\end{theorem}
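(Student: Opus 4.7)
The plan is to establish both directions of the characterization separately, leveraging the hierarchy Additive $\subset$ GS $\subset$ Submodular from Definition~\ref{def:classes} and the exchange structure of GS functions.

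For the forward direction (GS implies \textsf{GreedyGS} is correct), I would argue inductively on the size of the ground set. Consider a price vector $p$ and let $j^\star$ be the element chosen first by the greedy procedure, i.e.\ $j^\star \in \argmax_{j} (f(j)-p_j)$ among those with positive surplus. The key claim is that there exists some demand set $S \in D_f(p)$ containing $j^\star$. To see this, take any $S \in D_f(p)$ with $j^\star \notin S$; using submodularity and the triplet condition applied to $j^\star$ together with a carefully chosen element of $S$, one can produce an exchange step that replaces some $i \in S$ by $j^\star$ without decreasing $f(S)-p(S)$. Iterating the exchange argument yields a demand set containing $j^\star$. Once $j^\star$ is fixed, I would restrict attention to the contracted function $g(T) := f(T \mid \{j^\star\})$ on the ground set $[n]\setminus\{j^\star\}$, show that $g$ is again GS (a standard closure property, provable directly from the triplet condition and submodularity), and apply induction: greedy on $g$ with the same prices (minus a constant) solves its demand query, which corresponds exactly to the continuation of \textsf{GreedyGS} on the original instance.

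For the reverse direction (\textsf{GreedyGS} correct implies $f$ is GS), I would design diagnostic price vectors that force greedy's output to reveal inequalities on $f$. To establish submodularity, fix $S \subseteq T$ and $j \notin T$; choose prices that make all elements of $T$ strictly cheapest so that greedy first loads $S$, then $T \setminus S$, and finally considers $j$ with a price $p_j$ tuned near the threshold. Correctness of greedy at the state $T$ versus the state $S$ forces $f(j\mid S) \geq f(j\mid T)$, the diminishing-marginals inequality. To establish the triplet condition for $S$ and $i,j,k \notin S$, I would set the prices on $[n]\setminus(S\cup\{i,j,k\})$ to be very low and the prices on $S$ to be effectively free, so that after greedy loads $S$ it is constrained to choose a subset of $\{i,j,k\}$. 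By varying $(p_i,p_j,p_k)$ so that different pairs are ``tied at the margin,'' correctness of greedy across these perturbations translates into the required max-of-sums inequality on $\{f(i\mid S), f(j\mid S), f(k\mid S), f(\{i,j\}\mid S), f(\{i,k\}\mid S), f(\{j,k\}\mid S)\}$.

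The main obstacle, in my view, is the reverse implication: the greedy procedure only exposes the identity of the currently most profitable element, so the three-way inequality encoded in the triplet condition has to be extracted indirectly, by running greedy on multiple correlated price instances and comparing outcomes. A clean execution likely requires a short case analysis on which element greedy selects first among $\{i,j,k\}$ under the perturbed prices, together with the already-established submodularity to control what happens at subsequent steps. The forward direction, by contrast, is the classical exchange argument and should go through once the inductive framework and the preservation of GS under contraction are in place.
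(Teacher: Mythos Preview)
The paper does not supply a proof of Theorem~\ref{thm:gs-greedy}; it is quoted as a known characterization from \citet{PaesLeme17}, so there is no in-paper argument to compare your proposal against.

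On the merits of your sketch: the forward direction follows the standard exchange-and-contract template and is correct in outline, though note that the triplet condition in Definition~\ref{def:classes} speaks only of three elements \emph{outside} $S$, so swapping some $i\in S$ for $j^\star\notin S$ does not fall out of it directly; you are implicitly invoking the equivalence with the M$^\natural$-concave (single-improvement) exchange property, which is itself a lemma requiring proof. For the reverse direction there is a slip---to confine greedy to $S\cup\{i,j,k\}$ you need the prices on the remaining elements to be very \emph{high}, not low, since low prices give those elements large surplus and greedy would pick them first. More substantively, the passage from ``greedy is correct on a few perturbed price vectors over $\{i,j,k\}$'' to the three-way max in the triplet inequality is where the real work lies; your plan names the strategy but not the case analysis, and standard treatments typically route through an intermediate local-exchange characterization rather than attack the triplet inequality head-on.
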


\begin{center}
\begin{minipage}[t]{0.48\linewidth}
\begin{algorithm}[H]
\caption{\textsf{GreedyGS}$(f,p)$}
\label{alg:gs-demand-query}
\begin{algorithmic}[1]
\STATE $S \gets \emptyset$
\FOR{$i = 1$ \TO $n$}
  \STATE \label{st:gs-demand-select}
    choose $x_i \in \argmax_{x \notin S} \{\, f(x \mid S) - p(x) \,\}$
  \IF{$f(x_i \mid S) - p(x_i) \le 0$}
    \STATE \textbf{return} $S$
  \ENDIF
  \STATE $S \gets S \cup \{x_i\}$
\ENDFOR
\STATE \textbf{return} $S$
\end{algorithmic}
\end{algorithm}
\end{minipage}
\hfill
\begin{minipage}[t]{0.48\linewidth}
\begin{algorithm}[H]
\caption{\textsf{GreedyUltra}$(f,p)$}
\label{alg:ultra-demand-query}
\begin{algorithmic}[1]
\STATE $S^* \gets \emptyset$;\; $S_0,S_1,\ldots,S_n \gets \emptyset$
\FOR{$i = 1$ \TO $n$}
    \STATE \label{st:alg-ultra-demand-query-i}
    choose $x_i \in \argmax_{x \notin S_{i-1}} \{\, f(x \mid S_{i-1}) - p(x) \,\}$
    \STATE $S_i \gets S_{i-1} \cup \{x_i\}$
\ENDFOR
\STATE \label{st:alg-ultra-demand-query-final}
choose $S^* \in \argmax_{i \in [n]} \{\, f(S_i) - p(S_i) \,\}$
\STATE \textbf{return} $S^*$
\end{algorithmic}
\end{algorithm}
\end{minipage}
\end{center}

\subsection{Approximation algorithms}
\label{sec:approximation-defs}

Throughout this survey we will be interested in the algorithmic tractability of contract design problems. 
Since many of these problems are computationally hard, it is natural to ask for approximation algorithms. 
We briefly recall the standard definitions here, phrased in terms of maximizing some objective function (such as the principal's expected utility, formally introduced later).

A (randomized) algorithm is said to provide a \emph{$\rho$-approximation} (where our convention 
is
that $\rho \ge 1$) if the (expected) value of the solution it finds is at least $\frac{1}{\rho}\opt$, where $\opt$ denotes the optimal value of the objective.

A \emph{fully polynomial-time approximation scheme (FPTAS)} is an algorithm that provides a multiplicative $(1+\epsilon)$-approximation, in time polynomial in the input size and $1/\epsilon$.  
A \emph{polynomial-time approximation scheme (PTAS)} is the same, except the running time is polynomial in the input size for any fixed $\epsilon$.

\section{Setting 1: Single-agent multi-actions}
\label{sec:model1}

In this section, we present and study the setting introduced by~\cite{DuettingEFK21}: a principal interacts with a single agent who may choose any subset of actions from a given set.
We refer to it throughout as the single-agent multi-action model.
We first describe the model and then present our results, summarized in~\Cref{tab:multi-action}.

\subsection{Model}
\label{sec:model1-model}

In the single-agent, multi-action model, a principal wishes to delegate a project to a single agent. The project has a binary outcome: it either succeeds or fails. We normalize the reward from success to $1$, and the reward from failure is $0$.

The agent has access to a collection of $n$ possible actions, $\actions = [n]$, and may select any subset of them. The performance of a chosen set $S \subseteq \actions$ is described by a success probability function $f:2^{\actions}\rightarrow [0,1]$, which assigns to each action set $S$ a probability $f(S)$ that the project succeeds. Because success yields a normalized reward of $1$, $f(S)$ also equals the expected reward of $S$. For this reason, we will often refer to $f$ as the (expected) reward function.

Each action $i \in [n]$ incurs a nonnegative cost $c_i \geq 0$. By default, we assume the cost function is additive, so that for any $S \subseteq \actions$, $c(S) := \sum_{i \in S} c_i$.

A linear contract specifies a payment $\alpha \in [0,1]$ from the principal to the agent in case of success, and $0$ otherwise. As we show below, for the binary-outcome setting, we can assume without loss of generality that the principal uses a linear contract. 
Given a contract $\alpha$, the agent's \emph{best response} is a set $S$ that maximizes the agent's expected utility, defined as the expected payment minus the incurred cost, i.e., 
$$
U_A(S \mid \alpha) := \alpha f(S)-c(S).
$$
When multiple sets maximize the agent's utility, the agent is assumed to break ties in favor of ones with higher reward $f(S)$ (this convention will become clear shortly). 

The principal's expected utility under contract $\alpha$ is 
$$
U_P(\alpha) := (1-\alpha)f(S_{\alpha}),
$$ 
where $S_{\alpha}$ denotes the agent's best response. In other words, the principal's utility equals the expected reward minus the expected payment for the agent's chosen set.
Note that the agent's tie-breaking rule in favor of higher $f$ values effectively breaks ties in favor of the principal's utility; this is a standard assumption in the literature.

A contract is said to be \emph{optimal} if it maximizes the principal's expected utility. 
While general contracts may assign arbitrary payments to outcomes, in the binary-outcome setting the optimal contract is always \emph{linear} --- paying $\alpha$ upon success and $0$ otherwise. Hence, restricting attention to linear contracts is without loss of generality~\citep{DuettingEFK21}.

The \emph{social welfare} of an action set $S$ is the difference between its reward and cost, and is given by $f(S)-\cost(S)$. 

\paragraph{Connection between an agent's best response and a demand query.}
An agent's best-response to a contract $\alpha$ is any set $S$ that maximizes $\alpha f(S) - \sum_{i \in S}\cost_i$. 
Clearly, this is equivalent to maximizing $f(S)- \sum_{i\in S} p_i$, with prices $p_i= c_i/\alpha$. 
Thus, computing a best response under $\alpha$ is exactly a \emph{demand query} at prices $p_i=c_i/\alpha$ (see Section~\ref{sec:oracle}).
A minor subtlety is that we assume the agent breaks ties in favor of sets with larger $f$ values. Thus, for the equivalence to hold, we adopt the same tie-breaking convention for demand queries.

\subsection{The geometry of linear contracts}
\label{sec:model1-linear}

In this section, we present a geometric representation of the agent's and principal's utilities. This perspective, introduced by~\cite{DuttingRT19} for the simple (non-combinatorial) principal-agent model, has proven especially useful in the combinatorial actions model of~\cite{DuettingEFK21}.

We begin with the agent’s perspective. Raising $\alpha$ increases the weight on the reward term relative to cost, so the agent’s utility is driven more by $f$ and less by $\cost$. Thus, as $\alpha$ grows, the agent is pushed toward action sets with higher reward, even when they come with higher cost. Formally, for each action set $\set$ we consider the utility curve $U_A(\set \mid \alpha)=\alpha f(\set)-\cost(\set)$ as a function of $\alpha$ (see Figure~\ref{fig:multi-action} (left)).
Thus, to determine which set is induced by a linear contract with parameter $\alpha$, pick the curve
that attains the maximum utility at that $\alpha$.
Equivalently, the agent’s best response traces the \emph{upper envelope} of this family of lines
(shown in bold in Figure~\ref{fig:multi-action} (left)).

If we list, from left to right, the sets that appear on the upper envelope (these are exactly the sets that can be induced by some~$\alpha$),
one can show that, with appropriate tie-breaking,  they are ordered by increasing reward $f(\set)$ (the slope) and, simultaneously, by increasing cost $\cost(\set)$ (negative height at $\alpha = 0$).

\begin{lemma}[Monotonicity~\citep{DuttingRT19,DuettingEFK21}]
\label{lem:utilityConvex}
    Let $S_\alpha, S_\beta \subseteq \actions$ be the agent's respective best responses for contracts $0 \le \alpha < \beta \le 1$, with ties broken in favor of the principal.
    Then, (1) $f(S_\alpha) < f(S_\beta)$, and (2) $c(S_\alpha) < c(S_\beta)$.
\end{lemma}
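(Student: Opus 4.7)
The plan is to prove both strict monotonicity claims simultaneously by chaining the two best-response optimality conditions, which lie at the heart of the geometry of linear contracts. Because $S_\alpha$ is a best response at $\alpha$, we have $\alpha f(S_\alpha) - c(S_\alpha) \ge \alpha f(S_\beta) - c(S_\beta)$, which rearranges to
$c(S_\beta) - c(S_\alpha) \ge \alpha \bigl(f(S_\beta) - f(S_\alpha)\bigr)$.
Symmetrically, $S_\beta$ being a best response at $\beta$ gives $c(S_\beta) - c(S_\alpha) \le \beta \bigl(f(S_\beta) - f(S_\alpha)\bigr)$. Combining these two bounds,
$$
\alpha \bigl(f(S_\beta) - f(S_\alpha)\bigr) \;\le\; c(S_\beta) - c(S_\alpha) \;\le\; \beta \bigl(f(S_\beta) - f(S_\alpha)\bigr).
$$
Since $\alpha < \beta$, the outer inequality forces $f(S_\beta) \ge f(S_\alpha)$, and substituting back into either bound yields $c(S_\beta) \ge c(S_\alpha)$. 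Thus the weak version of both claims follows immediately from a two-line algebraic argument, with no appeal to the structure of $f$ (no submodularity, GS, or other combinatorial assumption is needed at this stage).

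To upgrade each weak inequality to a strict one, I would invoke the tie-breaking convention. Suppose for contradiction that $f(S_\alpha) = f(S_\beta)$. Then the chain above collapses into equalities, forcing $c(S_\alpha) = c(S_\beta)$ as well, and in particular $S_\beta$ attains the same agent utility as $S_\alpha$ under contract $\alpha$. Hence $S_\beta$ is itself a best response at $\alpha$ with reward equal to $f(S_\alpha)$. Since ties are broken in favor of the principal, i.e., toward the highest available $f$, a consistent selection rule applied at both $\alpha$ and $\beta$ must choose the same canonical set, contradicting the assumption that $S_\alpha$ and $S_\beta$ are the distinct best responses on two separate pieces of the upper envelope. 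A symmetric argument starting from $c(S_\alpha) = c(S_\beta)$ pushes the collapse the other way.

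The main obstacle is precisely this strictness step: the weak monotonicity is essentially free, but the jump to strict inequalities depends on how one interprets the tie-breaking rule. The cleanest way to handle it is to identify $S_\alpha$ with its $(f, c)$-profile on the upper envelope and note that once the two profiles coincide, the sets are interchangeable from the standpoint of both parties, so the only substantively new best response at a strictly larger $\alpha$ must sit on a strictly steeper line with a strictly lower intercept $-c(\cdot)$. This aligns with the geometric picture sketched before the lemma, where distinct segments of the upper envelope are naturally ordered by slope and by intercept in the same direction.
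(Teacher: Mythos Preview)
The paper does not supply its own proof of this lemma; it is stated with attribution to \citep{DuttingRT19,DuettingEFK21} and used as a black box. Your two-inequality argument for the weak monotonicity $f(S_\alpha)\le f(S_\beta)$ and $c(S_\alpha)\le c(S_\beta)$ is correct and is the standard one.

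Your strictness step, however, has a gap. You derive a contradiction with ``the assumption that $S_\alpha$ and $S_\beta$ are the distinct best responses on two separate pieces of the upper envelope,'' but the lemma carries no such assumption. Indeed, as literally stated the lemma is not quite true: if $\alpha$ and $\beta$ both lie in the interior of the same segment of the upper envelope then $S_\alpha=S_\beta$ and both strict inequalities fail; and even when the two chosen sets are distinct but happen to share the same $(f,c)$-pair, the tie-breaking rule (favor larger $f$) cannot separate them. You correctly sense this difficulty in your final paragraph, and your proposed fix---identify best responses with their $(f,c)$-profiles and observe that distinct segments of the envelope necessarily carry distinct profiles---is exactly the right reading. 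In the paper's only application of the lemma (the recursion behind \Cref{prop:optContractSuff}), it is invoked only after the two best responses have been verified to differ, and what is actually needed there is that their $f$- and $c$-values differ; your argument does deliver that once one works at the level of $(f,c)$-profiles.
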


\newcommand{\sizeF}{\small}
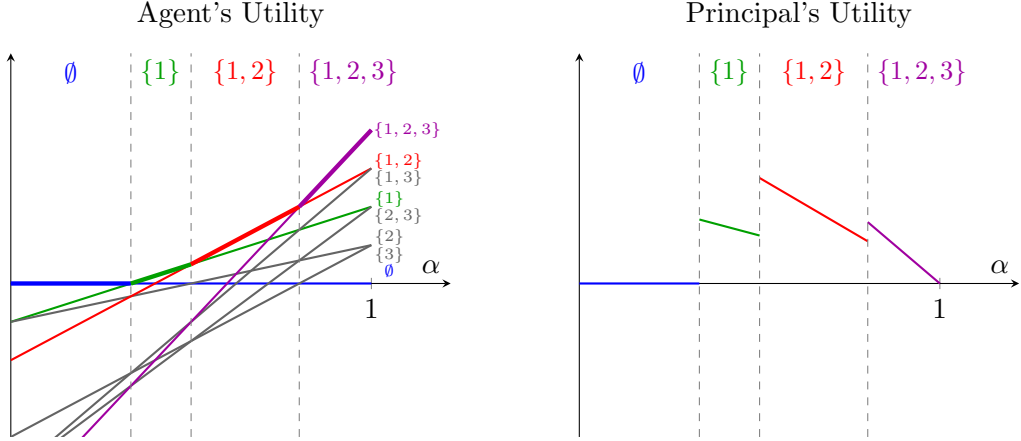
\begin{figure}
\centering
\begin{subfigure}[b]{0.45\textwidth}
\centering
\begin{tikzpicture}
\begin{axis}[
    width=\linewidth, height=0.9\linewidth,
    xmin=0, xmax=1.22,
    ymin=-0.4, ymax=0.6,
    axis x line=bottom,
    axis y line=left,
    xlabel={$\alpha$},
    xlabel style={at={(axis description cs:1.02,0)}, anchor=west},
    xtick={1},
    ytick=\empty,
    title={Agent's Utility},
    axis x line=middle,
]


\addplot[SetA, thick, domain=0:1] {0};          
\node[text=SetA] at (1.05,0.035) {\tiny $\emptyset$};
\addplot[SetB, thick, domain=0:1] {0.3*x - 0.1};
\node[text=SetB] at (1.05,0.22) {\tiny $\{1\}$};
\addplot[SetE, thick, domain=0:1] {0.2*x - 0.1};
\node[text=SetE] at (1.05,0.12) {\tiny $\{2\}$};
\addplot[SetE, thick, domain=0:1] {0.5*x - 0.4};
\node[text=SetE] at (1.05,0.075) {\tiny $\{3\}$};
\addplot[SetC, thick, domain=0:1] {0.5*x - 0.2};
\node[text=SetC] at (1.075,0.32) {\tiny $\{1,2\}$};
\addplot[SetE, thick, domain=0:1] {0.8*x - 0.5};
\node[text=SetE] at (1.075,0.275) {\tiny $\{1,3\}$};
\addplot[SetE, thick, domain=0:1] {0.7*x - 0.5};
\node[text=SetE] at (1.075,0.175) {\tiny $\{2,3\}$};
\addplot[SetD, thick, domain=0:1] {1.0*x - 0.6};
\node[text=SetD] at (1.1,0.4) {\tiny $\{1,2,3\}$};

\draw[dashed, gray] (0.333,-0.4) -- (0.333,0.6);
\node[text=SetA] at (0.166,0.55) {\sizeF $\emptyset$};
\draw[dashed, gray] (0.5,-0.4) -- (0.5,0.6);
\node[text=SetB] at (0.415,0.55) {\sizeF $\{1\}$};
\draw[dashed, gray] (0.8,-0.4) -- (0.8,0.6);
\node[text=SetC] at (0.65,0.55) {\sizeF $\{1,2\}$};
\node[text=SetD] at (0.95,0.55) {\sizeF $\{1,2,3\}$};

\newcommand{\thickness}{1.65pt}
\addplot[SetA, line width=\thickness, domain=0:0.333] {0};
\addplot[SetB, line width=\thickness, domain=0.333:0.5] {0.3*x - 0.1};
\addplot[SetC, line width=\thickness, domain=0.5:0.8] {0.5*x - 0.2};
\addplot[SetD, line width=\thickness, domain=0.8:1] {1.0*x - 0.6};

\end{axis}
\end{tikzpicture}
\end{subfigure}
\begin{subfigure}[b]{0.45\textwidth}
\centering
\begin{tikzpicture}
\begin{axis}[
    width=\linewidth, height=0.9\linewidth,
    xmin=0, xmax=1.22,
    ymin=-0.4, ymax=0.6,
    axis x line=bottom,
    axis y line=left,
    xlabel={$\alpha$},
    xlabel style={at={(axis description cs:1.02,0)}, anchor=west},
    xtick={1},
    ytick=\empty,
    title={Principal's Utility},
    axis x line=middle,
]

\addplot[SetA, thick, domain=0:0.333] {0};
\addplot[SetB, thick, domain=0.333:0.5] {(1-x)*0.25};
\addplot[SetC, thick, domain=0.5:0.8] {(1-x)*0.55};
\addplot[SetD, thick, domain=0.8:1] {(1-x)*0.8};

\draw[dashed, gray] (0.333,-0.4) -- (0.333,0.6);
\node[text=SetA] at (0.166,0.55) {\sizeF $\emptyset$};
\draw[dashed, gray] (0.5,-0.4) -- (0.5,0.6);
\node[text=SetB] at (0.415,0.55) {\sizeF $\{1\}$};
\draw[dashed, gray] (0.8,-0.4) -- (0.8,0.6);
\node[text=SetC] at (0.65,0.55) {\sizeF $\{1,2\}$};
\node[text=SetD] at (0.95,0.55) {\sizeF $\{1,2,3\}$};

\end{axis}
\end{tikzpicture}
\end{subfigure}

\caption{The agent's and principal's utilities as a function of $\alpha$ for an additive reward function, with $f({1}) = 0.3$, $f({2}) = 0.2$, and $f({3}) = 0.5$, and action costs $c1 = c2 = 0.1, c3 = 0.4$.
Left: Upper envelope of the agent's utility. Right: The principal's utility for the agent's best response.
}
\label{fig:multi-action}
\end{figure}

\paragraph{Critical values.} 
A key role is played by the values of $\alpha$ where segments of the agent’s upper envelope intersect.
These intersection points are referred to as \emph{critical $\alpha$’s}, also called \emph{critical values} or \emph{breakpoints}. 
Interestingly, critical values are not only central in the computational analysis of contracts, but also in learning settings with unknown types: the number of critical values determines the pseudo-dimension of the contract class, and thereby its sample complexity (see~\Cref{sec:model1-sample}).

For a set $S$ on the upper envelope, the intersection with the preceding set $S'$ (or with the $x$-axis, for the first set) occurs at the value of $\alpha$ satisfying $U_A(S \mid \alpha) = U_A(S' \mid \alpha)$, that is, 
$
\alpha = \frac{f(S)-f(S')}{\cost(S)-\cost(S')}
$. 
At this $\alpha$ the agent is indifferent between $S$ and $S'$, and the choice is determined by the tie-breaking rule. Under the canonical tie-breaking rule (favoring the set that benefits the principal), the agent selects $S$ over $S'$.

Thus, the interval $[0,1]$ of feasible $\alpha$’s is partitioned into at most $2^n$ subintervals, one for each incentivizable set $S$. Each interval consists of all $\alpha$ values for which the agent chooses $S$.
(For notational convenience, we let all intervals be half-open. The last interval is actually closed.)

From the principal's perspective (see Figure~\ref{fig:multi-action}(right)), the goal is to reduce the agent's share $\alpha$ as much as possible while still incentivizing a set with high reward. Formally, for $\alpha \in [0,1]$, the principal’s utility is $(1-\alpha)f(\set_{\alpha})$, where $\set_{\alpha}$ is the agent's best response to $\alpha$.  
Within an interval $[\alpha,\beta)$ where the agent selects $S$, this utility traces a line that begins at height $(1-\alpha) f(S)$ and decreases with slope $-f(S)$. Consequently, the principal's best strategy for inducing $S$ is to choose the left endpoint $\alpha$ of its interval.

The next example illustrates this with a 3-action instance and additive reward function $f$.

\begin{example}[3-action instance with additive reward\
\citep{DuttingFT24-survey}]
\label{ex:additive} 
There are three actions $\actions=\{1,2,3\}$. The success probability function $f$ is additive, with $f(\{1\}) = 0.3, f(\{2\}) = 0.2$, and $f(\{3\})=0.5$. The action costs are $c_1=c_2=0.1$, and $c_3=0.4$.
Consider, for example, the contract $\alpha=0.5$. 
The agent's utility for taking action $1$ is $\alpha f(\{1\})-c_1=0.5\cdot 0.3-0.1=0.05$, for action $2$ it is $\alpha f(\{2\})-c_2=0.5\cdot 0.2-0.1=0$, and for action $3$ it is $\alpha f(\{3\})-c_3=0.5\cdot 0.5-0.4 = -0.15$. 
Therefore, among all singletons, 
action $1$ 
is best. However, the agent may be better off selecting more than a single action. The agent's utility for the set $\{1,2\}$ is $\alpha f(\{1,2\})-(c_1+c_2)=0.5\cdot 0.5-0.2=0.05$, for the set $\{1,3\}$ it is $\alpha f(\{1,3\})-(c_1+c_3)=0.5\cdot 0.8-0.5=-0.1$, for the set $\{2,3\}$ it is 
$\alpha f(\{2,3\})-(c_2+c_3)=0.5\cdot 0.7-0.5=-0.15$,
and for the set $\{1,2,3\}$ it is $\alpha f(\{1,2,3\})-(c_1+c_2+c_3)=0.5\cdot 1-0.6=-0.1$. 
Therefore at $\alpha = 0.5$ the agent is indifferent between $\{1\}$ and $\{1,2\}$ and breaks the tie in favor of the set $\{1,2\}$ (this point is the intersection of the green and red curves in Figure~\ref{fig:multi-action} (left)).
Below we provide more details about how the agent's best response changes as a function of $\alpha$, and how that affects the principal's choice of $\alpha$.
\end{example}

Figure~\ref{fig:multi-action}(left) illustrates the upper envelope for Example~\ref{ex:additive}.
Examining the diagram shows that the agent's best response evolves with $\alpha$: for small $\alpha$ the agent takes no action, then switches to action~1, then to the set $\{1,2\}$, and eventually to $\{1,2,3\}$ once $\alpha$ is large enough.
This pattern is not specific to this example. In any additive setting, action $i$ becomes part of the agent's best response precisely when $\alpha \geq c_i / f(\{i\})$, regardless of the other actions. Equivalently, this is the value of $\alpha$ that satisfies $\alpha f(S \cup \{i\}) - c(S \cup \{i\}) = \alpha f(S) - c(S)$.
As a result, there can be at most $n$ such critical $\alpha$’s.
The picture becomes considerably richer once we move beyond additive rewards.
As we will see, the number of these critical values play a central role in the optimal contract problem.

\subsection{Enumerating all critical values}
\label{sec:model1-ES}

\begin{figure}
\centering
\begin{minipage}{0.4\linewidth}
\begin{algorithm}[H]
\caption{\textsf{Enumerate CV} \protect\( (\lb,\rb) \protect\)}
\label{alg:enumerate_CV}
\begin{algorithmic}[1]
    \IF{$S_\rb = S_\lb$}
        \STATE \textbf{return} $\emptyset$
    \ENDIF
    \STATE $\intersection \gets \frac{c(S_\rb) - c(S_\lb)}{f(S_\rb) - f(S_\lb)}$
    \IF{$S_\intersection = S_\lb$}
        \STATE \textbf{return} $\{\intersection\}$
    \ENDIF
    \STATE $C_1 \gets \texttt{CV}(\lb, \intersection)$
    \STATE $C_2 \gets \texttt{CV}( \intersection, \rb)$
    \STATE \textbf{return} $C_1 \cup C_2$
\end{algorithmic}
\end{algorithm}
\end{minipage}
\hspace{4pt}
\begin{minipage}{0.4\linewidth}
\resizebox{.85\linewidth}{!}{
\begin{tikzpicture}
\begin{axis}[
width=\linewidth, height=0.9\linewidth,
ymin=0, ymax=0.25,
axis x line=bottom,
axis y line=left,
xlabel={$\alpha$},
xlabel style={at={(axis description cs:1.02,0)}, anchor=west},
ytick=\empty,
xtick={0.3,0.5,0.625,0.75,0.9},
xticklabels={$\lb$, , $\intersection$, , $\rb$},
title={Agent's Utility},
ticklabel style={font=\footnotesize}, 
every axis label/.append style={font=\footnotesize}, 
]

\addplot[domain=0.3:0.5, color=SetA, line width=1.5pt]{0.2*x-0.05};
\addplot[dotted, domain=0.3:0.9, color=SetA,line width=1pt]{0.2*x-0.05};
\addplot[dotted, domain=0.3:0.5, color=SetB,line width=1pt]{0.4*x-0.15};
\addplot[domain=0.5:0.75, color=SetB,line width=1.5pt]{0.4*x-0.15};
\addplot[dotted, domain=0.75:0.9, color=SetB,line width=1pt]{0.4*x-0.15};
\addplot[dotted, domain=0.3:0.75, color=SetC,line width=1pt]{0.6*x-0.3};
\addplot[domain=0.75:0.9, color=SetC,line width=1.5pt]{0.6*x-0.3};

\node[] at (0.37,0.06) {$S_\lb$};
\node[] at (0.59,0.13) {$S_\intersection$};
\node[] at (0.8,0.23) {$S_\rb$};

\draw[dashed, gray] (0.5,0) -- (0.5,0.25);
\draw[dashed, gray] (0.75,0) -- (0.75,0.25);

\end{axis}
\end{tikzpicture}
}
\end{minipage}
\caption{A recursive algorithm for enumerating all critical values in a given interval.}
\label{fig:cv-alg}
\end{figure}

Let $f:2^{\actions} \to [0,1]$ and $c:2^{\actions} \to \reals_{\ge 0}$ denote the reward and cost set functions, respectively.
A \emph{best-response oracle} takes as input a contract $\alpha$ and returns a set of actions that maximizes the agent’s utility, $S \in \argmax_{T \subseteq \actions} \{\alpha f(T) - c(T)\}$, breaking ties in favor of higher $f$ values.

We next present an algorithm that, given access to a best-response oracle, enumerates all critical values in a specified interval, running in time linear in their number. This algorithm has appeared independently in several contexts. It was first introduced by 
~\cite{eisner1976mathematical} and later applied by~\cite{gusfield1980sensitivity} in sensitivity analysis, where it is often referred to as the \emph{Eisner–Severance} technique. In the setting of combinatorial contracts, it is used in~\citep{DuettingFG23} to establish the result stated below (see Algorithm~\ref{alg:enumerate_CV} in Figure~\ref{fig:cv-alg}).

\begin{proposition}[enumerating all critical values~\citep{DuettingFG23}]
\label{prop:optContractSuff}
    For any cost function $c:2^{\actions} \to \reals_{\ge 0}$ and reward function $f:2^{\actions} \to [0,1]$, given access to a best-response oracle, there exists an algorithm that enumerates all critical values in time linear in the number of critical values.
\end{proposition}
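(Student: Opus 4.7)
The plan is to analyze the recursive procedure \textsf{Enumerate CV} depicted in Figure~\ref{fig:cv-alg}. The initial call is on $[\lb,\rb]=[0,1]$, with $S_0$ and $S_1$ obtained by two preliminary best-response queries. At each recursive call on $[\lb,\rb]$ the algorithm issues one additional query at the line-intersection point $\intersection=(\cost(S_{\rb})-\cost(S_{\lb}))/(f(S_{\rb})-f(S_{\lb}))$ to decide whether $S_{\lb}$ and $S_{\rb}$ are adjacent on the upper envelope, or whether further critical values lie strictly between $\lb$ and $\rb$. The proof splits into two parts: correctness of the enumeration and a linear bound on the number of recursive calls.

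For correctness I would induct on the number $k$ of critical values lying strictly inside $(\lb,\rb)$. By Lemma~\ref{lem:utilityConvex} the sets on the envelope in this interval form a chain $S_{\lb}=T_0,T_1,\ldots,T_k=S_{\rb}$ strictly increasing in both $f$ and $\cost$, with breakpoints $\lb<\alpha_1<\cdots<\alpha_k<\rb$. When $k=0$ we have $S_{\lb}=S_{\rb}$ and the first conditional returns $\emptyset$. When $k=1$ the breakpoint $\alpha_1$ is exactly $\intersection$ and no other set strictly dominates $S_{\lb},S_{\rb}$ at $\intersection$, so the second conditional correctly returns $\{\intersection\}$. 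When $k\ge 2$, a short convex-hull computation shows that the ratios $\alpha_{j+1}=(\cost(T_{j+1})-\cost(T_j))/(f(T_{j+1})-f(T_j))$ are strictly increasing in $j$, and $\intersection$ is a weighted average of them; hence $\alpha_1<\intersection<\alpha_k$. Thus the envelope at $\intersection$ is attained by some intermediate $T_j$ with $1\le j\le k-1$ whose utility at $\intersection$ strictly exceeds the common value of $S_{\lb}$ and $S_{\rb}$, so the oracle returns $S_{\intersection}=T_j\ne S_{\lb}$ and the algorithm recurses. The sub-intervals $(\lb,\intersection)$ and $(\intersection,\rb)$ inherit the $j$ breakpoints $\alpha_1,\ldots,\alpha_j$ and the $k-j$ breakpoints $\alpha_{j+1},\ldots,\alpha_k$ respectively, so the inductive hypothesis guarantees that the two recursive calls return exactly the required critical values.

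For the running-time bound I would analyze the recursion tree. Every internal node spawns two children whose sub-intervals contain $k_1,k_2\ge 1$ critical values with $k_1+k_2=k$, so the recurrence $T(k)\le 1+T(k_1)+T(k_2)$ with $T(0)=T(1)=1$ solves to $T(k)\le 2k-1$. Assuming $S_{\lb}$ and $S_{\rb}$ are inherited from the parent call rather than re-queried, each call performs one new best-response query and $O(1)$ arithmetic operations, so both the oracle complexity and the overall running time are $O(k)$, matching the stated linear bound.

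The main technical subtlety is the base-case test ``$S_{\intersection}=S_{\lb}$''. At $\intersection$ the sets $S_{\lb}$ and $S_{\rb}$ tie in utility, so which of them (if either) the oracle returns is sensitive to the tie-breaking convention used for demand queries, and with the paper's default rule of favoring higher $f$ the literal equality $S_{\intersection}=S_{\lb}$ need not hold when $k=1$. A clean proof should therefore either specify a tie-breaking convention under which the test is correct, or, more robustly, replace the equality check by the utility inequality $\intersection\, f(S_{\intersection})-\cost(S_{\intersection})\le \intersection\, f(S_{\lb})-\cost(S_{\lb})$, which unambiguously detects the case $k\le 1$ regardless of how ties are broken. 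With this adjustment the induction above goes through without further complication.
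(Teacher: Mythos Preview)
Your approach is essentially the same as the paper's: both analyze Algorithm~\ref{alg:enumerate_CV} by induction on the number of critical values in the current interval, with the same case split ($k=0$, $k=1$, $k\ge 2$) and the same appeal to Lemma~\ref{lem:utilityConvex} to guarantee that the recursion makes progress. Your write-up is in fact more complete than the paper's sketch in two respects: you give an explicit recurrence $T(k)\le 1+T(k_1)+T(k_2)$ that pins down the linear running time, and you correctly flag the tie-breaking subtlety in the test ``$S_{\intersection}=S_{\lb}$'' (under the paper's convention of breaking ties toward higher $f$, the oracle at $\intersection$ returns $S_{\rb}$ when $k=1$, so the literal equality check fails); your suggested utility-based replacement is a clean fix that the paper's sketch does not supply.
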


\begin{proof}[Proof sketch]
\Cref{alg:enumerate_CV} is recursive and its correctness follows from an inductive argument over the number of critical values.
Consider a segment $(\lb,\rb]\subseteq[0,1]$.
It is easy to verify that a segment has no critical values if and only if the agent's best response to contracts $\lb$ and $\rb$ is identical, i.e., $S_\lb =S_\rb$. This base case is described in Line 1 of \Cref{alg:enumerate_CV}.

Otherwise, let $\intersection = \frac{c(S_\rb)-c(S_\lb)}{f(S_\beta)-f(S_\lb)}$ be the intersection between the linear functions $U_A( S_\lb \mid \alpha)$ and $U_A(S_\rb \mid \alpha)$. 
If there is a single critical value in $(\lb,\rb]$, the agent's best response changes from $S_\lb$ to $S_\rb$ and thus the critical value is $\intersection$, as returned in Line 6.
Lastly, if there are two or more critical values in the segment. By the convexity and monotonicity of the agent's utility (\Cref{lem:utilityConvex}), the agent's best response for the contract $\intersection$ satisfies $f(S_\lb) < f(S_\intersection)$ and $c(S_\lb) < c(S_\intersection)$, which implies that there is at least one critical value in the segment $(\lb,\intersection]$.
Thus, applying the algorithm recursively for the two sub-segments $(\lb,\intersection]$ and $(\intersection,\rb]$, would yield, by the induction hypothesis, all critical values in $(\lb,\rb]$.
\end{proof}

As an implication of~\Cref{prop:optContractSuff}, we get the following theorem:

\begin{theorem}[Optimal contract algorithm~\citep{DuettingEFK21,DuettingFG23}]
\label{thm:alg-2conds}
In single-agent multi-action settings, any family of reward functions satisfying the following two conditions admits a polynomial-time algorithm for computing the optimal contract, with value queries:
\begin{enumerate}
    \item A polynomial-time algorithm for demand queries, and
    \item A polynomial bound on the number of critical values.
\end{enumerate}
\end{theorem}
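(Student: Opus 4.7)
The plan is to combine the enumeration procedure from~\Cref{prop:optContractSuff} with a brute-force search over the polynomially many candidate contracts it produces. Concretely, I would proceed as follows.

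First, I would use the equivalence noted in Section~\ref{sec:model1-model}: an agent's best response to a contract $\alpha>0$ is exactly a demand set $D_f(p)$ at the scaled prices $p_i=c_i/\alpha$, with matching tie-breaking in favor of higher $f$ values. Thus condition~(1), which supplies a polynomial-time demand-query algorithm, immediately yields a polynomial-time best-response oracle. I would also handle $\alpha=0$ as the trivial base case where the best response is $\emptyset$.

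Next, I would invoke Algorithm~\ref{alg:enumerate_CV} from~\Cref{prop:optContractSuff} on the interval $[0,1]$ using this best-response oracle. By~\Cref{prop:optContractSuff}, this enumerates all critical values in time linear in their number, and each call to the oracle runs in polynomial time by condition~(1). Because condition~(2) bounds the number of critical values polynomially, the entire enumeration terminates in polynomial time and yields the partition of $[0,1]$ into intervals on each of which the agent's best response is constant.

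Finally, I would exploit the geometric observation from Section~\ref{sec:model1-linear}: within an interval $[\alpha,\beta)$ where the agent plays some fixed $S_\alpha$, the principal's utility $(1-\alpha)f(S_\alpha)$ is strictly decreasing in $\alpha$, so it is maximized at the left endpoint. Hence an optimal contract can be found among the critical values already enumerated. For each such $\alpha$ I would compute the principal's utility using one value query for $f(S_\alpha)$ (the set $S_\alpha$ is returned by the best-response oracle during enumeration), and output the $\alpha$ attaining the maximum.

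The only subtlety worth flagging is the tie-breaking at critical points: the enumeration must return, at each breakpoint, the set with larger $f$ value, consistent with the convention in Section~\ref{sec:model1-model}. This is inherited directly from the demand-oracle tie-breaking and is what makes~\Cref{lem:utilityConvex} applicable, ensuring that evaluating the principal's utility at the left endpoints of the intervals suffices to recover the optimum. Beyond this bookkeeping, the argument is a direct composition of~\Cref{prop:optContractSuff} with the monotonicity of the principal's utility within a constant-response interval, and I do not expect a substantive obstacle.
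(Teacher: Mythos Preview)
Your proposal is correct and follows essentially the same approach as the paper: enumerate the critical values via~\Cref{prop:optContractSuff} (using the demand oracle from condition~(1) as a best-response oracle), then evaluate the principal's utility at each and return the maximizer, relying on the fact that the optimum must occur at a breakpoint. The paper's own proof is a terse two-sentence version of exactly this; your additional care about tie-breaking and the $\alpha=0$ case is sound bookkeeping but not a departure in method.
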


\begin{proof}
Consider the following algorithm: for each critical value $\alpha$, compute the agent’s best response $S_{\alpha}$, and select an $\alpha$ that maximizes the principal’s expected utility $(1-\alpha)f(S_{\alpha})$.
Since the number of critical values is polynomial and each best-response query can be computed in polynomia time, the algorithm runs in polynomial time. Moreover, because the optimal contract must occur at a critical value, the algorithm indeed outputs an optimal contract.
\end{proof}

For instance, in the case of additive rewards, the two conditions in~\Cref{thm:alg-2conds} are satisfied, and hence the optimal contract can be computed efficiently. The situation becomes more challenging, however, for more complex reward functions $f$, as we explain below.

\subsection{Poly-time algorithm for gross substitutes (GS) rewards}
\label{sec:model1-GS}

Gross substitutes (GS) functions, defined in Section~\ref{sec:model}, form a strict subclass of submodular functions and encompass several interesting classes, such as additive and unit-demand. This class plays a major role both in economics, where it is the frontier for the existence of market equilibrium \citep{kelso1982job,gul1999walrasian}, and in computer science, where it admits a poly-time algorithm for social welfare maximization in combinatorial markets \citep{NisanSegal06}. In what follows, we show that GS functions likewise admit a polynomial-time algorithm for computing an optimal contract using value queries.


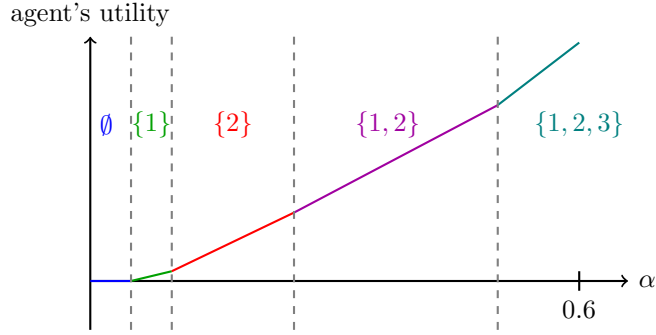
\begin{figure}
{\centering
\newcommand{\ymin}{-0.5}
\begin{tikzpicture}[scale=1.3]
    \def\z{2}
    \def\x{0.6}
    \draw[thick,->] (0,\ymin) -- (0,2.5) node[above] {\small agent's utility};
    \draw[thick,->] (0,0) -- (5.5,0) node[right] {\small $\alpha$};
    \draw[thick,-] (5,0.1) -- (5,-0.1) node[below] {\small {\x}}; 
    \draw[-,SetA,thick] (0,0) -- (0.05*5/\x,0);
    \draw[-,SetB,thick] (0.05*5/\x,0.05*1*\z-0.05*\z) -- (0.1*5/\x,0.1*1*\z-0.05*\z);
    \draw[-,SetC,thick] (0.1*5/\x,0.1*2*\z-0.15*\z) -- (0.25*5/\x,0.25*2*\z-0.15*\z);
    \draw[-,SetD,thick] (0.25*5/\x,0.25*2.2*\z-0.2*\z) -- (0.5*5/\x,0.5*2.2*\z-0.2*\z);
    \draw[-,SetF,thick] (0.5*5/\x,0.5*3.2*\z-0.7*\z) -- (\x*5/\x,\x*3.2*\z-0.7*\z);
    \draw[gray,thick,dashed] (0.05*5/\x,\ymin) -- (0.05*5/\x,2.5);
    \draw[gray,thick,dashed] (0.1*5/\x,\ymin) -- (0.1*5/\x,2.5);
    \draw[gray,thick,dashed] (0.25*5/\x,\ymin) -- (0.25*5/\x,2.5);
    \draw[gray,thick,dashed] (0.5*5/\x,\ymin) -- (0.5*5/\x,2.5);
    \node[SetA] at (0.02*5/\x,1.6) {\small $\emptyset$};
    \node[SetB] at (0.075*5/\x,1.6) {\small $\{1\}$};
    \node[SetC] at (0.175*5/\x,1.6) {\small $\{2\}$};
    \node[SetD] at (0.365*5/\x,1.6) {\small $\{1,2\}$};
    \node[SetF] at (0.6*5/\x,1.6) {\small $\{1,2,3\}$};
\end{tikzpicture}
\caption{Upper envelope for Example~\ref{ex:gs}, with GS $f$}
\label{fig:gs}
}
\end{figure}

By~\Cref{thm:alg-2conds}, it suffices to establish two conditions: (1) a polynomial-time algorithm for computing demand queries, and (2) a polynomial bound on the number of critical values. The first condition holds for GS functions, as demand queries can be solved by a simple greedy algorithm (see~\Cref{sec:demand-gs}).
It remains to bound the number of critical values a GS function can admit. To build intuition for this problem, consider the simple GS instance with three actions described in the following example.

\begin{example}[3-action instance with GS reward] 
\label{ex:gs} 
There are three actions $\{1,2,3\}$.
The success probability function $f$ is as follows:
	$f(\emptyset)=0,~f(\{1\})=0.25,~f(\{2\}) = 0.5, f(\{3\}) = 0.25, f(\{1,2\})= 0.55,~f(\{1,3\}) = 0.5, f(\{2,3\}) = 0.75,$ and $f(\{1,2,3\}) = 0.8$.
The action costs are $c_1 =0.0125, c_2 =0.0375,$ and $c_3=0.125$.
Consider, for example, the contract $\alpha=0.5$ (this point is the intersection of the violet and teal curves in Figure~\ref{fig:gs}).
Given this contract, the agent's utility is maximized by set $\{1,2\}$ and set $\{1,2,3\}$.
Since the set $\{1,2,3\}$ yields a higher principal utility, 
the agent 
breaks the tie in favor of
this set. 
Below we provide more details about the transitions in the agent's best response, and how they differ from those in the additive case.
\end{example}

Figure~\ref{fig:gs} illustrates the upper envelope for the instance in Example~\ref{ex:gs}, where $f$ is a GS function. Tracing the envelope shows that the agent's best response evolves as follows: for small values of $\alpha$ the agent takes no action; as $\alpha$ increases, the chosen set is first $\{1\}$, then $\{2\}$ (replacing action 1), then $\{1,2\}$ (adding back action 1), and finally $\{1,2,3\}$. Unlike the clean progression in Example~\ref{ex:additive}, here action 1 is included for some range of $\alpha$, later dropped, and then chosen again at a larger $\alpha$. Thus, the number of critical values can exceed $n$, and the sequence of transitions along the $\alpha$ axis may be more intricate than in the additive case. Despite this complexity, \cite{DuettingEFK21} prove the following upper bound on the number of critical values for GS functions.

\begin{lemma}
\label{lem:poly-CV-GS}
Every GS reward function $f$ admits at most $O(n^2)$ critical values.
\end{lemma}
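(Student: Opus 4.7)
The approach is to combine the monotonicity of best responses established in \Cref{lem:utilityConvex} with the exchange structure of GS demand sets. I would label the sequence of best responses in consecutive intervals as $\emptyset = S_0, S_1, \ldots, S_K$, so that the number of critical values we wish to bound equals $K$. At the $k$-th critical value $\alpha_k$ both $S_{k-1}$ and $S_k$ lie in $D_f(p^{(k)})$ at the price vector $p^{(k)}_i = c_i/\alpha_k$, since the agent is indifferent between them. This reduces the problem to a statement about demand sets of a single GS function at a finite sequence of price vectors derived from the critical values.

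Next, I would invoke the single-improvement / no-complementarity characterization of GS valuations: if $S, S' \in D_f(p)$, then for every $i \in S\setminus S'$ there is a $j \in (S'\setminus S) \cup \{\bot\}$ such that $(S\setminus\{i\})\cup\{j\} \in D_f(p)$, and symmetrically. Iterating this at each critical price produces a chain of demanded sets $S_{k-1} = T^{(k)}_0, T^{(k)}_1, \ldots, T^{(k)}_{m_k} = S_k$ in which consecutive sets differ by an \emph{elementary move}: a single insertion, a single deletion, or a one-for-one swap. Bounding the total number of critical values is therefore reduced to bounding $\sum_k m_k$, the total number of elementary moves as $\alpha$ sweeps $[0,1]$.

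The heart of the argument is a charging scheme. I would charge each elementary move to an ordered pair $(i,j) \in [n]\times([n]\cup\{\bot\})$ recording what was swapped in and out, and then show that each pair can be charged only $O(1)$ times. The intuition is that by \Cref{lem:utilityConvex} both $f(S_k)$ and $c(S_k)$ strictly increase in $k$, so once a particular pairwise exchange has occurred at some critical price, the greedy characterization of GS (\Cref{thm:gs-greedy}) together with the monotone direction in which the prices $c_i/\alpha$ move should prevent that exchange from being reversed later. Summing over the at most $n(n+1)$ such pairs yields the desired $O(n^2)$ bound.

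The main obstacle is precisely this last step -- formalizing the charging so that every pair is used only a bounded number of times. Because an element can leave and re-enter the best response, as witnessed by \Cref{ex:gs}, a naive per-element potential does not suffice, and one has to work with a finer invariant. The natural candidate is the relative order in which a pair of elements is picked by \textsf{GreedyGS} at the successive critical prices, and the technical crux is to use the GS exchange axiom to show that this pairwise order cannot oscillate more than a constant number of times along the parametric path. Once that is established, the count of elementary events, and hence of critical values, is $O(n^2)$.
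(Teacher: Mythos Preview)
Your reduction to elementary moves is essentially the same as the paper's: at each critical value the best response changes by a single insertion, deletion, or one-for-one swap (the paper gets this via the Baldwin--Klemperer facet-crossing characterization of GS rather than the single-improvement property, but the conclusion is the same, and in fact under a genericity assumption one has $m_k=1$ for every $k$).

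The gap is in the counting step. You propose to charge each elementary move to an ordered pair $(i,j)$ and argue that each pair is hit $O(1)$ times, but you yourself flag this as the ``main obstacle'' and do not carry it out; the suggested invariant (pairwise greedy order under \textsf{GreedyGS}) is a plausible-sounding heuristic, not an argument, and it is not clear how to turn it into a proof. The paper sidesteps this entirely with a much simpler potential. Rank the actions $1,\ldots,n$ by cost (cheapest gets rank $1$) and set $\Phi(S)=\sum_{i\in S}\operatorname{rank}(i)$. By \Cref{lem:utilityConvex}, $c(S_k)$ is strictly increasing in $k$, so each elementary transition is either an insertion or a swap of a cheaper action for a strictly more expensive one; a pure deletion is impossible. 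In both allowed cases $\Phi$ strictly increases by at least $1$. Since $0\le \Phi(S)\le \sum_{i=1}^n i = O(n^2)$, the number of critical values is $O(n^2)$.

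So the missing idea is precisely this cost-rank potential: it exploits the monotonicity of $c(S_k)$ (which you already cite) to orient every elementary move in a single direction, making a per-move potential increment immediate and obviating any pairwise charging.
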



We now sketch the proof of~\Cref{lem:poly-CV-GS}.
Our presentation deviates from the original proof in~\citep{DuettingEFK21}: it combines their techniques with arguments from~\cite{demand_working_paper}, which link the analysis to the geometric approach of~\cite{demand_types}, based on the notion of \emph{demand types}.

\begin{proof} [Proof sketch]
    Recall that $D_f(p) = \argmax_{S\subseteq A} \{f(S)-\sum_{i\in S} p_i\}$ is the set of all bundles that maximize demand at prices $p$ (see~\Cref{sec:oracle}).
\cite{demand_types} propose a geometric view of demand. They show that 
for each price vector $p \in \reals^n$, $D_f(p)$ is a single set, except on a measure-zero set. This exceptional set, where demand is not unique, is called the \emph{LIP (Locus of Indifference Prices)}; it consists of $(n-1)$-dimensional linear pieces known as \emph{facets}. By continuity of $f(S)-\sum_{i\in S} p_i$, removing the LIP partitions $\reals^n$ into $n$-dimensional open regions in which $|D_f(p)|=1$; these are the \emph{UDRs (Unique Demand Regions)}. \Cref{fig:facet_piercing} illustrates this in a 2-dimensional setting (for three different functions $f$): the colored lines are facets, the LIP is their union, and each UDR is labeled by its unique demanded set. (The dashed line will be explained shortly.)

A key lemma of~\cite{demand_types} states that a function is GS if and only if, when crossing a facet between two adjacent UDRs, the demanded set changes in exactly one of two ways: either a single element is added or removed, or one element is exchanged for another. This holds in the two leftmost diagrams in~\Cref{fig:facet_piercing}, but not in the rightmost panel, where moving across the orange facet corresponds to adding or removing two elements simultaneously.

\cite{demand_working_paper} study this perspective in the context of critical values in the multi-action setting. Recall that the agent's best response can be expressed as a demand query: maximizing the agent's utility is equivalent to finding a set in $D_f(\vec{c}/\alpha)$. In the geometric framework of \cite{demand_types}, this corresponds to a line in $\mathbb{R}^n$, illustrated as the dashed line in \Cref{fig:facet_piercing}. They show that, without loss of generality, the linear contract line ``interacts nicely'' with the LIP, in the sense that it moves between UDRs by crossing one facet at a time. A critical point arises exactly when the contract line intersects a facet of the LIP, causing the agent's best response to shift from the unique demand of one UDR to that of the next. Consequently, when $f$ is gross substitutes, the transition is always of one of two forms: either an action is added or removed, or one action is replaced by another. Moreover, since the cost of the agent's best response increases with $\alpha$, raising $\alpha$ corresponds either to adding an action, or to substituting a cheaper action with a more costly one.
    
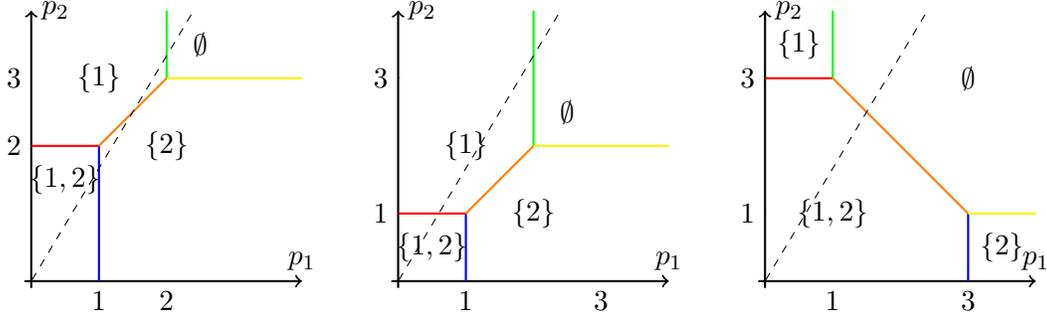
\begin{figure}
\begin{center}
\begin{tikzpicture}[scale=0.9]
    \draw[thick, ->] (-0.1, 0) -- (4, 0) node[above] {\normalsize $p_1$};
    \draw[thick, ->] (0, -0.1) -- (0, 4) node[right] {\normalsize $p_2$};
    
    \draw[blue,thick] (1, 0) -- (1, 2);
    \draw[red,thick] (0, 2) -- (1, 2);
    \draw[orange,thick] (1, 2) -- (2, 3);
    \draw[green,thick] (2, 3) -- (2, 4);
    \draw[yellow,thick] (2, 3) -- (4, 3);
    
    \node at (0.5, 1.5) {$\{1,2\}$};
    \node at (2.5, 3.5) {$\emptyset$};
    \node at (1, 3) {$\{1\}$};
    \node at (2, 2) {$\{2\}$};

    \draw (1,0) -- (1,0) node [below] {$1$};
    \draw (2,0) -- (2,0) node [below] {$2$};
    \draw (0,2) -- (0,2) node [left] {$2$};
    \draw (0,3) -- (0,3) node [left] {$3$};

    \draw[,dashed] (0.012,0.02) -- (2.4,4);
\end{tikzpicture}
\quad
\begin{tikzpicture}[scale=0.9]
    \draw[thick, ->] (-0.1, 0) -- (4, 0) node[above] {\normalsize $p_1$};
    \draw[thick, ->] (0, -0.1) -- (0, 4) node[right] {\normalsize $p_2$};
    
    \draw[blue,thick] (1, 0) -- (1, 1);
    \draw[red,thick] (0, 1) -- (1, 1);
    \draw[orange,thick] (1, 1) -- (2,2);
    \draw[green,thick] (2, 2) -- (2, 4);
    \draw[yellow,thick] (2, 2) -- (4, 2);

    \node at (0.5, 0.5) {$\{1,2\}$};
    \node at (2.5, 2.5) {$\emptyset$};
    \node at (1, 2) {$\{1\}$};
    \node at (2, 1) {$\{2\}$};

    \draw (1,0) -- (1,0) node [below] {$1$};
    \draw (3,0) -- (3,0) node [below] {$3$};
    \draw (0,1) -- (0,1) node [left] {$1$};
    \draw (0,3) -- (0,3) node [left] {$3$};

    \draw[,dashed] (0.012,0.02) -- (2.4,4);
\end{tikzpicture}
\quad
\begin{tikzpicture}[scale=0.9]
    \draw[thick, ->] (-0.1, 0) -- (4, 0) node[above] {\normalsize $p_1$};
    \draw[thick, ->] (0, -0.1) -- (0, 4) node[right] {\normalsize $p_2$};
    \draw[blue,thick] (3, 0) -- (3, 1);
    \draw[red,thick] (0, 3) -- (1, 3);
    \draw[orange,thick] (1, 3) -- (3,1);
    \draw[green,thick] (1, 3) -- (1, 4);
    \draw[yellow,thick] (3, 1) -- (4, 1);

    \node at (1, 1) {$\{1,2\}$};
    \node at (3, 3) {$\emptyset$};
    \node at (0.5, 3.5) {$\{1\}$};
    \node at (3.5, 0.5) {$\{2\}$};

    \draw (1,0) -- (1,0) node [below] {$1$};
    \draw (3,0) -- (3,0) node [below] {$3$};
    \draw (0,1) -- (0,1) node [left] {$1$};
    \draw (0,3) -- (0,3) node [left] {$3$};
    \draw[,dashed] (0.012,0.02) -- (2.4,4);
\end{tikzpicture}
\end{center}
\caption{Three 2-action instances.  
In all examples $c(1)=0.012, c(2)=0.02$, and they differ only in $f$. Left: $f(\emptyset) = 0, f(\{1\}) = 2, f(\{2\})=3, f(\{1,2\})=4$.
Middle: $f(\emptyset) = 0, f(\{1\}) = 2, f(\{2\})=2, f(\{1,2\})=4$. Right: $f(\emptyset) = 0, f(\{1\}) = 1, f(\{2\})=1, f(\{1,2\})=4$.
}
\label{fig:facet_piercing}
\end{figure}

The $O(n^2)$ upper bound then follows via a simple potential-function argument. Rank each action by its cost (1 for the cheapest, $n$ for the most expensive), and define the potential of a set as the sum of its members' ranks. This potential is bounded above by $\sum_{i=1}^n i = O(n^2)$. Since the potential of the best response is an integer that increases monotonically with $\alpha$ (due to the restricted forms of transitions identified above), the number of transitions is at most $O(n^2)$.
\end{proof}

Two remarks are in order. 
First, the $O(n^2)$ bound is tight; indeed, there exists a GS reward function $f$ admitting 
$\Theta(n^2)$ critical values. 
Second, for the broader class of submodular rewards, the condition of having only polynomially many critical values fails. Indeed, \cite{DuettingEFK21} construct a submodular reward function $f$ with \emph{all} $2^n-1$ sets of actions appearing along the upper envelope.

Combining~\Cref{thm:alg-2conds}, \Cref{thm:gs-greedy}, and \Cref{lem:poly-CV-GS} yields the main result for GS rewards:

\begin{theorem}[Poly-time algorithm for GS rewards, with value oracle~\citep{DuettingEFK21}]
    \label{thm:GS-alg}
    In single-agent multi-action settings, with GS reward functions, the optimal contract can be computed in time polynomial in $n$, given access to a value oracle.
\end{theorem}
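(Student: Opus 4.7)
The plan is to deduce the theorem directly by combining the three ingredients developed earlier in this section: the two-conditions meta-theorem (\Cref{thm:alg-2conds}), the GS characterization via greedy demand (\Cref{thm:gs-greedy}), and the polynomial bound on critical values for GS rewards (\Cref{lem:poly-CV-GS}). Since \Cref{thm:alg-2conds} converts these structural properties into a polynomial-time value-oracle algorithm for the optimal contract, the role of this proof is simply to verify both of its hypotheses in the GS setting.

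For the first hypothesis --- a polynomial-time demand oracle --- I would invoke \Cref{thm:gs-greedy}, which characterizes GS functions as exactly those for which the greedy procedure \textsf{GreedyGS} correctly answers demand queries. Given any price vector, \textsf{GreedyGS} runs for at most $n$ iterations, each scanning the remaining elements to select one maximizing the marginal $f(x \mid S) - p(x)$. This uses $O(n^2)$ value queries in total, yielding a polynomial-time demand oracle that in particular simulates the agent's best-response oracle (by the rescaling $p_i = c_i/\alpha$ discussed in \Cref{sec:model1-model}). A technical point worth flagging is the tie-breaking convention: the oracle must return a utility-maximizing set that favors higher $f$-value, which can be enforced within the greedy procedure without affecting its running time.

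For the second hypothesis --- polynomially many critical values --- I would invoke \Cref{lem:poly-CV-GS}, which gives the sharp $O(n^2)$ bound for GS rewards. Combined with the Eisner--Severance enumeration of \Cref{prop:optContractSuff}, executed using the best-response oracle just described, this allows all critical values to be enumerated in polynomial time. With both hypotheses verified, \Cref{thm:alg-2conds} immediately delivers the claimed polynomial-time algorithm.

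The main obstacle has already been absorbed into \Cref{lem:poly-CV-GS}: bounding the number of critical values for GS rewards is the structural content of the proof, since demand-oracle implementability is essentially a restatement of the GS characterization. Once that lemma is in hand, the theorem follows as a clean composition of earlier results, with no additional technical work required beyond threading the value/demand/best-response reductions together.
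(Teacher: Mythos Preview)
Your proposal is correct and matches the paper's own proof, which simply states that the theorem follows by combining \Cref{thm:alg-2conds}, \Cref{thm:gs-greedy}, and \Cref{lem:poly-CV-GS}. You have added a bit more detail on tie-breaking and query counts, but the approach is identical.
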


\subsection{Poly-time algorithm for ultra rewards}
\label{sec:ultra}

Recall that GS serves as a tractability frontier for several fundamental problems. A natural question, raised in~\citep{DuettingEFK21}, is whether GS also marks the frontier for efficient algorithms for the optimal contract problem. An initial step beyond GS was taken by~\cite{DuettingFG23}, who established tractability in the \emph{supermodular} regime (see~\Cref{sec:supermod}). More recently,~\cite{FeldmanY25} showed that tractability extends to the class of \emph{ultra} rewards --- a strict superclass of GS. For clarity of exposition, we present the ultra result first, since its analysis is more similar to that of GS, while the supermodular case requires a different line of reasoning.

Ultra functions admit multiple equivalent definitions; we present the one we find most instructive --- the characterization via the \emph{well-layered} property --- as it aligns with a greedy procedure for computing demand.

\begin{definition}\citep{dress1995well,Lehman2017}
\label{def:ultra}
A set function $f:2^{[n]}\rightarrow \reals_{\geq 0}$ is said to be \emph{ultra} if it satisfies the \emph{well-layered} property, defined as follows: for every additive function $p$,  the sequence of sets $S_0,S_1,\ldots, S_n$ obtained by the following greedy procedure:
    $S_0=\emptyset \text{  and  }S_{i}=S_{i-1} \cup x_i, \text{  for  } x_i \in \argmax_{x\notin S_{i-1}} \{f(x \mid S_{i-1})-p(x)\}$, with ties broken arbitrarily, 
    are such that $S_i \in \argmax_{S,|S|=i} \{ f(S)-p(S)\}$ for every $i$.
\end{definition}

This definition implies that demand queries for ultra functions can be computed efficiently.

\begin{lemma}\citep{dress1995well,Lehman2017,FeldmanY25}
\label{lemma:contract-demand-query-ultra-add}
For any ultra function $f$, and price vector $p$, a demand query for $(f,p)$ can be solved using Algorithm \textsf{GreedyUltra} (\Cref{alg:ultra-demand-query}).
\end{lemma}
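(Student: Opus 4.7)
The plan is to invoke the well-layered characterization from Definition~\ref{def:ultra} almost directly, together with the elementary observation that any demand optimum has some cardinality, so scanning the best size-$i$ set for each $i \in \{0,1,\ldots,n\}$ must recover the overall demand.

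First, I would verify that lines~1--4 of Algorithm~\ref{alg:ultra-demand-query} implement exactly the greedy procedure that appears in Definition~\ref{def:ultra}: both start from $S_0 = \emptyset$ and iteratively set $S_i = S_{i-1} \cup \{x_i\}$ for some $x_i \in \argmax_{x \notin S_{i-1}}\{f(x \mid S_{i-1}) - p(x)\}$. Since a demand query is parametrized by an additive price vector $p \in \mathbb{R}^n_{\ge 0}$, the additivity hypothesis in Definition~\ref{def:ultra} is satisfied, so the well-layered property yields, for every $i$, that $S_i \in \argmax_{S : |S| = i}\{f(S) - p(S)\}$.

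Next, I would close the argument by a one-line cardinality comparison. Let $S^\star$ be any optimum of the full demand query and let $k = |S^\star|$. The previous step gives $f(S_k) - p(S_k) \ge f(S^\star) - p(S^\star)$, and line~6 returns $S^* \in \argmax_{i \in [n]}\{f(S_i) - p(S_i)\}$, so in particular $f(S^*) - p(S^*) \ge f(S_k) - p(S_k)$. Chaining these inequalities shows that $S^*$ is itself a demand set, which is exactly what the algorithm is supposed to output.

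The ``hard'' part, really a conceptual point rather than a technical obstacle, is explaining why Algorithm~\ref{alg:ultra-demand-query} must not terminate early in the style of \textsf{GreedyGS} (Algorithm~\ref{alg:gs-demand-query}). For gross substitutes, submodularity guarantees that once a greedy marginal turns non-positive it stays non-positive, making early stopping safe. Ultra functions lack this monotonicity of marginals: a step with non-positive marginal may be followed by a profitable one, so the algorithm must construct the whole chain $S_0, S_1, \ldots, S_n$ before extracting the best cardinality level in line~6. Other than articulating this distinction, no further work beyond invoking Definition~\ref{def:ultra} is needed.
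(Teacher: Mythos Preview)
Your proposal is correct and is exactly the natural argument: the well-layered property of Definition~\ref{def:ultra} guarantees that each $S_i$ is optimal among sets of cardinality $i$, so taking the best over all cardinalities yields a global demand optimum. The paper does not spell out a proof of this lemma (it is attributed to the cited references) and only records the conceptual point you also make, namely that \textsf{GreedyUltra} must not stop early because ultra functions need not be submodular and marginals can turn positive again.
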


As in the GS case, to ensure the equivalence between demand and best-response queries, we must handle the subtlety of tie-breaking in favor of the principal's utility. Accordingly, in both instances where ties may arise (lines 3 and 6 of~\Cref{alg:ultra-demand-query}), we break them by choosing the set with the higher $f$ value.

Note that the greedy procedure in~\cref{alg:ultra-demand-query} closely resembles that in~\cref{alg:gs-demand-query}. The key difference is that, unlike the GS case, the algorithm continues even after encountering a negative marginal. This is because ultra functions are \emph{not} submodular, and the marginal contribution of an action may increase even after turning negative. In fact, within the class of monotone set functions, GS is precisely the intersection of submodular and ultra functions~\citep{FeldmanY25}.

We now present the main theorem for ultra rewards.

\begin{theorem}[Poly-time algorithm for ultra rewards~\citep{FeldmanY25,demand_working_paper}]
    \label{thm:ultra-alg}
    In single-agent multi-action settings, with ultra reward functions, the optimal contract can be computed in poly time in $n$, given access to a value oracle.
\end{theorem}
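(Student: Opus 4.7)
The plan is to apply Theorem~\ref{thm:alg-2conds}, which reduces the problem to verifying two conditions for ultra reward functions: (i) a polynomial-time algorithm for demand queries using value queries, and (ii) a polynomial bound on the number of critical values. Condition (i) is immediate from Lemma~\ref{lemma:contract-demand-query-ultra-add}: Algorithm \textsf{GreedyUltra} (Algorithm~\ref{alg:ultra-demand-query}) correctly computes a demand set for any ultra function and price vector using only value queries, in $O(n^2)$ time. As in the GS case, care must be taken so that the tie-breaking in lines~3 and~6 of \textsf{GreedyUltra} is consistent with the agent's convention of favoring sets with larger $f$ value, but this is a routine adjustment that does not affect asymptotic running time.

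The substance of the proof lies in condition (ii). I would follow the geometric approach developed for GS in Section~\ref{sec:model1-GS}, viewing the linear contract as tracing a line in $\reals^n$ through the price vector $c/\alpha$, and identifying critical values with crossings of this line through facets of the LIP of $f$. The key structural step is to characterize these facet crossings for ultra functions: analogously to the GS case, one would like to show that across each facet the demanded set changes in a restricted way (e.g., by adding one element, removing one element, or swapping one element for another). With such a characterization in hand, and using the fact from Lemma~\ref{lem:utilityConvex} that the cost of the best response is strictly increasing in $\alpha$, a potential function argument --- assigning to each action an integer rank between $1$ and $n$ based on cost and summing ranks over the best-response set --- would bound the number of critical values by $O(n^2)$, since the potential strictly increases at each critical value and lies in $[0, \binom{n+1}{2}]$.

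The main obstacle is establishing this structural characterization for ultra functions. Unlike GS, ultra functions need not be submodular: the marginal contribution of an action may increase after becoming negative, which is precisely why \textsf{GreedyUltra} does not terminate at the first negative marginal. Consequently, the demand-type argument of~\cite{demand_types} used in the GS proof cannot be invoked directly; instead, one must leverage the well-layered property (Definition~\ref{def:ultra}) to constrain how the greedy sequence $S_0, S_1, \ldots, S_n$ evolves with $\alpha$. A natural plan is to decompose the critical values along the $\alpha$-axis into two sources: transitions within the greedy sequence itself, where two actions swap positions at some layer $i$, and transitions in the optimal index $i^\star$ within a fixed greedy sequence. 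The latter can be bounded by monotonicity in $\alpha$ of $i^\star$ along any maximal interval on which the sequence is stable; the former can be bounded by a pairwise swap-count argument across layers, since the relative order of any two actions in any layer can flip only a bounded number of times as $\alpha$ ranges over $[0,1]$. Putting these bounds together should yield a polynomial count of critical values, completing the verification of condition (ii) and hence the theorem via Theorem~\ref{thm:alg-2conds}.
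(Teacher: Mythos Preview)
Your overall framework is correct and matches the paper: invoke Theorem~\ref{thm:alg-2conds}, handle condition~(i) via Lemma~\ref{lemma:contract-demand-query-ultra-add}, and for condition~(ii) aim for an $O(n^2)$ bound on critical values via the potential function (sum of cost-ranks of actions in the best response).

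Where you diverge is in the structural step. You conjecture that the facet-crossing characterization should be ``add one, remove one, or swap one for one'' and then, believing this cannot be extracted from the demand-types framework for non-submodular functions, abandon the approach in favor of tracking the \textsf{GreedyUltra} sequence. The paper does \emph{not} abandon the facet-crossing route. It cites the result (from \cite{demand_working_paper}, echoing \cite{FeldmanY25}) that for ultra functions a facet crossing changes the demand by either (i) adding or removing \emph{one or more} elements, or (ii) exchanging one element for another. This is a strict relaxation of the GS characterization: multi-element additions/removals are now allowed, but swaps remain single. Crucially, combined with Lemma~\ref{lem:utilityConvex} (cost strictly increases along $\alpha$), the potential still strictly increases at every crossing, so the same $O(n^2)$ bound goes through verbatim. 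Your second paragraph was essentially already the proof; you just needed the corrected facet characterization rather than a new argument.

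Your alternative route --- decomposing critical values into ``layer-swap'' transitions of the greedy sequence and ``index'' transitions of $i^\star$ --- is creative but has a genuine gap. The claim that ``the relative order of any two actions in any layer can flip only a bounded number of times as $\alpha$ ranges over $[0,1]$'' is not justified: at layer $i$ you are comparing $f(x\mid S_{i-1}) - c_x/\alpha$ to $f(y\mid S_{i-1}) - c_y/\alpha$, but $S_{i-1}$ itself depends on all earlier greedy choices, which in turn depend on $\alpha$. This cascading dependence means the comparison is not a fixed linear-in-$\alpha$ inequality, and a naive pairwise count does not obviously yield a polynomial bound. Making this route work would require substantial additional structure that you have not supplied.
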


By~\Cref{thm:alg-2conds}, and~\Cref{lemma:contract-demand-query-ultra-add}, to prove~\Cref{thm:ultra-alg}, it remains to bound the number of critical values.

Recall from the proof of~\Cref{lem:poly-CV-GS} that, w.l.o.g., critical values arise only when the linear-contract line crosses a facet of the LIP.
As shown in~\citep{demand_working_paper} (echoing the argument in~\citep{FeldmanY25}), for ultra functions crossing a facet between adjacent UDRs changes the demand by either (i) adding or removing one or more elements, or (ii) exchanging one element for another.
This strictly generalizes the GS case, which allows only single-element additions or removals.
Using the same potential-function argument as in the GS proof, the potential increases at every critical value and remains bounded by $O(n^2)$, thereby extending the GS bound to ultra rewards.

\subsection{Poly-time algorithm for supermodular rewards}
\label{sec:supermod}

While most positive results so far have focused on reward functions within the complement-free hierarchy of \cite{LehmannLN06}, in this section we present a result from \citep{DuettingEFK23,VuongDPP23} showing that instances with \emph{supermodular} rewards admit a polynomial-time algorithm for computing the optimal contract. This holds even when the cost function $c$ is submodular.

\begin{theorem}[Poly-time algorithm for supermodular rewards~\citep{DuettingEFK23,VuongDPP23}]
    \label{thm:supermod}
    In single-agent multi-action settings, with supermodular reward functions and submodular cost functions, the optimal contract can be computed in time polynomial in $n$, given access to a value oracle.
\end{theorem}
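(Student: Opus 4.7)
The plan is to verify the two hypotheses of~\Cref{thm:alg-2conds}, so that enumerating critical values via~\Cref{alg:enumerate_CV} and picking the one that maximizes the principal's utility yields an optimal contract in polynomial time. The first hypothesis---a polynomial-time best-response oracle---is immediate from submodular-function minimization: for any $\alpha\ge 0$, maximizing the agent's utility $\alpha f(S)-c(S)$ is equivalent to minimizing $g_\alpha(S):=c(S)-\alpha f(S)$, and $g_\alpha$ is a nonnegative combination of the submodular functions $c$ and $-f$ (the latter is submodular because $f$ is supermodular). Hence $g_\alpha$ is submodular and can be minimized in polynomial time with a value oracle.

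The harder step is bounding the number of critical values by a polynomial. I would establish a chain structure: under a consistent tie-breaking rule, the best-response map $\alpha\mapsto S_\alpha$ is monotone under inclusion. Given $\alpha<\beta$, let $L=S_\alpha\cap S_\beta$ and $H=S_\alpha\cup S_\beta$. Supermodularity of $f$ gives $f(H)-f(S_\beta)\ge f(S_\alpha)-f(L)$, submodularity of $c$ gives $c(H)-c(S_\beta)\le c(S_\alpha)-c(L)$, and combining these with the optimality of $S_\beta$ at contract $\beta$ (versus $H$) and of $S_\alpha$ at contract $\alpha$ (versus $L$) telescopes to
\[
\beta\,[f(S_\alpha)-f(L)]\;\le\;c(S_\alpha)-c(L)\;\le\;\alpha\,[f(S_\alpha)-f(L)].
\]
Since $\beta>\alpha$ and $f(S_\alpha)\ge f(L)$ by monotonicity of $f$, this forces $f(S_\alpha)=f(L)$, and then monotonicity of $c$ forces $c(S_\alpha)=c(L)$. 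Thus $L$ is itself an $f$-maximal best response at $\alpha$ with the same cost as $S_\alpha$; extending the paper's tie-breaking rule to prefer inclusion-minimal representatives among $f$-maximal optima yields $S_\alpha=L\subseteq S_\beta$. A chain of subsets of $[n]$ has length at most $n+1$, so the envelope contains at most $n+1$ distinct sets and hence at most $n$ critical values. Invoking~\Cref{thm:alg-2conds} then completes the argument.

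The main obstacle is the chain-structure step: it requires combining the supermodular and submodular exchange inequalities with \emph{both} optimality conditions in just the right way, and then arguing that the resulting inclusion $S_\alpha\subseteq S_\beta$ can be realized by a single consistent choice of best-response representatives, rather than only for some pair of optima at each price. Once the chain bound is in place, the remainder follows immediately from the enumeration guarantee of~\Cref{prop:optContractSuff}.
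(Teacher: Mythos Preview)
Your plan is the same as the paper's: verify the two hypotheses of \Cref{thm:alg-2conds}, using submodular minimization for the best-response oracle and a chain bound ($S_\alpha\subseteq S_\beta$) to cap the number of critical values by $n$. Your lattice computation with $L=S_\alpha\cap S_\beta$ and $H=S_\alpha\cup S_\beta$ is correct and yields $f(L)=f(S_\alpha)$, $c(L)=c(S_\alpha)$ (and, symmetrically, $f(H)=f(S_\beta)$, $c(H)=c(S_\beta)$).

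The one real gap is the tie-breaking you propose. ``Inclusion-minimal among $f$-maximal optima'' need not be a well-defined single set: the optima of a submodular function form a lattice, but the \emph{$f$-maximal} optima are not obviously closed under intersection, so there may be several incomparable minimal ones, and your inclusion argument then does not pin down a single consistent representative. The paper sidesteps this by going the other way: it always selects the unique inclusion-\emph{maximal} minimizer of $c-\alpha f$ (the lattice top, computable in polynomial time), which is automatically $f$-maximal by monotonicity and hence consistent with the principal-favoring tie-break. With that choice, your own computation already closes the loop: since $H$ is an optimum at $\beta$ with $S_\beta\subseteq H$, maximality of $S_\beta$ forces $H=S_\beta$, i.e., $S_\alpha\subseteq S_\beta$. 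The paper's write-up of the chain step is slightly more direct---it takes $R=S_\alpha\setminus S_\beta$ and shows $\alpha' f(R\mid S_\beta)-c(R\mid S_\beta)\ge 0$ via super/submodularity, contradicting maximality of $S_\beta$---but this is the same inequality you derived, just packaged differently.
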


\begin{proof}[Proof sketch]
By~\Cref{thm:alg-2conds}, it suffices to show that such instances admit an efficient algorithm for computing a demand query (or equivalently, the agent's best response), and polynomially-many critical values.

We start with the former condition. If $c$ is submodular, and $f$ is supermodular, then the agent's utility, $\alpha f(S) - c(S)$, is supermodular. 
Finding a set $S$ that maximizes the agent's utility is equivalent to finding a set which minimizes the negation of the agent's utility, which is submodular. 
It is well known that there exists a polynomial time algorithm for submodular function minimization \citep{IwataFF01}.  
Additionally, it is easy to observe that the set of minimizers of a submodular function is closed under intersection and union. Thus, to handle tie breaking properly, it is enough to adapt the algorithm of \cite{IwataFF01} to return the maximal minimizer (which has the largest $f$ value, by monotonicity).
This can be done with no extra asymptotic cost as shown in \citep{nagano_submodular_min}.
Thus, a best-response oracle for the agent can be implemented efficiently.

For the second condition, the following lemma implies that the number of critical values is at most $n$.

\begin{lemma}[\citep{DuettingEFK23}]
\label{lem:SuperModContainment}
    For any supermodular reward function $f$, any submodular cost function $c$, and any two contracts $\alpha < \alpha'$ and corresponding sets in the agent's best response $S_\alpha$, $S_{\alpha'}$, it holds that $S_\alpha \subseteq S_{\alpha'}$.
\end{lemma}

\begin{proof}
    If $S_\alpha = S_{\alpha'}$ the claim obviously hold.
    Otherwise, assume that $S_{\alpha'}$ is a maximal best-response for contract $\alpha'$ (this is consistent with our tie-breaking assumption). Let $R = S_\alpha \setminus S_{\alpha'}$, and assume for the sake of contradiction that $R \ne \emptyset$. 
    We will show that the marginal contribution of $R$ to $S_{\alpha'}$ at $\alpha'$ is $\geq 0$, contradicting the maximality of $S_{\alpha'}$. Indeed, the marginal contribution of $R$ to $S_{\alpha'}$ at $\alpha'$ is given by 
    \begin{align*}
    \alpha' f(R \mid S_{\alpha'}) - c(R \mid S_{\alpha'}) 
    &\ge \alpha' f(R \mid S_\alpha \cap S_{\alpha'}) - c(R \mid S_\alpha \cap S_{\alpha'}) \\
    &\ge \alpha f(R \mid S_\alpha \cap S_{\alpha'}) - c(R \mid S_\alpha \cap S_{\alpha'}),
\end{align*}
where the first inequality holds by the supermodularity of $f$ and submodularity of $c$, and the second inequality follows by the monotonicity of $f$.
The last term is precisely the marginal contribution of $R$ to $S_\alpha$ at $\alpha$, which is non-negative by the optimality of $S_{\alpha}$ at $\alpha$.
This concludes the proof of the lemma.
\end{proof}
Having verified both conditions of~\Cref{thm:alg-2conds}, the theorem follows.
\end{proof}

\subsection{Hardness results for submodular and XOS rewards}
\label{sec:model1-hardness}

\subsubsection{Hardness of optimal contract.}
\label{sec:hardness-submod-demand}

As noted above, some submodular functions admit exponentially many critical values. 
But this alone does not imply computational hardness; it merely rules out the brute-force approach of enumerating all critical values and selecting the best. 
Nevertheless, \cite{DuettingEFK21} show that the optimal contract problem for submodular rewards is \textsf{NP}-hard with value queries alone. 
The proof proceeds via a reduction from the \textsf{subset sum} problem. In the induced instance, only two candidate values of $\alpha$ can yield the optimal contract, but determining which of these is superior requires computing the agent's best response at these $\alpha$'s, which is hard. Thus, the hardness here stems from the hardness of solving a demand query for submodular functions.

A fundamental open question was whether the problem becomes tractable under demand-oracle access. \cite{DuettingFGR26} answer this in the negative, showing that computing an optimal contract for submodular $f$ requires exponentially many demand queries, establishing the intrinsic hardness of the problem under submodular rewards.

\begin{theorem}[Hardness of submodular rewards with demand oracle~\citep{DuettingEFK24}]
\label{thm:hardness-submodular-demand}
In single-agent multi-action settings, with submodular rewards, any algorithm that computes the optimal contract requires exponentially many demand queries to $f$.
\end{theorem}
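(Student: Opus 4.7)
The plan is to establish this lower bound via an indistinguishability argument in the demand-oracle model. I would construct a large family of submodular reward functions $\{f_T\}_{T \in \mathcal{T}}$ over $[n]$, indexed by an exponentially large collection $\mathcal{T} \subseteq 2^{[n]}$, together with a fixed cost vector $\cost$, such that (i) the optimal contract for each $f_T$ is distinct and identifying it requires identifying $T$, while (ii) any sequence of polynomially many demand queries on $f_T$ returns the same answers as on a generic ``baseline'' function for all but a tiny fraction of the indices. Combined with a Yao-style minimax argument, this forces any algorithm --- deterministic or randomized --- that outputs the optimal contract to use $2^{\Omega(n)}$ demand queries.

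Concretely, I would start from a simple base submodular function $f_0$ whose critical values and optimal contract are transparent --- for example, a symmetric concave-of-cardinality function $f_0(S)=g(|S|)$. For each $T \in \mathcal{T}$ (say, the subsets of $[n]$ of size $n/2$), I would then define $f_T$ by planting a small ``bump'' at $T$ while preserving submodularity. The bump should be large enough that, at an appropriately chosen value of $\alpha$, the agent's best response under $f_T$ shifts to a set witnessing $T$ and yields principal utility strictly better than anything achievable under $f_0$. In this way the principal's optimum is attained at a critical value that exists only for $f_T$, so any algorithm that returns the optimal contract for $f_T$ must implicitly discover $T$. Equivalently, one may view this as a reduction from a planted-set search problem to optimal-contract computation.

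The main obstacle --- and where most of the technical work lies --- is ensuring that planted bumps are invisible to demand queries at ``generic'' price vectors. Demand queries are especially powerful because they return an $\argmax$ set, so any perturbation that tilts the optimizer anywhere is immediately revealed. To mask the bump on $T$, the perturbed function must be padded with many ``decoy'' sets of comparable reward-minus-price profile, so that the set returned by a demand query at a typical price does not depend on $T$; only a $2^{-\Omega(n)}$ fraction of ``revealing'' prices would expose $T$. Submodularity constrains the shape of such a perturbation --- forcing correlations between the value at $T$ and at its sub- and supersets --- so controlling its footprint on the demand correspondence will likely require a matroid-rank or coverage-style construction tuned to leave the demand unchanged outside a thin slice of the price simplex. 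Once this masking is established, a union bound over the adaptive transcript of $q$ queries shows that no transcript of length $2^{o(n)}$ can localize $T$ on more than a negligible fraction of the family, completing the lower bound.
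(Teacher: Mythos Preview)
Your high-level framework---hide a special set $T$ inside an exponential family of submodular instances, argue via Yao that any short query transcript fails to localize $T$, and conclude that no sub-exponential algorithm can output the optimal contract---matches the paper's. Where you diverge is in the mechanism for making demand queries uninformative, and this is exactly where the paper's contribution lies.

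Your plan is to pad each $f_T$ with ``decoy'' sets so that the returned demand does not depend on $T$, and you speculate about matroid-rank or coverage-style constructions to control the footprint of the bump. This is not how the paper proceeds, and your approach as stated has a gap: if the decoys are added to $f_T$, they are themselves $T$-dependent and can be exposed by a well-chosen price. The paper instead engineers two structural properties of a single \emph{base} instance: (i) an \emph{equal revenue} property---exponentially many distinct sets all achieve the same optimal principal utility, so boosting any one of them by a tiny amount makes it uniquely optimal; and (ii) a \emph{sparse demand} property---for \emph{every} price vector $p$, only polynomially many sets are within a small additive slack of $\max_S\{f(S)-p(S)\}$. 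Property~(ii) is the crux: a small bump at $T$ can alter the demand at $p$ only if $T$ is among those poly-many near-maximizers, so over a uniformly random $T$ each fixed query is informative with probability at most $\mathrm{poly}(n)/2^{\Omega(n)}$. The union bound over an adaptive transcript then goes through exactly as you outlined.

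In short, your proposal correctly identifies the obstacle but does not supply the idea that overcomes it. The missing ingredient is not masking via decoys but rather designing a base instance whose demand correspondence is \emph{sparse} at every price; once you have that, the hiding argument is the standard one you already sketched.
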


\begin{proof}[Proof sketch]
The demand query hardness is obtained by constructing an instance that satisfies the following two properties:
(i) \emph{Equal revenue}: There exist exponentially many distinct contracts, each incentivizing the agent to take a different set of actions, yet all yielding the same utility for the principal.
(ii) \emph{Sparse demand}: For any price vector $p$, the number of sets $S$ that maximize $f(S) - \sum_{i \in S} p_i$, up to a small additive factor, is polynomial.

The equal revenue property enables a standard ``hide a special set'' argument, which is the basis for value-query hardness. Combined with the sparse demand property, this argument extends to the demand-oracle model, yielding the exponential lower bound. 
Intuitively, the sparse demand property ensures that a demand query gives information only on poly-many sets, so one needs exponentially-many demand queries to find the hidden set.
\end{proof}

Beyond query complexity, \cite{DuettingEFK24} also introduce a communication-complexity framework, where the contract instance is split across two parties. They provide strong lower bounds, under different combinations of submodular/supermodular $f$ and $c$, further reinforcing the intrinsic hardness of the optimal contract problem. A full treatment of this perspective is outside the scope of this survey; see \citep{DuettingEFK24} for details.

\subsubsection{Hardness of approximation.}
\label{sec:hardness-approx-submod}
In this section we present hardness of approximation results for submodular and XOS rewards.
The following theorem, established in~\citep{EzraFS24}, shows that obtaining any constant-factor approximation for submodular rewards is hard.

\begin{theorem}[Submodular rewards, hardness of approximation~\citep{EzraFS24}]
\label{thm:submod-approx-hard}
    In single-agent multi-action settings, 
    with submodular rewards, no polynomial-time algorithm with value oracle access can approximate the optimal contract to within any constant factor, unless \textsf{P} $=$ \textsf{NP}.
\end{theorem}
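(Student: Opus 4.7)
The plan is to establish inapproximability via a gap-preserving polynomial-time reduction from a problem that is NP-hard to approximate within any constant factor. Label Cover (amplified by Raz's Parallel Repetition Theorem applied to the PCP Theorem) is a natural source, since for every constant $C>1$ there is a gap-$(1,1/C)$ version that is NP-hard, and its bipartite-constraint structure pairs well with monotone submodular coverage-type rewards.

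Given an arbitrary target ratio $C>1$, I would take a Label Cover instance in the corresponding gap regime and produce a single-agent multi-action contract instance as follows: the actions correspond to (vertex, label) pairs, and the reward function $f:2^{\actions}\to[0,1]$ is a normalized weighted coverage function whose value on $S$ reflects the fraction of constraints ``satisfied'' when $S$ is interpreted as a partial labeling. Coverage functions are monotone submodular, so $f$ is of the required type. The costs $c_i$ are tuned so that, in a distinguished range of $\alpha$, the agent's best response is forced to encode a genuine labeling, i.e., exactly one action per vertex. At such a distinguished value $\alpha^{*}$, the agent's reward equals (up to normalization) the fraction of satisfied constraints, and the principal's utility $(1-\alpha^{*})f(S_{\alpha^{*}})$ is $\Theta(1)$ in YES instances. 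For soundness, one shows that in NO instances every incentivizable set has reward $O(1/C)$, hence $(1-\alpha)f(S_{\alpha})=O(1/C)$ uniformly in $\alpha\in[0,1]$. The resulting $\Theta(C)$ gap contradicts any polynomial-time $C$-approximation, and since $C$ was arbitrary, rules out every constant.

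The main obstacle is soundness: the principal may exploit \emph{any} $\alpha\in[0,1]$, so one must control the entire upper envelope of agent best responses, not just the intended $\alpha^{*}$. In particular, small values of $\alpha$ may induce sets that encode only \emph{partial} labelings and still accrue modest reward; to rule this out, the costs must be designed so that partial-labeling sets are either not incentivizable at any $\alpha$, or carry enough cost that the resulting principal utility $(1-\alpha)f(S_{\alpha})$ stays $O(1/C)$. A secondary challenge is maintaining submodularity of $f$ while encoding enough combinatorial structure to drive the gap for an arbitrary $C$: iterated or product-style coverage constructions usually suffice, but monotonicity and the diminishing-returns inequality must be verified step-by-step, and the normalization $f(\actions)\leq 1$ must not collapse the gap. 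Together these steps yield, for every constant $C\geq 1$, a polynomial-time reduction showing that a $C$-approximation to the optimal contract would yield a $C$-approximation for Label Cover, implying $\mathsf{P}=\mathsf{NP}$.
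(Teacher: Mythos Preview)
Your reduction establishes a gap in the principal's optimal \emph{utility} between YES and NO instances, but this does not by itself yield hardness of finding an approximately optimal \emph{contract}. The output of a contract-approximation algorithm is a single scalar $\alpha\in[0,1]$; to use it to decide the promise problem you must be able to tell, from $\alpha$ alone, which case you are in. Evaluating the principal's utility at a given $\alpha$ requires computing the agent's best response, i.e., a demand query for a submodular $f$---precisely the step that is hard with value oracles. Your proposal does not explain how to read off the answer from the returned $\alpha$. The paper confronts this explicitly: it adds an auxiliary action that can only be incentivized at a high $\alpha$, and shows that in the ``good'' case every near-optimal contract avoids it (so $\alpha$ is small), while in the ``bad'' case every near-optimal contract must include it (forcing $\alpha$ large). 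Thus the two cases are separated in the \emph{value of $\alpha$} itself, not merely in the utility, and the decision procedure becomes ``look at $\alpha$.'' Without an analogous gadget, your reduction is incomplete.

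A second, related concern is the soundness over all $\alpha$. Label Cover soundness speaks only about \emph{proper} labelings (one label per vertex); in your construction the agent may, at large $\alpha$, select many labels per vertex and drive coverage close to $1$ even in NO instances. You acknowledge the issue for small $\alpha$ but not for large $\alpha$, and it is the large-$\alpha$ regime that is dangerous: as $\alpha\to 1$ the best response tends toward the full action set with $f=1$, so you must argue that the critical $\alpha$ at which this happens is close enough to $1$ that $(1-\alpha)f(S_\alpha)$ stays $O(1/C)$. The paper sidesteps this by starting from a Feige-style coverage promise problem whose two cases are ``a small set achieves $f=1$'' versus ``only much larger sets approximate $f$''; this size dichotomy translates directly into a cost dichotomy and hence into the required uniform bound on $(1-\alpha)f(S_\alpha)$ in the bad case, which then combines cleanly with the auxiliary-action gadget above.
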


\begin{proof}[Proof sketch]
The hardness result builds on an \textsf{NP}-hard promise problem for normalized unweighted coverage functions (a subclass of submodular functions), which is a generalization of a result by \cite{feige98}. The problem is to distinguish between a ``good'' case, where a small set $S$ achieves $f(S)=1$, and a ``bad'' case, where only significantly larger sets can approximate $f$.

In the contract-design context, this promise problem can be reduced to the principal's optimization problem. 
A subtlety arises, however: the reduction above shows hardness of computing the optimal utility, but since computing the utility of a given contract is not always feasible, the hardness of computing the optimal utility does not necessarily imply hardness of finding the optimal contract.
To address this, the reduction is modified so that approximately optimal contracts are separated. The key step is the introduction of an additional action that can only be incentivized at a high payment. This action effectively separates the two cases: in the good case, optimal contracts avoid it (yielding a low payment), while in the bad case, optimal contracts must include it (forcing a high payment). This ensures that finding an approximately optimal contract is computationally hard.
\end{proof}

For the broader class of XOS rewards, a stronger hardness applies:

\begin{theorem}[XOS rewards, hardness of approximation~\citep{EzraFS24}]
\label{thm:xos-approx-hard}
    In single-agent multi-action settings, 
    with XOS rewards, for any $\epsilon > 0$, no polynomial-time algorithm with value query access can approximate the optimal contract to within a factor of $n^{1/2 - \epsilon}$, unless \textsf{P} $=$ \textsf{NP}.
\end{theorem}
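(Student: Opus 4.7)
The plan is to follow the same high-level template used in the submodular hardness result (\Cref{thm:submod-approx-hard}), but starting from a gap problem for XOS functions whose separation is polynomial rather than constant. Specifically, I would (i) identify or construct an \textsf{NP}-hard promise problem on XOS set functions with an $n^{1/2-\epsilon}$ value gap between the ``good'' and ``bad'' instances, (ii) embed this promise problem into an instance of the single-agent multi-action contract problem so that the gap in $f$ is transferred into a gap in the principal's optimal expected utility, and (iii) import the ``extra-action'' gadget from the submodular proof to ensure that any approximately optimal contract, not merely the optimal utility value, must reveal which side of the promise the XOS instance lies on.

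For step (i), the natural source is a reduction from a problem whose optimal value is $n^{1-\epsilon}$-hard to approximate, such as maximum independent set or maximum clique, encoded as an XOS function using the representation $f(S)=\max_{i\in[\ell]} f_i(S)$ over additive clauses. Since XOS functions can express the maximum of exponentially many additive components, one can design $f$ so that: in \emph{good} instances, there is a small set $S^{\star}$ with $f(S^{\star})\approx n^{1/2-\epsilon/2}\cdot |S^\star|$ in some normalized sense, whereas in \emph{bad} instances every set has value per element only $O(1)$. This matches the known polynomial inapproximability bound that distinguishes XOS from submodular.

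For step (ii), I would calibrate unit action costs so that the agent's best response to a given $\alpha$ picks out a near-maximizer of $f(S)-\tfrac{c(S)}{\alpha}$. The gap in $f$ between the good and bad cases should then yield a $\Theta(n^{1/2-\epsilon})$ gap in the principal's expected utility $(1-\alpha)f(S_{\alpha})$ at the optimal $\alpha$: in the good case a small, cheap set achieves large reward at a small $\alpha$, whereas in the bad case any set with comparable reward must be much larger, pushing the needed $\alpha$ upward and collapsing the principal's utility by a factor of $n^{1/2-\epsilon}$. For step (iii), I would inject a single auxiliary action $a^{\star}$ of high reward and correspondingly high cost, tuned so that $a^{\star}$ is incentivized only at prohibitively large $\alpha$'s in the bad case and never in the good case; this separates the \emph{identity} of approximately optimal contracts on the two sides, so that any algorithm producing an $n^{1/2-\epsilon}$-approximate contract (not just its utility value) solves the promise problem.

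The main obstacle will be step (i): producing an explicit XOS function, polynomially representable and accessible through value queries, that exhibits an $n^{1/2-\epsilon}$ multiplicative gap and is the image of an \textsf{NP}-hard reduction. The XOS structure (maximum over additive functions) is more expressive than coverage, which is what gives us a polynomial rather than constant gap, but one must ensure that the additive clauses can be evaluated in polynomial time at the queried sets and that the promise-problem instance yields a bonafide normalized, monotone XOS reward. Once this hard gap instance is in hand, steps (ii) and (iii) are essentially engineering: rescale rewards and costs so the contract landscape inherits the gap, and use the auxiliary action to lift hardness from the optimal utility to the optimal contract itself, exactly as in the proof of \Cref{thm:submod-approx-hard}.
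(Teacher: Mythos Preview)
Your proposal is on the right track in identifying clique (or independent set) as the source of the polynomial gap, which is exactly what the paper does. However, the paper takes a more direct route than the three-step promise-problem-plus-gadget template you borrowed from the submodular proof. Rather than constructing a hard gap instance with a fixed $n^{1/2-\epsilon}$ separation and then engineering the contract to inherit it, the paper establishes a \emph{quantitative} reduction: any $\beta$-approximation algorithm for the optimal contract translates into a $\beta^2/4$-approximation for the clique number $\omega(G)$. Combined with the H{\aa}stad--Zuckerman $n^{1-\epsilon}$-inapproximability of clique, this immediately gives the $n^{1/2-\epsilon}$ bound, and in particular explains cleanly why the exponent halves. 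Your approach would also work in principle, but the auxiliary-action gadget of step~(iii) and the explicit promise-problem framing of step~(i) are not needed; the paper's single approximation-preserving reduction subsumes both. The trade-off is that your template is more modular (it reuses the submodular machinery verbatim), whereas the paper's argument is shorter and yields the exponent as a direct consequence of the squaring in the ratio transfer.
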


This hardness is established via a reduction from the well-studied problem of approximating the size of the largest clique in a graph. Classic results of~\cite{Hastad_1999} and~\cite{zuckerman2006} show that, unless \textsf{P}$=$\textsf{NP}, no polynomial-time algorithm can approximate the clique number $\omega(G)$ within a factor of $n^{1-\epsilon}$ for any $\epsilon>0$. The reduction links the two problems by showing that an algorithm achieving a $\beta$-approximation for the optimal contract would translate into an approximation of $\omega(G)$ within a factor of $\beta^2/4$. This connection yields the inapproximability bound of $n^{1/2 - \epsilon}$ for contracts with XOS rewards. A full proof of this reduction is given in~\citep{EzraFS24}.

\subsection{FPTAS for arbitrary monotone rewards with demand oracle}
\label{sec:fptas-demand}

The results above establish hardness of approximation for natural reward classes such as submodular and XOS functions. Strikingly, this barrier disappears once demand-oracle access is allowed. The following theorem, due to~\cite{DuettingEFK24,DuettingEFK21-sicomp}, shows that every monotone success probability function $f$ admits an FPTAS.\footnote{A weakly polynomial-time FPTAS was given earlier in \citep{DuettingEFK21}.}

\begin{theorem}[FPTAS with demand oracle~\citep{DuettingEFK24,DuettingEFK21-sicomp}]
\label{thm:FPTAS-monotone}
In single-agent multi-action settings, with monotone rewards, there exists an algorithm that gives a $(1-\epsilon)$-approximation to the optimal principal utility with $O(n^2/\epsilon)$ many value and demand queries to the reward function $f$.
\end{theorem}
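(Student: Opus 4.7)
The plan is to approximate the optimum by evaluating the principal's utility on a polynomial-size grid of candidate contracts, using demand queries to determine the agent's best response. The key enabling observation, noted at the end of \Cref{sec:model1-model}, is that the agent's best response to a linear contract $\alpha$ is exactly a demand query for $f$ at prices $p_i = c_i/\alpha$; therefore each candidate $\alpha$ can be evaluated using a single demand query (returning $S_\alpha$) followed by a single value query (returning $f(S_\alpha)$), yielding the principal's utility $(1-\alpha) f(S_\alpha)$ in two oracle calls.

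Concretely, I would take a geometric grid $\{\alpha_i\}$ defined by $1-\alpha_i = (1-\epsilon')^i$ for some $\epsilon' = \Theta(\epsilon)$, invoke the demand oracle at prices $c/\alpha_i$ to obtain $S_i$, use a value query for $f(S_i)$, and output the $\alpha_i$ maximizing $(1-\alpha_i) f(S_i)$. Correctness rests on \Cref{lem:utilityConvex}. Let $(\alpha^*, S^*)$ be an optimal pair with utility $u^* = (1-\alpha^*) f(S^*)$, and let $\alpha_j$ be the smallest grid point with $\alpha_j \geq \alpha^*$. The geometric spacing gives $1-\alpha_j \geq (1-\epsilon')(1-\alpha^*)$, while the monotonicity of the upper envelope (\Cref{lem:utilityConvex}) gives $f(S_j) \geq f(S^*)$. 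Multiplying the two inequalities yields
$$
(1-\alpha_j)\, f(S_j) \;\geq\; (1-\epsilon')\, u^*,
$$
i.e., a $(1-\epsilon)$-approximation for $\epsilon' = \epsilon$.

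The hard part will be matching the stated $O(n^2/\epsilon)$ query bound. Two technical issues arise. First, as $\alpha^* \to 1$ the quantity $1-\alpha^*$ can be arbitrarily small, so a naive geometric grid would need infinitely many points; the resolution is to tighten the ratio to $\epsilon' = \Theta(\epsilon/n)$ and truncate at a threshold beyond which $u^*$ is already absorbed into the $(1-\epsilon)$ slack, which is the source of one of the factors of $n$ in the query bound. Second, the demand oracle is free to return \emph{any} set in the demand, but the agent's best response is defined with ties broken in favor of the principal (i.e., maximizing $f$); handling this discrepancy may require $O(n)$ auxiliary queries per grid point to extract, from the demand, the demanded set of largest $f$-value. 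Combining a geometric grid of size $O(n/\epsilon)$ with the $O(n)$ per-point overhead (or, dually, refining the grid to $O(n^2/\epsilon)$ points while keeping $O(1)$ queries per point) yields the claimed $O(n^2/\epsilon)$ bound. The cleanest proof would partition $[0,1]$ into a ``bulk'' regime and an ``extreme'' regime near $1$, tailoring the grid density to each, and would argue separately that the extreme regime contributes at most an $\epsilon$-fraction of $u^*$.
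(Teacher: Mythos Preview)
Your high-level plan---a geometric grid in $1-\alpha$, one demand and one value query per grid point, and rounding $\alpha^\star$ up to the nearest grid point using the monotonicity of $f(S_\alpha)$---is exactly the discretization the paper has in mind, and the per-point correctness argument is fine.

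The genuine gap is at the step you yourself flag as ``the hard part.'' The sentence ``truncate at a threshold beyond which $u^\star$ is already absorbed into the $(1-\epsilon)$ slack'' is not an argument for a \emph{multiplicative} guarantee: if $1-\alpha^\star$ is tiny then $u^\star=(1-\alpha^\star)f(S^\star)$ is tiny too, but you still must output something worth $(1-\epsilon)u^\star$; smallness of $u^\star$ buys you no slack whatsoever. What is actually needed is a \emph{lower bound} on $1-\alpha^\star$ (equivalently on $u^\star$) in terms of a quantity you can compute with $O(1)$ queries. The paper supplies exactly this missing lemma: by reducing to the non-combinatorial model of \cite{DuttingRT19}, the ratio between the maximum welfare $W=\max_S\bigl(f(S)-c(S)\bigr)$ and the optimal principal utility is at most the number of ``actions'', here $2^n$. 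Hence $u^\star\geq W/2^n$, and since $u^\star=(1-\alpha^\star)f(S^\star)\leq 1-\alpha^\star$, also $1-\alpha^\star\geq W/2^n$. One demand query at prices $p_i=c_i$ gives $W$, after which a geometric grid with ratio $1-\epsilon$ reaching down to $W/2^n$ has the right order of points. Note that the factor of $n$ arising here is $\log 2^n$ from the depth of the grid, not a refinement $\epsilon'=\Theta(\epsilon/n)$ of the spacing.

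A minor point: your tie-breaking worry is a red herring and is not the source of the second factor of $n$. If $\alpha_j>\alpha^\star$ strictly and $S_j$ is \emph{any} maximizer of $\alpha_j f(S)-c(S)$ (no tie-breaking imposed), summing the two optimality inequalities at $\alpha_j$ and at $\alpha^\star$ yields $(\alpha_j-\alpha^\star)\bigl(f(S_j)-f(S^\star)\bigr)\geq 0$, hence $f(S_j)\geq f(S^\star)$. So each grid point genuinely costs $O(1)$ queries; the extra factor of $n$ in the stated $O(n^2/\epsilon)$ bound comes from making the argument strongly polynomial (the naive grid still carries a $\log(1/W)$ term, cf.\ the footnote about the earlier weakly polynomial FPTAS), not from tie-breaking.
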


Two remarks are in order. First, this result is best possible. Indeed, computing an optimal contract is \textsf{NP}-hard for submodular $f$~\citep{DuettingEFK21}, and requires exponentially many demand oracle calls (see~\Cref{thm:hardness-submodular-demand}).
Second,~\Cref{thm:FPTAS-monotone} remains true even when the cost function $c$ is subadditive, assuming access to a best-response oracle (i.e., given, a contract $\alpha$, return the set of actions that maximizes the agent's utility)~\citep{DuettingEFK21-sicomp}.

A key challenge in designing the FPTAS is that the optimal contract may lie arbitrarily close to $1$. Indeed, if it were bounded away from $1$, a sufficiently fine discretization of the contract space would suffice to approximate the optimum. 
To address this, \cite{DuettingEFK24} leverage a reduction to the non-combinatorial contract model of \cite{DuttingRT19}, which shows that the gap between the maximum \emph{welfare} and the principal's utility under the optimal contract is upper bounded by the number of ``actions'' --- $2^n$ in the combinatorial case.

\subsection{The pseudo-dimension of contracts}
\label{sec:model1-sample}

In a learning-based model of contract design~\citep{DuttingFPS25}, the agent's type, capturing both costs and rewards of actions, is unknown and can only be accessed through i.i.d. samples from an underlying (and a priori unknown) distribution. This naturally raises the question of sample complexity: how many samples are required to guarantee that the learned contract achieves near-optimal utility with high probability? To address this, \cite{DuttingFPS25} connects contract design with statistical learning theory via the notion of \emph{pseudo-dimension} --- a combinatorial measure of the expressive power of a class of contracts \citep{pollard1984convergence}.

A key structural insight is the role of critical values in determining the pseudo-dimension of linear contracts.

\begin{theorem}[Pseudo-dimension of contracts {\citep{DuttingFPS25}}]
In single-agent multi-action settings, the pseudo-dimension of linear contracts is $\Theta(\log N)$, where $N$ is the number of critical values.
\end{theorem}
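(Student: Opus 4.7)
The plan is to analyze the class $\mathcal{H}=\{h_\alpha : \alpha\in[0,1]\}$, where $h_\alpha$ maps an agent type $t=(f_t,c_t)$ to the principal's expected utility $h_\alpha(t)=(1-\alpha)\,f_t(S_\alpha^t)$, with $S_\alpha^t$ denoting the agent's best response at contract $\alpha$ under type $t$. Recall that the pseudo-dimension of $\mathcal{H}$ is the largest $d$ for which there exist types $t_1,\ldots,t_d$ and thresholds $r_1,\ldots,r_d\in\reals$ such that every Boolean vector $b\in\{0,1\}^d$ is realized as $\bigl(\ind{h_\alpha(t_i)\ge r_i}\bigr)_{i\in[d]}$ for some $\alpha\in[0,1]$. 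I will prove matching $O(\log N)$ upper and $\Omega(\log N)$ lower bounds.

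For the upper bound, I will lean on the geometry developed in Section~\ref{sec:model1-linear}. For each fixed type $t_i$, the map $\alpha\mapsto h_\alpha(t_i)$ is piecewise linear on the at most $N+1$ subintervals defined by its critical values, with a possible jump (from $f$ jumping at the change of best response) at each of the at most $N$ breakpoints. Consequently, for any threshold $r_i$, the sign of $h_\alpha(t_i)-r_i$ changes at most $2N+1=O(N)$ times as $\alpha$ sweeps $[0,1]$: at most once inside each linear piece (each being affine in $\alpha$, so crossing $r_i$ at most once), plus a possible sign flip at each breakpoint. Overlaying the sign-change locations of all $d$ functions partitions $[0,1]$ into $O(dN)$ cells on which the full $d$-dimensional sign pattern is constant. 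Shattering requires all $2^d$ patterns to appear, so $2^d\le O(dN)$, forcing $d=O(\log N)$.

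For the lower bound, I aim to exhibit $d=\Omega(\log N)$ types with thresholds whose sign vector traces every element of $\{0,1\}^d$ as $\alpha$ varies. Concretely, starting from a reward/cost instance whose upper envelope exhibits $N$ distinct regions --- e.g., the $\Theta(n^2)$-critical-value construction witnessing tightness of \Cref{lem:poly-CV-GS}, or a submodular instance admitting more --- I will index these consecutive $\alpha$-regions $0,1,\ldots,N-1$, and for each $i\in\{1,\ldots,\lfloor\log N\rfloor\}$ design an auxiliary agent type $t_i$ whose indicator $\ind{h_\alpha(t_i)\ge r_i}$ flips exactly at the $\alpha$-values where the $i$-th bit of the region index toggles. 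As $\alpha$ traverses the $N$ regions in order, the $d$-dimensional sign vector then cycles through all $2^d$ binary strings, certifying pseudo-dimension at least $\lfloor\log N\rfloor$.

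The main obstacle will be this lower bound construction: verifying that the toggling agent types $t_i$ can actually be realized so that $\ind{h_\alpha(t_i)\ge r_i}$ flips exactly at the prescribed $\alpha$'s, while remaining valid reward/cost pairs in whichever structural class the theorem targets. A natural route is to build small \emph{gadget} types, each consisting of a handful of actions whose critical values are positioned at the intended toggling points, with reward magnitudes calibrated so that $(1-\alpha)f_{t_i}(S_\alpha^{t_i})$ straddles $r_i$ on precisely the correct subintervals; correctness then reduces to checking that the gadget's upper envelope and its induced sign pattern line up as designed. Combined with the $O(\log N)$ upper bound, this yields pseudo-dimension $\Theta(\log N)$.
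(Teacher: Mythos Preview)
The paper does not actually prove this theorem; it only states the result and cites \citep{DuttingFPS25}. There is thus no ``paper's own proof'' to compare against, so I will evaluate your proposal on its own merits.

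Your upper bound is correct and is the standard argument: each $h_\alpha(t_i)$ is piecewise affine with at most $N+1$ pieces, giving $O(N)$ sign changes of $h_\alpha(t_i)-r_i$ per type; overlaying $d$ types yields $O(dN)$ constant-pattern cells in $[0,1]$, and $2^d\le O(dN)$ forces $d=O(\log N)$.

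For the lower bound, the binary-counter strategy is the right idea, but one point in your sketch is misleading and could become a real gap if you implement it as written. You describe the gadget types $t_i$ as consisting of ``a handful of actions''. This cannot work for the low-order bits: the type playing the role of the least significant bit must have its indicator $\ind{h_\alpha(t_1)\ge r_1}$ flip $\Theta(N)$ times, which forces that single type to have $\Theta(N)$ critical values (and hence $\Theta(N)$ actions, or at least an action set rich enough to produce that many breakpoints). More generally, type $t_i$ needs $\Theta(N/2^{\,i-1})$ flips. So your construction must exhibit, for every scale $k\le N$, a concrete reward/cost pair whose principal-utility curve $(1-\alpha)f(S_\alpha)$ oscillates around a threshold $r$ exactly at $k$ prescribed locations in $[0,1]$. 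This is doable --- the saw-tooth shape of the principal's utility (jump up at each breakpoint, then linear decrease) lets you place oscillations essentially wherever you want by choosing $(f,c)$ pairs action by action --- but you should state explicitly that the ``gadget'' for bit $i$ has $\Theta(N/2^{\,i-1})$ actions, not a handful, and verify that these can be made to toggle at the exact $\alpha$'s where the $i$-th bit of the region index changes. Once that is written out, the $\Omega(\log N)$ bound follows.
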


As a consequence, the \emph{sample complexity} of learning contracts is polynomial in $\log N$. This represents an exponential improvement over naive bounds that depend on the number of actions, and shows that contracts can be learned efficiently from relatively few samples in well-studied scenarios. Thus, the number of critical values emerges as a unifying parameter that governs both the computational tractability and the statistical learnability of contract design.

\begin{table}[t]
\fontsize{10pt}{12pt}
\selectfont
\scalebox{0.95}{
\begin{tabular}{|c|cc|cc|}
\hline
\rowcolor[HTML]{C0C0C0} 
\textbf{\begin{tabular}[c]{@{}c@{}}Multiple\\ actions\end{tabular}} & \multicolumn{2}{c|}{\cellcolor[HTML]{C0C0C0}\textbf{Value Oracle}}                                                                                                                                                                       & \multicolumn{2}{c|}{\cellcolor[HTML]{C0C0C0}\textbf{Value and Demand Oracle}}                                                                                                              \\ \hline
\rowcolor[HTML]{C0C0C0} 
\multicolumn{1}{|l|}{\cellcolor[HTML]{C0C0C0}}                      & \multicolumn{1}{c|}{\cellcolor[HTML]{C0C0C0}\textbf{\begin{tabular}[c]{@{}c@{}}Upper bound \\ (pos)\end{tabular}}} & \textbf{\begin{tabular}[c]{@{}c@{}}Lower bound\\ (neg)\end{tabular}}                                                & \multicolumn{1}{c|}{\cellcolor[HTML]{C0C0C0}\textbf{\begin{tabular}[c]{@{}c@{}}Upper bound \\ (pos)\end{tabular}}}  & \textbf{\begin{tabular}[c]{@{}c@{}}Lower bound\\ (neg)\end{tabular}} \\ \hline
\cellcolor[HTML]{C0C0C0}\textbf{GS}
               
& \multicolumn{1}{c|}{\cellcolor[HTML]{FFFFC7}\begin{tabular}[c]{@{}c@{}}1\\ \cite{DuettingEFK21}\\
\cite{FeldmanY25}\end{tabular}}                    &  {\cellcolor[HTML]{EFEFEF} 1}                                                                                                                   & \multicolumn{1}{c|}{\cellcolor[HTML]{EFEFEF}1}                                                                      &  {\cellcolor[HTML]{EFEFEF} 1}                                                                     \\ \hline
\cellcolor[HTML]{C0C0C0}\begin{tabular}[c]{@{}c@{}}\textbf{Sub-}\\\textbf{modular}\end{tabular}                          & \multicolumn{1}{c|}{\cellcolor[HTML]{FFFFFF}}                                                                      & \cellcolor[HTML]{FFFFC7}\begin{tabular}[c]{@{}c@{}}No constant\\ approx\\ (if \textsf{P}$\neq$\textsf{NP})\\ \footnotesize{\cite{EzraFS24}}\end{tabular}       & \multicolumn{1}{c|}{\cellcolor[HTML]{EFEFEF}
FPTAS}
& 
\multicolumn{1}{c|}{\cellcolor[HTML]{FFFFC7}
\begin{tabular}[c]{@{}c@{}}$>1$\\ 
\footnotesize{\cite{DuettingEFK21}}\\
\footnotesize{\cite{DuettingFGR26}}
\end{tabular}}
\\ 
\hline
\cellcolor[HTML]{C0C0C0}\textbf{XOS}                                & \multicolumn{1}{c|}{\cellcolor[HTML]{FFFFFF}}                                                                      & \cellcolor[HTML]{FFFFC7}\begin{tabular}[c]{@{}c@{}}No better\\ than $n^{1/2-\epsilon}$\\ (if \textsf{P}$\neq$\textsf{NP})\\ \footnotesize{\cite{EzraFS24}}\end{tabular} & \multicolumn{1}{c|}{\cellcolor[HTML]{EFEFEF}
FPTAS}
& 
\multicolumn{1}{c|}{\cellcolor[HTML]{EFEFEF}
$>1$}\\ \hline
\cellcolor[HTML]{C0C0C0} 
\begin{tabular}[c]{@{}c@{}}\textbf{Sub-}\\\textbf{additive}\end{tabular}  
& \multicolumn{1}{c|}{\cellcolor[HTML]{FFFFFF}\textbf{}}                                                             & \cellcolor[HTML]{EFEFEF}\begin{tabular}[c]{@{}c@{}}No better\\ than $n^{1/2-\epsilon}$\end{tabular}                       & 
\multicolumn{1}{c|}{\cellcolor[HTML]{FFFFC7}
\begin{tabular}[c]{@{}c@{}}FPTAS\\ \footnotesize{\cite{DuettingEFK21}}\\
\footnotesize{\cite{DuettingEFK24}}
\end{tabular}} &
\multicolumn{1}{c|}{\cellcolor[HTML]{EFEFEF}
$>1$}\\ \hline \hline
\cellcolor[HTML]{C0C0C0}\cellcolor[HTML]{C0C0C0}\begin{tabular}[c]{@{}c@{}}\textbf{Super-}\\\textbf{modular}\end{tabular}                       & 
\multicolumn{1}{c|}{\cellcolor[HTML]{FFFFC7}\begin{tabular}[c]{@{}c@{}}1\\ \footnotesize{\citet{DuettingFG23}}\\ \footnotesize{\citet{VuongDPP23}}\end{tabular}}
& {\cellcolor[HTML]{EFEFEF} 1}                                                                                                                   & \multicolumn{1}{c|}{\cellcolor[HTML]{EFEFEF}1}                                                                      &  {\cellcolor[HTML]{EFEFEF} 1}                                                                      \\ \hline
\end{tabular}
}
\caption{Summary of approximation guarantees in the single-agent multi-action setting.
Left: value-oracle access. Right: value- and demand-oracle access.
For each oracle model, we list known upper bounds (positive) and lower bounds (hardness) across reward-function classes.
Yellow entries indicate main results; gray entries follow by closure (positive extend up/right; negative down/left).
}
\label{tab:multi-action}
\end{table}

\subsection{Open problems and future directions}
\label{sec:model1-open}

The single-agent multi-action model has inspired a flurry of research, yielding both tractability and hardness results, along with structural insights. 
Table~\ref{tab:multi-action} summarizes the current state of the art. 
Despite this progress, several fundamental questions remain open:

\begin{itemize}
\item Characterizing tractable reward functions.
Current results identify specific families of reward functions for which the optimal contract can be computed in polynomial time (e.g., additive, GS, ultra, supermodular), and others for which the problem is intractable (e.g., general submodular). A complete characterization of the tractable frontier, however, is still lacking. Pinpointing the structural properties of reward functions that drive tractability remains a central open problem. Partial progress toward this characterization has been made in~\citep{demand_working_paper}.
\item Closing approximation gaps.  
Another key challenge is to determine tight approximation guarantees for submodular, XOS, and subadditive rewards under value-oracle access (see~\Cref{tab:multi-action} for current results).
\item Beyond binary outcomes. 
A natural direction is to extend the model to settings with richer outcome spaces, going beyond the binary success/failure assumption.
\end{itemize}

\section{Setting 2: Multi-agent binary-actions}
\label{sec:model2}

A central question in contract design is how to incentivize a team of agents~\citep{Holmstrom82,holmstrom1990regulating,itoh1991incentives,legros1991efficiency}.
A combinatorial perspective on this problem was introduced by~\cite{BabaioffFN06,BabaioffFNW12},
who considered the complexity of team composition in multi-agent settings, and defined the resulting optimization problem as the combinatorial agency problem.
In this section, we focus on the multi-agent binary-action model of~\cite{DuettingEFK23}, which generalizes this framework by representing rewards as monotone set functions. Here, a principal interacts with $n$ agents, each choosing whether to exert effort, and the project's reward depends on the subset of agents who do so. State-of-the-art results for this model are summarized in~\Cref{tab:multi-agent}.

\subsection{Model}
\label{sec:model2-model}

In the multi-agent binary-action model, a principal interacts with agents $\agents=\{1,\ldots,n\}$. Each agent either exerts effort or not; exerting effort entails a cost $c_i\in\reals_{\ge 0}$ for agent $i$.
The project has two possible outcomes: success or failure. The principal's payoff is normalized to $1$ in case of success and $0$ otherwise.

A \emph{reward function} $f:2^{\agents}\to [0,1]$ maps each set $S\subseteq\agents$ of agents exerting effort to the success probability $f(S)$. As before, $f$ is assumed to be monotone and normalized.  

The principal observes only the outcome (success or failure) and incentivizes the agents via a (linear) contract
$\contract=(\alpha_1,\ldots,\alpha_n)$, where $\alpha_i$ is the fraction of the reward paid to agent $i$ upon success.
As in~\Cref{sec:model1}, restricting attention to linear contracts in this model entails no loss of generality.

\paragraph{Utility functions and equilibria.}
Given a contract $\contract$ and a set $S$ of agents exerting effort, the \emph{principal's utility} is $(1-\sum_{i\in\agents}\alpha_i)\,f(S)$, while \emph{agent $i$'s utility} is $\alpha_i f(S)-\ind{i\in S}\cdot c_i$, where $\ind{i\in S}$ is $1$ if $i\in S$ and $0$ otherwise. 
Importantly, agent $i$ is paid $\alpha_i$ upon success regardless of effort, but incurs the cost $c_i$ only when exerting effort.

The analysis focuses on (pure) \emph{Nash equilibria} (NE) of the induced game among agents. A contract $\contract$ is said to \emph{incentivize} a set $S\subseteq\agents$ if 
\begin{align*}
    &\alpha_i f(S)-c_i \;\ge\; \alpha_i f(S\setminus\{i\}) &\text{for all } i\in S,\\
    &\alpha_i f(S) \;\ge\; \alpha_i f(S\cup\{i\})-c_i &\text{for all } i\notin S.
\end{align*}

Equilibria need not be unique, so a contract is viewed as the pair $(\contract,S)$, consisting of a payment vector $\contract$ and a NE $S$ induced by $\alpha$.\footnote{Equilibrium notions beyond pure Nash equilibria are considered in~\Cref{sec:model2-beyond-ne}.} 

\paragraph{The contract design problem.}
Fix a set $S$. If $S$ is incentivizable, the minimum payment required to incentivize it is
\begin{align*}
    &\alpha_i \;=\; \frac{c_i}{f(S)-f(S\setminus\{i\})}\;=\;\frac{c_i}{f(i\mid S\setminus\{i\})} && \text{for all } i\in S,\\
    &\alpha_i \;=\; 0 && \text{for all } i\notin S,
\end{align*}
with the convention that $\frac{c_i}{f(i \mid S \setminus\{i\})}$ is zero if $c_i = 0$ and $f(i \mid S \setminus\{i\}) = 0$, and infinity when $c_i > 0$ and $f(i \mid S \setminus\{i\}) = 0$. 

The principal's objective is to maximize her expected profit, given by
\[
g(S)\;=\;\Bigl(1-\sum_i \alpha_i\Bigr) f(S)\;=\;\left(1-\sum_{i\in S}\frac{c_i}{f(i\mid S\setminus\{i\})}\right) f(S).
\]
Throughout this section, let $\sstar$ denote an optimal set of agents, i.e., a set $S$ maximizing $g(S)$.

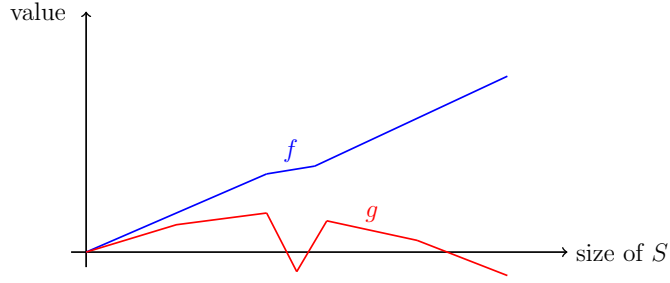
\begin{figure}
\vspace*{-10pt}
\begin{center}
\scalebox{0.8}{%
\begin{tikzpicture}
\draw[black, thick,->] (-0.25,0) -- (8,0) node[right] {\large size of $S$};
\draw[black, thick,->] (0,-0.25) -- (0,4); \node at (-0.8,4) {\large value};
\def\y{1.3}
\draw[blue,thick,-] (0,0*\y) -- (3,1*\y);
\draw[blue,thick,-] (3,1*\y) --node[above] {\large{\textcolor{blue}{$f$}}} (3.8,1.1*\y);
\draw[blue,thick,-] (3.8,1.1*\y) -- (7,2.25*\y);
\draw[red,thick,-] (0,0*\y) -- (1.5,0.35*\y) -- (3,0.5*\y);
\draw[red,thick,-] (3,0.5*\y) -- (3.5,-0.25*\y);
\draw[red,thick,-] (3.5,-0.25*\y) -- (4,0.4*\y);
\draw[red,thick,-] (4,0.4*\y) --node[above]{\large{\textcolor{red}{$g$}}} (5.5,0.15*\y);
\draw[red,thick,-] (5.5,0.15*\y) -- (7,-0.3*\y);
\end{tikzpicture}
}
\vspace*{-30pt}
\end{center}
\caption{
An example of an XOS reward function $f$ as a function of $|S|$, and the induced principal's utility $g$ under the best contract incentivizing $S$.}
\label{fig:f-vs-g}
\end{figure}

A core challenge is that the structural properties of $f$ need not transfer to $g$. Even when $f$ is nonnegative, monotone, and submodular, the induced $g$ can be non-monotone and may take negative values; for XOS $f$, the function $g$ can even fail to be subadditive (see~\Cref{fig:f-vs-g}).
The next example illustrates this loss of structure for two agents and a submodular reward function.
\begin{example}[Multiple agents with submodular $f$~\citep{DuettingEFK23}]
Consider $\agents=\{1,2\}$ with costs $c_1=c_2=0.25$ and a submodular reward function $f$ given by $f(\emptyset)=0$, $f(\{1\})=f(\{2\})=0.5$, and $f(\{1,2\})=0.75$.
Evaluating $g(S)$ on each $S\subseteq\agents$: (i) For $S=\emptyset$, taking $\alpha_1=\alpha_2=0$ yields $g(\emptyset)=0$. (ii) For $S=\{1\}$ (symmetrically, $\{2\}$), the optimal choice is $\alpha_1=c_1/f(\{1\})=0.5$ and $\alpha_2=0$ (respectively, $\alpha_2=0.5$ and $\alpha_1=0$), giving $g(\{1\})=g(\{2\})=(1-\alpha_1)f(\{1\})=0.25$.  (iii) For $S=\{1,2\}$, the required shares are $\alpha_1=c_1/(f(\{1,2\})-f(\{2\}))=0.25/(0.75-0.5)=1$ and similarly $\alpha_2=1$, hence $g(\{1,2\})=(1-2)f(\{1,2\})<0$.
Thus an optimal contract incentivizes exactly one agent (either $1$ or $2$). Although $f$ is monotone, nonnegative, and submodular, the induced $g$ is non-monotone and can be negative.
\end{example}

\subsection{Additive rewards}
\label{sec:model2-additive}

Moving from one agent to many agents introduces significant complexity. Even for the simplest class of additive rewards, the optimal contract problem is \textsf{NP}-hard, though it admits an FPTAS.

\begin{proposition}\citep{DuettingEFK23}
\label{prop:hardness-for-additive}
In multi-agent binary-action settings with additive rewards, computing the optimal contract is \textsf{NP}-hard; on the positive side, it admits an FPTAS.
\end{proposition}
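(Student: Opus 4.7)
The plan is to exploit the dramatic simplification that occurs when $f$ is additive. Writing $f(S) = \sum_{i \in S} f_i$, the marginal $f(i \mid S \setminus \{i\}) = f_i$ is independent of $S$, so the minimum payment required to incentivize $i \in S$ is simply $\alpha_i = c_i / f_i$. Setting $a_i := c_i / f_i$ and $b_i := f_i$, the principal's objective collapses to
\[
g(S) \;=\; \Bigl(1 - \sum_{i \in S} a_i\Bigr) \cdot \sum_{i \in S} b_i,
\]
a product of two nonnegative linear set functions (one may assume $\sum_{i \in S} a_i \le 1$ for the optimal $S$, since otherwise $g(S) \le 0 = g(\emptyset)$). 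The contract problem thus reduces to maximizing a product of two linear set functions over $2^{\agents}$.

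For \textsf{NP}-hardness I would reduce from balanced \textsc{Subset Sum}: given integers $s_1,\ldots,s_n$ summing to $2T$, decide whether some subset sums exactly to $T$. I would set $a_i = b_i = s_i/(2T)$ (with $f_i = s_i/(2T)$ and $c_i = s_i^2/(4T^2)$ achieving this). Then $g(S) = X(1-X)$ where $X := \sum_{i \in S} s_i/(2T)$, a strictly concave function whose unique maximum value $1/4$ is attained precisely at $X = 1/2$. Hence the optimum equals $1/4$ iff some subset sums to $T$, and otherwise $\opt \le 1/4 - \Omega(1/T^2)$ by integrality of the $s_i$. The inverse-polynomial gap suffices to conclude (weak) \textsf{NP}-hardness, which is consistent with the existence of an FPTAS.

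For the FPTAS, I would use a ``guess-and-knapsack'' scheme. First, enumerate a geometric grid of $O(\log(1/\epsilon)/\epsilon)$ candidate values $\tilde A \in (0,1]$ approximating $A^\star := 1 - \sum_{i \in S^\star} a_i$ to within a $(1+\epsilon/3)$ factor. For each guess $\tilde A$, the residual problem --- maximize the linear objective $B(S) = \sum_{i \in S} b_i$ subject to the linear budget constraint $\sum_{i \in S} a_i \le 1 - \tilde A$ --- is exactly a $0/1$ knapsack instance, for which the classical Ibarra--Kim FPTAS applies. Returning the best $(1 - \sum_{i \in \tilde S} a_i) \cdot B(\tilde S)$ across all guesses yields a $(1 - \epsilon)$-approximation, in time polynomial in $n$ and $1/\epsilon$.

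The main obstacle will be the approximation analysis, because $g$ is a product of two approximated quantities: slack in the guess of $A^\star$ and slack in the knapsack approximation of $B$ compound multiplicatively. I will need to calibrate the grid spacing on $A$ and the knapsack precision on $B$ so that the two $(1-O(\epsilon))$ factors combine into a single $(1-\epsilon)$ guarantee, while keeping the running time polynomial in $n$ and $1/\epsilon$. A secondary, lower-stakes subtlety on the hardness side is verifying that the reduction's parameters can be written with polynomial bit-complexity and that the $\Omega(1/T^2)$ gap is cleanly exploitable to rule out an exact polynomial-time algorithm.
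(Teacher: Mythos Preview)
Your hardness argument is correct and essentially identical to the paper's: both reduce from \textsc{Partition}/balanced \textsc{Subset Sum}, and your parabola $X(1-X)$ is exactly the right gadget.

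Your FPTAS takes a genuinely different route. The paper guesses $b=\max_{i\in S^\star} f_i$ (only $n$ choices), rounds each $f_i$ down to a multiple of $(\epsilon/n)b$, and then, for each of the $O(n^2/\epsilon)$ rounded reward thresholds $x$, uses a DP to find the set of minimum total $a_i=\sum_i c_i/f_i$ achieving $\tilde f(S)\ge x$; this is strongly polynomial in $n$ and $1/\epsilon$. You instead guess $A^\star=1-\sum_{i\in S^\star}a_i$ on a geometric grid and, for each guess, invoke a knapsack FPTAS on the $B$ side. Both ideas ultimately exploit the same product structure $g(S)=A(S)\cdot B(S)$, and in fact the standard knapsack FPTAS (round profits, DP for min weight per profit level) unrolls into precisely the paper's scheme.

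There is, however, a concrete gap in your version: the claim that $O(\log(1/\epsilon)/\epsilon)$ grid points suffice for $\tilde A$ is false. The quantity $A^\star$ can be as small as $2^{-\Theta(L)}$, where $L$ is the input bit length (take a single agent with $a_i=1-2^{-L}$, $b_i=1$), so a geometric grid covering $[(1-\epsilon)A^\star,1]$ needs $\Theta(L/\epsilon)$ points. This still yields an FPTAS (polynomial in the input size and $1/\epsilon$), just a weakly polynomial one; but you should either state the grid size correctly, or --- better --- move the guess to the $B$ side: after guessing $b=\max_{i\in S^\star} f_i$ you have $B^\star\in[b,nb]$, so $O((\log n)/\epsilon)$ geometric grid points suffice and you recover a strongly polynomial scheme close to the paper's.
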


\begin{proof}[Proof sketch]
Hardness follows via a reduction from \textsf{PARTITION}. For the approximation scheme, assume $b=\max_{i\in\sstar} f(\{i\})$ is known (one can try all $n$ possibilities). Let $\delta=\epsilon/n$ and define $\tilde f$ by rounding down each singleton value to the nearest multiple of $\delta b$: $\tilde f(\{i\})=\left\lfloor \frac{f(\{i\})}{\delta b}\right\rfloor \delta b$, and extend additively, $\tilde f(S)=\sum_{i\in S}\tilde f(\{i\})$. Hence each $\tilde f(S)$ is a multiple of $\delta b$.  
For each threshold $x=k\delta b$, let $T_x$ minimize $\sum_{i\in S} \frac{c_i}{f(\{i\})}$ among sets with $\tilde f(S)\ge x$. The algorithm outputs $T_x$ maximizing $\left(1-\sum_{i\in T_x}\frac{c_i}{f(\{i\})}\right) x$ over all $k\in\{0,1,\ldots,\lceil n/\delta\rceil\}$.

The proof shows that only polynomially many sets $T_x$ are needed, and each can be computed in polynomial time via dynamic programming. Standard arguments then show the returned set achieves a $(1-\epsilon)$-approximation to the optimal value.
\end{proof}

\subsection{Constant-approximation for submodular and XOS rewards}
\label{sec:model2-submodular}

This section turns to submodular rewards and to the broader class of XOS rewards. The main result of \cite{DuettingEFK23} is that, for both classes, polynomial-time constant-factor approximations to the optimal contract are achievable, using value oracles (for submodular) or demand oracles (for XOS).

The approach relies on an unconventional use of ``prices'' and (approximate) demand queries to identify a set of agents that the principal should seek to incentivize, together with a novel scaling property for XOS functions and their marginals that may be of independent interest.

\begin{theorem}[constant approximation for submodular and XOS rewards~\citep{DuettingEFK23}]
\label{thm:const-sm-xos}
In multi-agent binary-action settings, with \emph{submodular} reward functions $f$, there is a polynomial-time $O(1)$-approximation using a value oracle. If $f$ is \emph{XOS}, a polynomial-time $O(1)$-approximation is achievable using a demand oracle.
\end{theorem}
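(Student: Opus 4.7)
The plan is to reduce the multi-agent contract problem to a tractable surrogate optimization over $f$ by invoking price-based queries at a polynomially large grid of ``scales'' $\alpha \in (0,1)$. The algorithm I would use enumerates $\alpha$ over this grid (say a logarithmic discretization), and for each $\alpha$ computes, approximately, a candidate set $S_\alpha \in \arg\max_{S \subseteq \agents}\{\,f(S) - \sum_{i \in S} c_i/\alpha\,\}$. For submodular $f$, this surrogate is a non-monotone submodular maximization with additive penalties, which admits a constant-factor approximation by greedy or local-search using only value queries. For XOS $f$, $S_\alpha$ is obtained directly from a single demand query at prices $p_i = c_i/\alpha$. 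The algorithm then evaluates the principal's utility $g(S_\alpha)$ under the minimum incentivizing payments and returns the best candidate.

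The bridge from ``cheap demand'' to ``cheap contract'' is the observation that whenever $i \in S_\alpha$, the selection rule (greedy's inclusion test, or the demand's optimality) certifies $f(i \mid S_\alpha \setminus \{i\}) = \Omega(c_i/\alpha)$, otherwise dropping $i$ from $S_\alpha$ would (approximately) improve the surrogate. Consequently the per-agent payment $\alpha_i = c_i/f(i \mid S_\alpha \setminus \{i\})$ needed to incentivize $S_\alpha$ is at most $O(\alpha)$, and a budget argument that picks $\alpha$ sufficiently small keeps the total payment $\sum_{i \in S_\alpha} \alpha_i$ bounded away from $1$ by an absolute constant. This yields $g(S_\alpha) = \Omega(f(S_\alpha))$ and turns the remaining task into bounding $f(S_\alpha)$ from below by $\Omega(f(\sstar))$.

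For the submodular case, the central lemma follows by calibrating the grid to the optimal contract: for $\alpha$ close to the typical share $\alpha_i^*$ in $\sstar$, one has $f(\sstar) - c(\sstar)/\alpha = \Omega(f(\sstar))$ (using $\sum_{i \in \sstar} c_i/\alpha_i^* \cdot (\alpha_i^*/\alpha) \leq f(\sstar)$ after rearrangement), and then the constant-factor surrogate guarantee transfers this lower bound to $f(S_\alpha) = \Omega(f(\sstar))$. Since $\alpha_i^* \leq 1$ for all $i \in \sstar$, some scale in a logarithmic grid will meet the required matching up to a constant factor, completing the submodular argument.

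The main obstacle is the XOS case, where $g$ can fail to be even subadditive (cf.~Figure~\ref{fig:f-vs-g}), so the submodular toolkit does not apply and ``reward drop when removing agents'' can be far larger than the analogous drop in a supporting additive function. Here I would invoke the \emph{scaling property} promised in the statement: for an XOS reward $f$ with a supporting additive function $a$ at $\sstar$, the demand at prices $c_i/\alpha$ on $f$ is comparable to the corresponding demand on $a$, and one can trim or reinterpret $S_\alpha$ down to a subset $T$ whose \emph{internal} marginals $f(i \mid T \setminus \{i\})$ remain $\Omega(c_i/\alpha)$ while $f(T) = \Omega(f(\sstar))$. Establishing this trim — pulling the surrogate's solution back through the supporting additive function without losing more than a constant factor in either the reward bound or the payment bound, and doing so using only demand-oracle access — is the delicate part, and is precisely the step that the ``novel scaling property for XOS functions and their marginals'' is designed to carry.
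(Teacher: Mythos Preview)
Your budget argument in the second paragraph is where the plan breaks. From demand optimality at prices $p_i=c_i/\alpha$ you correctly get $f(i\mid S_\alpha\setminus\{i\})\ge c_i/\alpha$ and hence $\alpha_i\le\alpha$ for each $i\in S_\alpha$, but this only yields $\sum_{i\in S_\alpha}\alpha_i\le |S_\alpha|\cdot\alpha$, and nothing controls $|S_\alpha|$. A symmetric additive instance makes the failure concrete: take $f(\{i\})=1/n$ and $c_i=\epsilon/n$ for all $i$, with $n=1/\epsilon$. The demand set at prices $c_i/\alpha$ is $\emptyset$ when $\alpha<\epsilon$ and all of $[n]$ when $\alpha>\epsilon$; in the latter case $\sum_i\alpha_i=n\epsilon=1$ and $g(S_\alpha)=0$. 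Yet the optimum takes $|\sstar|=n/2$ agents and achieves $g(\sstar)=1/4$. No scale on your grid produces a usable candidate, so the scheme does not give a constant factor. The same obstruction hits the submodular version: the approximate-demand guarantee still only certifies $\alpha_i=O(\alpha)$ per agent.

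The paper's remedy is to price by $\sqrt{c_i}$ rather than $c_i$: set $p_i=\tfrac12\sqrt{c_i\,f(\sstar)}$ (with $f(\sstar)$ guessed). Then the demand set $T$ has $f(i\mid T\setminus\{i\})\ge p_i=\Omega(\sqrt{c_i f(\sstar)})$, and Lemma~\ref{lem:stronger-condition} converts the condition $f(i\mid S\setminus\{i\})\ge\sqrt{2c_i f(S)}$ into $\sum_i\alpha_i\le\tfrac12$ via the XOS inequality $\sum_i f(i\mid S\setminus\{i\})\le f(S)$. The square root is exactly what balances the per-agent payment bound against the sum-of-marginals bound; linear prices cannot do this. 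The scaling lemma (Lemma~\ref{lem:xosscaling}) then serves a different purpose than you describe: it is not used to restore marginals after a linear-price demand, but to shrink $f(T)$ down to $\Theta(f(\sstar))$ so that the marginal lower bound matches the hypothesis of Lemma~\ref{lem:stronger-condition}. Finally, your outline is missing the decomposition step (Lemma~\ref{lem:onebigagent}) that splits off a single best agent before restricting to ``small'' agents; without it, one large agent in $\sstar$ can dominate $g(\sstar)$ while contributing little to the demand-based bound.
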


Recall that $\sstar$ is a maximizer of $g$. The goal is to compute a set $S$ with $g(S)\ge \text{const}\cdot g(\sstar)$.

The proof proceeds via several lemmas. The outline below assumes access to value or demand oracles for $f$; extending the result to submodular $f$ with value oracles alone requires only minor modifications and leverages known algorithms for approximate demand~\citep{SviridenkoVW17,Harshaw19a}; see~\cite{DuettingEFK23} for details.

The first lemma provides a useful decomposition of the benchmark.
Notably, this decomposition has found broader applicability, serving as a useful analytical tool in related settings (e.g., \citep{DuettingEFK24,FeldmanTPS25}).

Let $\agents' = \{ i \in \agents \mid \frac{c_i}{f(\{i\})} \leq \frac{1}{2} \}$ denote the set of ``small'' agents.

\begin{lemma}[Decomposition lemma]
\label{lem:onebigagent}
If $f$ is XOS (or more generally subadditive), then it holds that  $g(\sstar)\;\le\; f(\sstar\cap \agents') \;+\; \max\bigl\{0,\max_{i\in\agents} g(\{i\})\bigr\}$.
\end{lemma}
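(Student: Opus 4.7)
The plan is to reduce the inequality to an almost trivial case analysis by extracting one structural fact about $\sstar$: at most one agent of $\sstar$ can be ``big'', i.e., outside $\agents'$. Throughout, I will use only subadditivity of $f$ (stronger than XOS), matching the parenthetical in the statement.

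First, I may assume $g(\sstar) \ge 0$, since otherwise the inequality holds trivially: $f(\sstar \cap \agents')\ge 0$ and the $\max\{0,\cdot\}$ term is nonnegative. Under $g(\sstar)\ge 0$, the definition of $g$ forces $\sum_{i\in\sstar}\alpha_i \le 1$, where $\alpha_i = c_i/f(i\mid \sstar\setminus\{i\})$. By subadditivity, $f(\sstar) \le f(\sstar\setminus\{i\}) + f(\{i\})$, hence $f(i\mid\sstar\setminus\{i\}) \le f(\{i\})$, which gives the key lower bound $\alpha_i \ge c_i/f(\{i\})$ for every $i\in\sstar$. Combining, $\sum_{i\in\sstar} c_i/f(\{i\}) \le 1$.

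Now, each $i\in\sstar\setminus\agents'$ contributes more than $1/2$ to this sum, so there can be at most one such agent. This is the crux and, I expect, the main conceptual step; everything else is bookkeeping.

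Two cases remain. If $\sstar \subseteq \agents'$, then $\sstar\cap\agents' = \sstar$ and $g(\sstar)\le f(\sstar) = f(\sstar\cap\agents')$. Otherwise $\sstar\setminus\agents' = \{i_0\}$ for a unique $i_0$. Bounding $g(\sstar) \le (1-\alpha_{i_0})f(\sstar)$ (dropping the nonnegative shares of all other agents) and using $\alpha_{i_0}\ge c_{i_0}/f(\{i_0\})$ gives
\[
g(\sstar) \;\le\; \Bigl(1 - \tfrac{c_{i_0}}{f(\{i_0\})}\Bigr) f(\sstar).
\]
If the factor is negative then $g(\sstar)\le 0$ and we are back to the trivial case. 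Otherwise, subadditivity yields $f(\sstar) \le f(\sstar\cap\agents') + f(\{i_0\})$, and distributing the nonnegative factor gives
\[
g(\sstar) \;\le\; f(\sstar\cap\agents') + \Bigl(1 - \tfrac{c_{i_0}}{f(\{i_0\})}\Bigr) f(\{i_0\}) \;=\; f(\sstar\cap\agents') + g(\{i_0\}),
\]
which is at most $f(\sstar\cap\agents') + \max\{0,\max_{i\in\agents} g(\{i\})\}$, completing the proof.
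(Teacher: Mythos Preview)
Your proof is correct. The survey does not spell out a proof of this lemma (it is quoted from \cite{DuettingEFK23}), so there is little to compare against directly; that said, your argument is exactly the intended one: subadditivity gives $f(i\mid \sstar\setminus\{i\})\le f(\{i\})$, hence $\sum_{i\in\sstar} c_i/f(\{i\})\le 1$, so $\sstar$ can contain at most one ``big'' agent, and subadditivity again splits off that agent's contribution as $g(\{i_0\})$. One minor remark: under your standing assumption $g(\sstar)\ge 0$ you already have $\alpha_{i_0}\le 1$ and hence $c_{i_0}/f(\{i_0\})\le\alpha_{i_0}\le 1$, so the ``factor is negative'' subcase is in fact vacuous---but handling it does no harm.
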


Since $\max_{i} g(\{i\})$ is the optimal single-agent contract and easy to compute, attention can focus on obtaining utility $\Omega\!\left(f(\sstar\cap \agents')\right)$.

\paragraph{Computing a good set via demand oracle.}
From Lemma~\ref{lem:onebigagent}, it suffices to restrict attention to the set of small agents; for exposition, suppose the optimal value $f(\sstar)$ is known (but not $S^\star$ itself); a standard guessing routine removes this assumption~\citep{DuettingEFK23}.

Two additional lemmas connect the optimal value and marginals to costs and motivate a price-based demand approach. Clearly, $\sum_{i\in\sstar} c_i\le f(\sstar)$ (otherwise, the reward cannot compensate the incurred cost); for XOS, a stronger inequality holds:

\begin{lemma}
If $f$ is XOS, then for all $S\subseteq \sstar$, \;\;$\sum_{i\in S}\sqrt{c_i}\le \sqrt{f(S)}$.
\label{lem:stronger-sumci}
\end{lemma}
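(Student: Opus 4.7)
The plan is to exploit two standard features of XOS functions evaluated at $\sstar$ together with the optimality condition on $\sstar$, and then glue them via the Cauchy--Schwarz inequality. Specifically, I would introduce the additive clause of the XOS representation that is tight at $\sstar$, and use it both as an upper bound on each marginal $f(i \mid \sstar \setminus \{i\})$ and as a lower bound on $f(S)$ for $S \subseteq \sstar$. This is what makes XOS the ``right'' class here: an arbitrary subadditive $f$ need not admit such a clause, so one cannot expect the same inequality to hold at that level of generality.

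First I would fix an additive clause $a$ from the XOS representation of $f$ satisfying $f(\sstar) = \sum_{i \in \sstar} a(i)$. From the defining max-inequality, $f(T) \geq \sum_{i \in T} a(i)$ for every $T \subseteq \agents$. Applying this with $T = \sstar \setminus \{i\}$ yields $f(i \mid \sstar \setminus \{i\}) \leq a(i)$ for every $i \in \sstar$, while applying it with $T = S \subseteq \sstar$ gives $\sum_{i \in S} a(i) \leq f(S)$. Next I would invoke the optimality of $\sstar$: since $g(\sstar) \geq g(\emptyset) = 0$, the principal's payment fraction satisfies $\sum_{i \in \sstar} c_i / f(i \mid \sstar \setminus \{i\}) \leq 1$. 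Combining this with the marginal inequality above gives $\sum_{i \in S} c_i / a(i) \leq 1$ for every $S \subseteq \sstar$.

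The final step is a single application of Cauchy--Schwarz with weights $a(i)$:
$$\Bigl(\sum_{i \in S} \sqrt{c_i}\,\Bigr)^{\!2} \;=\; \Bigl(\sum_{i \in S} \sqrt{\tfrac{c_i}{a(i)}}\cdot\sqrt{a(i)}\,\Bigr)^{\!2} \;\leq\; \Bigl(\sum_{i \in S} \tfrac{c_i}{a(i)}\Bigr)\Bigl(\sum_{i \in S} a(i)\Bigr) \;\leq\; 1 \cdot f(S),$$
from which the claim follows by taking square roots. The only real subtlety, rather than a genuine obstacle, is handling degenerate coordinates where $a(i) = 0$: by the marginal bound this forces $f(i \mid \sstar \setminus \{i\}) = 0$ (using monotonicity), and then finiteness of $g(\sstar)$ forces $c_i = 0$ as well; such agents contribute zero to both sides and can be excluded from the Cauchy--Schwarz step. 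The trivial case $\sstar = \emptyset$ reduces to $0 \leq 0$, completing the argument.
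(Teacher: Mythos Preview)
Your argument is correct and is essentially the standard one: pick the XOS clause $a$ tight at $\sstar$, use $f(i\mid \sstar\setminus\{i\})\le a(i)$ together with $g(\sstar)\ge 0$ to get $\sum_{i\in S} c_i/a(i)\le 1$, use $\sum_{i\in S} a(i)\le f(S)$, and finish with Cauchy--Schwarz. The survey itself only states the lemma (attributing it to \citep{DuettingEFK23}) without proof, and your argument matches the approach of the original paper; your handling of the degenerate $a(i)=0$ case is also fine.
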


Next, if all marginals are sufficiently large relative to costs, the transfers are at most half of the reward:

\begin{lemma}\label{lem:stronger-condition}
If $f$ is XOS and $S$ satisfies $f(S)>0$ and $f(i\mid S\setminus\{i\}) \;\ge\; \sqrt{2c_i\,f(S)}$ for all $i\in S$, then $g(S)\ge \tfrac12 f(S)$.
\end{lemma}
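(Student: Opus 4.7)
The plan is to bound the total equilibrium share $\sum_{i \in S} \alpha_i$, where $\alpha_i = c_i / f(i \mid S \setminus \{i\})$, and show it is at most $1/2$; from this, $g(S) = (1 - \sum_{i \in S} \alpha_i) f(S) \geq \tfrac12 f(S)$ follows immediately. The hypothesis lower-bounds each marginal $f(i \mid S \setminus \{i\})$ in terms of $c_i$ and $f(S)$, so I would first use it to replace the costs in the numerator with marginals, and then aggregate the resulting marginal-only expression by exploiting a structural property of XOS functions.

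First I would square the hypothesis $f(i \mid S \setminus \{i\}) \geq \sqrt{2 c_i f(S)}$ and rearrange to obtain $c_i \leq f(i \mid S \setminus \{i\})^2 / (2 f(S))$. Substituting into $\alpha_i$ gives
\[
\alpha_i \;=\; \frac{c_i}{f(i \mid S \setminus \{i\})} \;\leq\; \frac{f(i \mid S \setminus \{i\})}{2\, f(S)}.
\]
Summing over $i \in S$, the claim reduces to establishing the purely structural inequality $\sum_{i \in S} f(i \mid S \setminus \{i\}) \leq f(S)$.

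This last inequality is where the XOS hypothesis does the real work, and it is the step I expect to be the main obstacle, since it does \emph{not} follow from monotonicity alone and the XOS definition used in the paper (maximum of additive clauses) is not directly in the right form. The plan is to translate the XOS definition into its equivalent \emph{supporting-prices} (fractional-subadditivity) characterization: for the fixed set $S$, let $f_j$ be an additive clause achieving $f(S) = f_j(S)$, and set $a_i := f_j(\{i\})$. Then $\sum_{i \in S} a_i = f(S)$ and $\sum_{i \in T} a_i = f_j(T) \leq f(T)$ for every $T \subseteq S$. Applying the latter with $T = S \setminus \{i\}$ yields $f(S \setminus \{i\}) \geq \sum_{j \neq i} a_j$, and hence $f(i \mid S \setminus \{i\}) = f(S) - f(S \setminus \{i\}) \leq a_i$. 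Summing over $i \in S$ gives $\sum_{i \in S} f(i \mid S \setminus \{i\}) \leq \sum_{i \in S} a_i = f(S)$, as required.

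Combining the two bounds yields $\sum_{i \in S} \alpha_i \leq f(S) / (2 f(S)) = 1/2$, which completes the proof. No structural assumption on $S$ beyond the stated hypothesis is used, and the argument never needs the large-agent/small-agent decomposition from Lemma~\ref{lem:onebigagent}; the XOS property enters exclusively through the marginals-sum bound above.
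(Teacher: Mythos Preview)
Your proof is correct. The paper itself does not include a proof of this lemma (it is imported from \cite{DuettingEFK23}), but your argument is the natural one: bound each $\alpha_i$ via the hypothesis, then use the XOS supporting-additive-clause characterization to show $\sum_{i\in S} f(i\mid S\setminus\{i\})\le f(S)$, which is exactly the standard route.
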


These observations suggest pricing each agent by $p_i=\tfrac12\sqrt{c_i f(\sstar)}$ and selecting a \emph{demand set} $T\in\argmax_{S}\{ f(S)-\sum_{i\in S} p_i\}$. Then, by the definition of a demand set, and~\Cref{lem:stronger-sumci}, 
$f(T)\;\ge\; f(T)-\textstyle\sum_{i\in T}p_i\;\ge\; f(\sstar)-\sum_{i\in\sstar}p_i\;\ge\;\tfrac12 f(\sstar)$. Clearly, each marginal satisfies $f(i\mid T\setminus\{i\})\ge p_i=\tfrac12\sqrt{c_i f(\sstar)}$. 

This condition almost matches the condition of~\Cref{lem:stronger-condition}, except $f(T)$ may substantially exceed $f(\sstar)$.
To address this, a novel \emph{scaling property} of XOS functions is used:

\begin{lemma}
\label{lem:xosscaling}
For any XOS $f$, set $T\subseteq\agents$, and parameters $\delta\in(0,1]$ and $0\le \downscaling < f(T)$, there is a polynomial-time algorithm, using value oracle, that computes $U\subseteq T$ with
\begin{enumerate}
    \item $(1-\delta)\downscaling \le f(U) \le \downscaling + \max_{i\in T} f(\{i\})$, and \label{eq:prop_val}
    \item $f(i\mid U\setminus\{i\}) \ge \delta\, f(i\mid T\setminus\{i\})$ for all $i\in U$. \label{eq:prop_marg}
\end{enumerate}
\end{lemma}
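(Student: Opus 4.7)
The plan rests on the XOS structure: there exists an additive function $a$ supporting $f$ at $T$, meaning $f(T)=\sum_{i\in T}a_i$ and $a(S)\le f(S)$ for every $S\subseteq T$. The supporting property combined with monotonicity gives $a_i\ge f(i\mid T\setminus\{i\})$ for every $i\in T$, so the marginals in $T$ are dominated by the additive coefficients. While $a$ itself need not be recoverable from value queries, it will guide the analysis of an algorithm that only queries $f$.

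For property (\ref{eq:prop_val}), my plan is to process the elements of $T$ one at a time, in an order favoring large marginals in $T$ (each marginal requires only two value queries), and add them greedily to an initially empty $U$ until $f(U)$ first reaches $(1-\delta)\downscaling$. Monotonicity then yields $f(U)\ge(1-\delta)\downscaling$, and subadditivity of XOS functions bounds the overshoot from the last added element by its singleton value, giving $f(U)\le(1-\delta)\downscaling+\max_{i\in T}f(\{i\})\le\downscaling+\max_{i\in T}f(\{i\})$.

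The main obstacle is property (\ref{eq:prop_marg}). Since XOS functions are not submodular, one cannot directly argue that $f(i\mid U\setminus\{i\})\ge\delta\cdot f(i\mid T\setminus\{i\})$ purely from $U\subseteq T$: marginals inside a sub-collection can drop or rise arbitrarily relative to marginals in the superset. To deal with this, I would augment the prefix construction with a repair phase that swaps out any element violating the marginal inequality for an unused element with a large marginal in $T$. The invariant $f(U)\ge a(U)=\sum_{i\in U}a_i\ge\sum_{i\in U}f(i\mid T\setminus\{i\})$ keeps the value lower bound intact as long as the swapped-in elements carry sufficient $a_i$ mass, and a discrete integer-valued potential (for instance, the rank of the most recently admitted element in the sorted order) forces termination in polynomially many iterations.

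The most delicate part will be reconciling both properties simultaneously: each repair swap could in principle disturb $f(U)$, and I expect the analysis to rely on the slack $\max_{i\in T}f(\{i\})$ in the upper bound of property (\ref{eq:prop_val}) to absorb these fluctuations, while the XOS lower bound $f(U)\ge a(U)$ prevents the value from collapsing below $(1-\delta)\downscaling$. Once termination is reached, the absence of any violating element inside $U$ yields property (\ref{eq:prop_marg}) by the stopping condition of the repair loop, while the maintained invariant gives property (\ref{eq:prop_val}). Polynomial running time follows since each iteration uses $O(|T|)$ value queries and the potential takes at most $O(|T|^2)$ distinct values.
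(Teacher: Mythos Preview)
Your repair phase can cycle. Take $T=\{1,2,3\}$ and let $f$ be XOS with the two additive clauses $(1.02,\,1.01,\,1)$ and $(0,\,1.5,\,0)$; then $f(T)=3.03$ and the marginals in $T$ are $1.02,\,1.01,\,1$ for elements $1,2,3$ respectively, so your greedy order is $1,2,3$. With $\Psi=2.9$ and $\delta=0.6$ the prefix stops at $U=\{1,2\}$, since $f(\{1\})=1.02<(1-\delta)\Psi=1.16\le f(\{1,2\})=2.03$. Now $f(1\mid\{2\})=2.03-1.5=0.53<0.6\cdot1.02$, so element $1$ is a violator; the only unused element is $3$, and swapping gives $U=\{2,3\}$. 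But then $f(3\mid\{2\})=2.01-1.5=0.51<0.6\cdot1$, so $3$ is a violator, and the only available swap returns you to $\{1,2\}$. Your proposed potential (the rank of the last admitted element) runs $2\to 3\to 1\to 3\to 1\to\cdots$ along this trajectory, so it is not monotone and cannot certify termination; bounding the number of \emph{distinct} potential values is irrelevant without monotonicity. The underlying problem is structural: a violator's small marginal in $U$ is caused by the \emph{other} members of $U$ (here element $2$, via the second clause), and swapping the violator itself does nothing to remove that obstruction.

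The paper only sketches the submodular case, but the direction it indicates is the opposite of yours: start from $U=T$ and \emph{remove} elements. Removing a violator $i$ decreases $f(U)$ by $f(i\mid U\setminus\{i\})<\delta\cdot f(i\mid T\setminus\{i\})\le\delta\,a_i$, so your inequality $a_i\ge f(i\mid T\setminus\{i\})$ is exactly the right lever --- but it is used to bound the cumulative value lost to violator-removals by $\delta$ times an appropriate $a$-mass, not to justify swap-invariance. Termination is immediate because $|U|$ strictly decreases, and interleaving these cheap violator-removals with ordinary removals (to bring $f(U)$ down toward $\Psi$) is how both properties are obtained at once. Your bottom-up-plus-swap scheme does not get there.
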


For submodular $f$,~\Cref{lem:xosscaling} is immediate by removing elements (in any order) until the value drops below $\Psi$.
The set obtained this way would fulfill $\downscaling \leq f(U) \leq \downscaling + \max_{i\in T} f(\{i\})$ and $f(i \mid U \setminus \{i\}) \geq f(i \mid T \setminus \{i\})$ for all $i \in U$.
For XOS, the statement requires relaxation via the parameter $\delta \in (0, 1]$.
Setting $\downscaling=\tfrac{1}{32}f(\sstar)$ and using~\Cref{lem:xosscaling} with $\delta=\tfrac12$ yields a subset $U\subseteq T$ with $f(i\mid U\setminus\{i\}) \;\ge\; \tfrac12 f(i\mid T\setminus\{i\}) \;\ge\; \tfrac12 p_i \;=\; \tfrac14 \sqrt{c_i f(\sstar)} \;\ge\; \sqrt{2c_i f(U)}$, so~\Cref{lem:stronger-condition} gives $g(U)\ge \tfrac12 f(U)\ge \tfrac{1}{128} f(\sstar)\ge \tfrac{1}{128} g(\sstar)$.

\subsection{Hardness results}
\label{sec:model2-lower-bound}

\subsubsection{Hardness results for submodular and XOS rewards}
\label{sec:model2-hardness-sm-xos}

Work on the complexity of optimal contracts in the multi-agent setting sparked a line of increasingly strong lower bounds, culminating in \citep{DuettingEFK24}. That result establishes the tightness of~\Cref{thm:const-sm-xos}, showing that, even with both value and demand oracles, any algorithm limited to a sub-exponential number of queries cannot beat a constant-factor approximation for submodular rewards.

\begin{theorem}[Submodular, hardness of approximation~\citep{DuettingEFK24}]
\label{thm:upper}
In multi-agent binary-action settings, with submodular rewards, there exists a constant $\eta>1$ such that any (possibly randomized) algorithm using a sub-exponential number of value and demand queries returns an $\eta$-approximate contract with only exponentially small probability in $n$.
\end{theorem}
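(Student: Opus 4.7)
The plan is to apply Yao's minimax principle and prove the lower bound against deterministic algorithms by constructing a distribution over submodular instances in which a hidden ``good'' set $T$ is planted. The underlying template is the equal-revenue plus sparse-demand paradigm used in the proof sketch of Theorem~\ref{thm:hardness-submodular-demand} for the single-agent case, now adapted so that the planted set not only has large reward but also has sufficiently large marginals, so that the required agent payments do not eat into the principal's utility.

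Concretely, I would fix a base submodular function $f_0$ on $[n]$ (for instance, a capped-rank or capped-additive function with a plateau structure) together with additive costs $c_i$, chosen so that $g_0(S) \le v_{\text{low}}$ for every $S\subseteq[n]$. For a uniformly random subset $T\subseteq[n]$ of carefully chosen size $k = \Theta(n)$, I would define $f_T$ to coincide with $f_0$ except on a ``thin'' neighborhood of $T$ (e.g., sets $S$ with $|S\cap T|$ above some threshold), where $f_T(S)$ is boosted by an amount tuned so that $f_T$ remains monotone submodular, the marginal $f_T(i\mid T\setminus\{i\})$ is uniformly bounded below by a constant for each $i\in T$, and therefore $g_T(T) \ge v_{\text{high}}$ with $v_{\text{high}}/v_{\text{low}}\ge \eta$ for some constant $\eta>1$ independent of $n$.

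The key technical step is establishing indistinguishability. I would verify the following two properties of $f_T$: (i) for every set $S$, $f_T(S) \ne f_0(S)$ only when $|S\cap T|$ exceeds the threshold, which by a Chernoff bound happens for a uniformly random $T$ only with probability exponentially small in $n$; and (ii) for every price vector $p\in\reals^n_{\ge 0}$, the demand set under $f_T$ coincides with the demand set under $f_0$ except when the demand already lies in the planted neighborhood, again an exponentially rare event over $T$. Property (ii) is exactly the sparse-demand condition invoked in Theorem~\ref{thm:hardness-submodular-demand}, and it hinges on designing the base function $f_0$ so that, up to a small additive slack, only polynomially many sets come near to maximizing $f_0(S)-\sum_{i\in S}p_i$ for any given $p$. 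A union bound over any sub-exponential collection of value and demand queries then shows that, with probability $1-2^{-\Omega(n)}$ over $T$, the transcript of the algorithm on $f_T$ is identical to its transcript on $f_0$.

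Conditioned on this indistinguishability event, the algorithm's output contract $(\contract,S)$ depends only on $f_0$, and hence satisfies $g_T(S) \le g_0(S) + o(1) \le v_{\text{low}}$, whereas the planted contract $\alpha_i = c_i/f_T(i\mid T\setminus\{i\})$ for $i\in T$ incentivizes $T$ and achieves $g_T(T) \ge v_{\text{high}} \ge \eta\cdot v_{\text{low}}$. This yields the claimed lower bound. The main obstacle I anticipate is the joint design of the boost: it must be sharp enough to create the constant-factor gap in the $g$-value, smooth enough to preserve submodularity globally, and localized enough to satisfy sparse demand. Unlike the single-agent case, controlling $g$ requires controlling each individual marginal $f_T(i\mid T\setminus\{i\})$ from below uniformly over $i\in T$, so the plateau construction must be symmetric across the agents in $T$ and robust to removing any single element, which constrains how the bonus can be shaped.
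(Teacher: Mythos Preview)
Your overall scaffolding (Yao's principle, a random planted set $T$, and an indistinguishability argument) matches the paper, but the core construction you propose has a genuine gap, and the paper's proof takes a route that is structurally different in two key respects.

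\medskip
\textbf{The gap: boosting the hidden set leaks through demand queries.}
Your step~(ii) asserts that the demand of $f_T$ coincides with that of $f_0$ except when the $f_0$-demand ``already lies in the planted neighborhood,'' and that this is exponentially unlikely over $T$. This is where the argument breaks. To create a constant-factor gap in $g$, your boost must raise the marginals $f_T(i\mid T\setminus\{i\})$ uniformly over $i\in T$; equivalently, $f_T(S)-f_0(S)$ must be of constant order whenever $|S\cap T|$ exceeds the threshold. But then there exist price vectors $p$ (e.g., uniform small prices, or prices that are low on a superset of $T$) for which the maximizer of $f_T(S)-p(S)$ is \emph{pulled into} the planted neighborhood even though the maximizer of $f_0(S)-p(S)$ is not. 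In other words, it is not that the demand ``happens'' to fall near $T$; the boost actively drags it there. A single well-chosen demand query can therefore reveal $\Theta(n)$ bits about $T$. The paper explicitly flags this obstruction: the ``boost a good set with high marginals'' idea (the approach of \cite{EzraFS24}) is effective against value queries but not against demand queries, precisely because raising marginals is a first-order change to $f$ that demand queries detect. The sparse-demand property you invoke from Theorem~\ref{thm:hardness-submodular-demand} does not rescue this: in the single-agent construction, the planted set is special for the \emph{principal's} utility $(1-\alpha)f(S_\alpha)$ at a specific $\alpha$, without inflating $f$-values themselves; here you must inflate $f$-values (the marginals enter the payment formula), and that inflation is exactly what demand queries probe.

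\medskip
\textbf{How the paper's construction differs.}
Rather than boosting $f$ on $T$, the paper samples $T$ of size $k\approx n/5$ and \emph{reduces} the value of many other subsets, so that $T$ becomes cheap to incentivize relative to the rest without acquiring an anomalously high $f$-value. The gap in $g$ is created by making the complement worse, not the planted set better. Second, and crucially, the paper does not rely on a sparse-demand property at all: for the constructed family of instances, it shows that any demand query can be \emph{simulated} using polynomially many value queries. Hence the demand-oracle lower bound follows directly from the value-oracle lower bound. Finally, the counting step is also different from your ``transcript equals the $f_0$-transcript'' argument: the paper shows that any $\eta$-approximate contract must incentivize a set of size at most $k$ containing at least $k/2$ elements of $T$, and the number of such sets is an exponentially small fraction of $\binom{n}{k}$, which yields the exponential failure probability directly.
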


This strengthens two prior lower bounds: (i) no PTAS for submodular $f$ under value-oracle access alone (unless \textsf{P}$=$\textsf{NP})~\citep{EzraFS24}; and (ii) no PTAS for XOS $f$ under demand-oracle access~\citep{DuettingEFK23}.

New ideas are required to handle submodular $f$ with both oracle types. Specifically, the construction in~\citep{DuettingEFK23} hides a single ``good'' set whose subsets appear non-attractive --- an approach that relies on the non-monotonicity of XOS marginals and does not carry over to submodular functions. Conversely,~\cite{EzraFS24} hide a small ``good'' set with high value and high marginals, whose value is significantly higher than a random set of the same size --- effective against value queries, but not against demand queries.

The proof in \citep{DuettingEFK24} instead samples a hidden set $T$ of ``good'' agents of size $k\approx n/5$ (with the remaining agents ``bad'') and reduces the value of many subsets so that $T$ becomes cheaper to incentivize and thus more attractive for the principal. Any $\eta$-approximation must incentivize a set of size at most $k$ containing at least $k/2$ good agents; however, only exponentially few such sets exist relative to $\binom{n}{k}$. This yields an exponential lower bound under value queries. For the constructed instances, demand queries can be simulated with polynomially many value queries, extending the hardness to algorithms with both oracle types.


\begin{table}[t]
\fontsize{10pt}{12pt}
\selectfont
\scalebox{0.9}{
\begin{tabular}{|
>{\columncolor[HTML]{C0C0C0}}c |c
>{\columncolor[HTML]{EFEFEF}}c |
>{\columncolor[HTML]{EFEFEF}}c 
>{\columncolor[HTML]{FFFFC7}}c |}
\hline
\textbf{\begin{tabular}[c]{@{}c@{}}Multiple\\ agents\end{tabular}}    & \multicolumn{2}{c|}{\cellcolor[HTML]{C0C0C0}\textbf{Value Oracle}}                                                                                                                                                          & \multicolumn{2}{c|}{\cellcolor[HTML]{C0C0C0}\textbf{Value and Demand Oracle}}                                                                                                                                     \\ \hline
\multicolumn{1}{|l|}{\cellcolor[HTML]{C0C0C0}}                        & \multicolumn{1}{c|}{\cellcolor[HTML]{C0C0C0}\textbf{\begin{tabular}[c]{@{}c@{}}Upper bound \\ (pos)\end{tabular}}} & \cellcolor[HTML]{C0C0C0}\textbf{\begin{tabular}[c]{@{}c@{}}Lower bound\\ (neg)\end{tabular}}           & \multicolumn{1}{c|}{\cellcolor[HTML]{C0C0C0}\textbf{\begin{tabular}[c]{@{}c@{}}Upper bound \\ (pos)\end{tabular}}} & \cellcolor[HTML]{C0C0C0}\textbf{\begin{tabular}[c]{@{}c@{}}Lower bound\\ (neg)\end{tabular}} \\ \hline
\textbf{Additive}                                                     & \multicolumn{1}{c|}{\cellcolor[HTML]{FFFFC7}\begin{tabular}[c]{@{}c@{}}FPTAS\\ \cite{DuettingEFK23}\end{tabular}}                & \begin{tabular}[c]{@{}c@{}}OPT is\\ NP-hard\end{tabular}                                               & \multicolumn{1}{c|}{\cellcolor[HTML]{EFEFEF}FPTAS}                                                                 & \begin{tabular}[c]{@{}c@{}}OPT is\\ NP-hard\\ \cite{DuettingEFK23}\end{tabular}                            \\ \hline
\textbf{GS}
& \multicolumn{1}{c|}{\cellcolor[HTML]{EFEFEF}\begin{tabular}[c]{@{}c@{}}Constant \\ approx\end{tabular}}            & \begin{tabular}[c]{@{}c@{}}OPT is\\ NP-hard\end{tabular}                                               & \multicolumn{1}{c|}{\cellcolor[HTML]{EFEFEF}\begin{tabular}[c]{@{}c@{}}Constant\\ approx\end{tabular}}             & \cellcolor[HTML]{EFEFEF}\begin{tabular}[c]{@{}c@{}}OPT is\\ NP-hard\end{tabular}             \\ \hline
\textbf{\begin{tabular}[c]{@{}c@{}}Sub-\\ modular\end{tabular}}& \multicolumn{1}{c|}{\cellcolor[HTML]{FFFFC7}\begin{tabular}[c]{@{}c@{}}Constant\\ approx\\ \cite{DuettingEFK23}\end{tabular}}    & 
\multicolumn{1}{c|}{\cellcolor[HTML]{EFEFEF}\begin{tabular}[c]{@{}c@{}}No PTAS\\ 
\cite{EzraFS24}\\
\cite{DuettingEFK24}
\end{tabular}}  
& \multicolumn{1}{c|}{\cellcolor[HTML]{EFEFEF}\begin{tabular}[c]{@{}c@{}}Constant\\ approx\end{tabular}}             & 
\cellcolor[HTML]{FFFFC7}\begin{tabular}[c]{@{}c@{}}No PTAS\\ 
\cite{DuettingEFK24}\end{tabular}  
\\ \hline
\textbf{XOS}                                                          & \multicolumn{1}{c|}{\cellcolor[HTML]{EFEFEF} $O(n)$-approx}                                                                      & \cellcolor[HTML]{FFFFC7}\begin{tabular}[c]{@{}c@{}}No better\\ than $\Omega(n^{1/6})$\\ \cite{EzraFS24}\end{tabular} & \multicolumn{1}{c|}{\cellcolor[HTML]{FFFFC7}\begin{tabular}[c]{@{}c@{}}Constant\\ approx\\ \cite{DuettingEFK23}\end{tabular}}    & \begin{tabular}[c]{@{}c@{}}No PTAS\\ \cite{DuettingEFK23}\end{tabular}                                     \\ \hline
\textbf{\begin{tabular}[c]{@{}c@{}}Sub-\\ additive\end{tabular}}& 
\multicolumn{1}{c|}{\cellcolor[HTML]{FFFFC7}\begin{tabular}[c]{@{}c@{}}$O(n)$-approx\\
\cite{DuttingFT24-survey}
\end{tabular}}
& 
\begin{tabular}[c]{@{}c@{}}No better\\ than $\Omega(n^{1/6})$\end{tabular}                                  & \multicolumn{1}{c|}{\cellcolor[HTML]{EFEFEF} $O(n)$-approx}                                                                      & \begin{tabular}[c]{@{}c@{}}No better\\ than $\Omega(n^{1/2})$\\ \cite{DuettingEFK23}\end{tabular}               \\ \hline \hline
\textbf{\begin{tabular}[c]{@{}c@{}}Super-\\ modular\end{tabular}}     & 
\multicolumn{1}{c|}{\cellcolor[HTML]{FFFFFF}}                                                                     & \begin{tabular}[c]{@{}c@{}}No constant\\ approx\end{tabular}                                           & \multicolumn{1}{c|}{\cellcolor[HTML]{FFFFFF}}                                                                              & \begin{tabular}[c]{@{}c@{}}No constant\\ approx\\(if \textsf{P}$\neq$\textsf{NP})\\ {\footnotesize\citet{VuongDPP23}}\end{tabular}                      \\ \hline
\end{tabular}
}
\caption{
Summary of approximation guarantees in the multi-agent binary-action setting.
Left: value-oracle access. Right: value- and demand-oracle access.
For each oracle model, we list known upper bounds (positive results) and lower bounds (hardness results) on achievable approximation ratios for different classes of reward functions.
Yellow entries indicate the main results; gray entries follow by closure (positive results extend upward and to the right, while negative results extend downward and to the left).
}
\label{tab:multi-agent}
\end{table}

For XOS rewards, \cite{EzraFS24} prove that no algorithm can do better than $O(n^{1/6})$-approximation with poly-many value queries, thus establishing a separation between the power of value and demand queries for XOS rewards. 

\subsubsection{Hardness results for subadditive and supermodular rewards}
\label{sec:model2-hardnesssubadd-supermod}

For the broader class of subadditive functions, \cite{DuettingEFK23} show that no algorithm can guarantee better than $\Omega(\sqrt{n})$ approximation. On the positive side, a simple $O(n)$-approximation is obtained with value oracles by selecting the best single-agent contract $\max_{i\in\agents} g(\{i\})$: it is shown that, for any $S$ with $g(S)\ge 0$, one has $g(S)\le \sum_{i\in S} g(\{i\}) \le n\cdot \max_{i\in S} g(\{i\})$. To see this, note that for any $i \in S$, $f(S) \leq f(S\setminus \{i\}) + f(\{i\})$, so $\frac{c_i}{f(S)-f(S\setminus{i})} \geq \frac{c_i}{f(\{i\})}$. Moreover, $f(S) \leq \sum_{i\in S}f(\{i\})$. So, $g(S) = (1-\sum_{i \in S} \frac{c_i}{f(S)-f(S\setminus\{i\})}) f(S) \leq \sum_{i\in S} (1-\sum_{i \in S} \frac{c_i}{f(\{i\})})f(\{i\}) \leq \sum_{i\in S} (1- \frac{c_i}{f(\{i\})})f(\{i\}) = \sum_{i \in S}g(\{i\})$.

For supermodular rewards, no poly-time $O(1)$-approximation, nor additive FPTAS, exists, unless \textsf{P}$=$\textsf{NP}~\citep{VuongDPP23}.

\subsection{Extensions and related models}
\label{sec:model2-open}

\subsubsection{Budget constraints and additional objective functions}
\label{sec:model2-budgets}
\cite{FeldmanTPS25,Aharoni0T25} study the multi-agent binary-action model with XOS reward functions under a budget constraint on the principal, where total payments to agents cannot exceed budget $B \in [0,1]$. Their work also broadens the scope of objectives beyond the principal's utility: 
They introduce a wide class of objectives, termed BEST (BEyong STandard) objectives, which include the principal's utility, expected reward, and social welfare, and show that an algorithm for any BEST objective under any budget can be converted into one for any other BEST objective under any other budget with only a constant-factor loss.
Together with the constant-factor approximation for unbudgeted principal's utility from \citep{DuettingEFK23}, this yields $O(1)$-approximations for many natural objectives under arbitrary budget constraints. 
Finally, the notion of the \emph{price of frugality} is introduced, which quantifies the efficiency loss due to budget restrictions, and is shown to scale asymptotically as $\Theta(1/B)$, potentially up to the cost of incentivizing a single agent.

In a related variant, \cite{gong4581020principal} study a principal who is constrained by contracting with at most $k$ agents and give a poly-time constant-factor approximation for the principal's utility, under XOS $f$, with value and demand oracles.

\subsubsection{Beyond pure Nash equilibrium}
\label{sec:model2-beyond-ne}

Studies on multi-agent contracts has focused primarily on pure Nash equilibria, but this focus is not without loss of generality: the principal may strictly benefit from recommending more complex (possibly correlated) distributions over actions while preserving incentives. This fact was already observed by \cite{BabaioffFN10}, who presented an instance with (essentially) a submodular reward function, where the principal can strictly benefit by inducing a mixed Nash equilibrium.
\cite{mixed_eq_working_paper} initiate a more elaborate study, comparing the principal's utility across pure, mixed, correlated (CE), and coarse-correlated (CCE) equilibria. 
We defer the presentation of these results to~\Cref{sec:model3-beyond-ne}, since most of them extend to the more general setting that goes beyond binary actions.

\subsubsection{Fairness considerations}
\label{sec:model2-fairness}

Fairness considerations in contract design often manifest through de-facto constraints --- pay-ratio disclosures and surtaxes, limits on allowable or deductible compensation (e.g., in procurement and tax law), and firm-level pay-equity policies.
\cite{fair-contracts-working} study fairness in the multi-agent, binary-action model with submodular rewards, imposing either a $\gamma$-equitable-pay constraint (all agents receive the same success payment, up to a factor of $\gamma$) or a $\gamma$-equitable-utility constraint (all agents receive the same utility, up to a factor of $\gamma$), where $\gamma$ is some constant. They formalize the \emph{price of fairness} --- the ratio between the principal's optimal utility with and without the constraint --- and establishes a tight bound of $\Theta(\ln n/\ln\ln n)$ on this ratio: the lower bound already holds for additive rewards, while the matching upper bound extends to submodular rewards.

Their main algorithmic result is a constant factor approximation to the optimal $\gamma$-equitable-pay and to the optimal $\gamma$-equitable-utility contracts, for any $\gamma$. 
They first present an 
$O(1)$-approximation algorithm for the optimal 
1-equitable-pay contract, and then prove that the utilities from the optimal 1-equitable-pay contract and the optimal $\gamma$-equitable-pay contract differ by only a constant factor.
To approximate the optimal $\gamma$-equitable-utility contract, they show that any $\gamma$-equitable-pay contract can be transformed into a $\gamma$-equitable-utility contract, and vice versa, in polynomial-time with value queries, while only losing a constant factor in the principal's utility.

\subsubsection{Observable individual outcomes}
\label{sec:model2-observable-outcomes}

\cite{CastiglioniM023} and \cite{GoelW2025} analyze a distinct multi-agent framework where each agent's effort produces its own observable outcome, and contracts can condition payments on individual outcomes rather than on a single aggregated project outcome. \cite{CastiglioniM023} explore the algorithmic tractability of designing contracts when the principal's reward exhibits structural properties capturing complementarities across agents, or diminishing returns. They establish hardness in the general case but identify families of reward functions that admit efficient approximation schemes. For observable individual outcomes, \cite{GoelW2025} examine a budget-constrained environment, in which the principal must distribute a fixed reward budget among agents. They prove that optimal contracts in this model belong to the class of \emph{Luce} contracts, where each agent is assigned a weight and the available budget is divided proportionally among the successful agents with the highest weights. This characterization narrows the search space for optimal solutions and ensures desirable properties such as minimal variance in payments among feasible contracts.

\subsubsection{Multiple projects}
\label{sec:multi-project}

\cite{AlonCETLT25} introduce a \emph{multi-project} model, where a principal allocates $n$ agents across multiple projects. Each agent may join at most one project and incurs project-specific costs, while each project has a binary outcome (success or failure) determined by a combinatorial success function mapping the set of participating agents to a success probability. Their main result is a polynomial-time constant-factor approximation: for submodular rewards using value oracles, and for XOS rewards using both value and demand oracles. The algorithm combines bipartite matching for projects with ``dominant'' agents and an LP-based approach with a novel approximate demand oracle for projects without dominant agents.

\subsection{Open problems}
\label{sec:model2-open-problems}
Several open questions remain in the multi-agent binary-action setting, including:
\begin{itemize}
    \item PTAS/FPTAS for GS rewards: Does the optimal contract problem admit a PTAS (or FPTAS) under GS rewards? In particular, can the FPTAS for additive rewards be adopted to yield a $(1+\epsilon)$-approximation for GS, or even for meaningful subclasses such as OXS? At present, the hardness result that rules out a PTAS is for submodular rewards alone.
    \item Tight bounds for subadditive rewards: What is the best achievable approximation under subadditive rewards? The current bounds leave a gap between a lower bound of $\Omega(\sqrt{n})$ and an upper bound of $O(n)$.
    \item Beyond binary outcomes: Extending the analysis beyond binary (success/failure) outcomes remains open, paralleling the single-agent multi-action discussion in~\Cref{sec:model1-open}.
\end{itemize}

\section{Combined setting: multi-agent multi-actions}
\label{sec:model3}

A natural generalization of the previous models is obtained by allowing multiple agents, each capable of taking multiple actions. This setting, introduced in~\citep{DuettingEFK24}, combines the challenges of both the multi-agent binary-action model and the single-agent multi-action model, and is referred to as the multi-agent multi-action model. The resulting framework presents qualitatively new challenges, since it lacks structural properties that were central to the tractability of the special cases.

\subsection{Model}
\label{sec:model3-model}   

In the multi-agent multi-action model, a principal interacts with $n$ agents. Each agent $i$ chooses a subset $S_i \subseteq \actions_i$ of available actions, and the overall profile of actions is $S = \bigcupdot_i S_i$. Agents may also choose the empty set. Each action $j$ has a cost $c_j$ to the corresponding agent, and costs are additive across actions.  

The outcome of the project is either success or failure, with the principal's reward normalized to $1$ in case of success and $0$ otherwise. The success probability is determined by a monotone function $f(S)$ over the chosen actions of all agents. As before, we refer to $f$ as the reward function.

The principal cannot observe the specific actions, only the final outcome, and therefore uses a linear contract $\contract = (\alpha_1,\ldots,\alpha_n)$ specifying reward shares to each agent $i$ conditional on success. Given a contract $\contract$ and an action profile $S$, the expected utility of agent $i$ is $\alpha_i f(S) - c(S_i)$, where $S_i = S \cap \actions_i$.  

A profile $S$ constitutes a (pure) Nash equilibrium under contract $\contract$ (we also say that $S$ is induced or incentivized by $\contract$) if, for every agent $i$, the chosen set $S_i \subseteq \actions_i$ maximizes $i$’s utility given the actions of the others, $S_{-i} = \bigcupdot_{j \in \agents \setminus {i}} S_j$; that is, $S_i \in \argmax_{T \subseteq \actions_i} \{\alpha_i f(T,S_{-i}) - c(T)\}$.

The principal's objective is to design a contract that maximizes her expected utility, $(1-\sum_i \alpha_i) f(S)$, where $S$ is a Nash equilibrium of $\contract$.

\paragraph{Challenges.}
This model unifies the single-agent multi-action ($n=1$) and multi-agent binary-action ($\actions_i = \{i\}$) settings. Each special case involves searching over exponentially many candidate sets, but the character of the resulting challenges varies across the models. 
In the former, one of the main challenges lies in determining which sets of actions can be incentivized, while in the latter, all subsets of agents can be incentivized but the problem lies in finding the optimal one. In both cases, monotonicity properties play a crucial role: in the single-agent setting, success probability increases with $\alpha$, and in the multi-agent binary-action setting with submodular rewards, dropping an agent does not discourage others from exerting effort.  

The combined model does not preserve these monotonicity properties. Deciding whether a given action profile is implementable becomes substantially harder, and even if it is, it is not clear what is the best way to incentivize it.
In particular, the combined model loses the monotonicity properties that were instrumental in solving the special cases. For example, lowering the payment to one agent can lead another agent to reduce effort, even under submodular rewards~\citep{DuettingEFK24}.  

Another challenge is the possibility of multiple equilibria with potentially large gaps between them. Unlike the special cases, where equilibria are essentially unique (for submodular $f$, under mild assumptions), a single contract here may admit multiple equilibria. In certain submodular examples, the best equilibrium of the best contract provides significantly higher utility than the worst equilibrium, yet no contract guarantees more than the lower utility across all equilibria~\citep{DuettingEFK24}. This establishes intrinsic limits on approximation guarantees if robustness against all equilibria is required.

\subsection{Equilibrium existence}
\label{sec:model3-eq-existence}

Despite these challenges, every contract in this model admits at least one equilibrium, since the induced game among the agents is a potential game~\citep{VuongDPP23,DuettingEFK24}. A potential function can be defined as $\phi(S) = f(S) - \sum_{i \in \agents} c(S_i)/\alpha_i$ (with the convention that if $\alpha_i = 0$, $c(S_i)/\alpha_i = \infty$ when $c(S_i) > 0$, and $c(S_i)/\alpha_i = 0$ when $c(S_i) = 0$). Hence, existence of pure Nash equilibria is guaranteed. 
Moreover, the specific form of the potential function implies that an equilibrium $S$ can be identified using a single demand query to $f$.

\begin{proposition}[Potential game~\citep{VuongDPP23,DuettingEFK24}]
\label{prop:potential-function}
    In any multi-agent multi-action setting, every contract $\contract$ admits at least one pure Nash equilibrium.
\end{proposition}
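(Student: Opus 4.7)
The plan is to prove \Cref{prop:potential-function} by showing that the induced game among the agents is an \emph{ordinal potential game} with the potential function $\phi$ already identified in the excerpt, namely $\phi(S) = f(S) - \sum_{i \in \agents} c(S_i)/\alpha_i$. Once this is established, existence of a pure Nash equilibrium is immediate: the joint strategy space $\prod_i 2^{\actions_i}$ is finite, so $\phi$ attains a maximum, and any maximizer is a Nash equilibrium since no unilateral profitable deviation can exist (such a deviation would strictly increase $\phi$).

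First, I would fix an agent $i$ with $\alpha_i > 0$, an opponent profile $S_{-i}$, and compare the profiles $S = (S_i, S_{-i})$ and $S' = (S_i', S_{-i})$. A direct computation gives
\begin{align*}
u_i(S') - u_i(S) &= \alpha_i\bigl(f(S') - f(S)\bigr) - \bigl(c(S_i') - c(S_i)\bigr), \\
\phi(S') - \phi(S) &= \bigl(f(S') - f(S)\bigr) - \bigl(c(S_i') - c(S_i)\bigr)/\alpha_i,
\end{align*}
so $u_i(S') - u_i(S) = \alpha_i \bigl(\phi(S') - \phi(S)\bigr)$. Because $\alpha_i > 0$, a deviation strictly improves $i$'s utility iff it strictly increases $\phi$; in particular, any $S$ at which $\phi$ is maximized admits no profitable unilateral deviation by agent $i$.

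The main (and essentially the only) subtlety is the edge case $\alpha_i = 0$. Here the equivalence above is not directly meaningful because we are dividing by zero, so I would handle this case separately using the convention stated in the excerpt: $c(S_i)/\alpha_i = \infty$ whenever $c(S_i) > 0$, and $0$ when $c(S_i) = 0$. Under this convention, any maximizer $S$ of $\phi$ must satisfy $c(S_i) = 0$; in particular, taking $S_i = \emptyset$ is optimal for agent $i$ (whose own utility is $-c(S_i)$, uniquely maximized among cost-zero subsets by any such $S_i$). Thus agent $i$'s best-response condition is also met at any maximizer of $\phi$. Combining the two cases, every maximizer of $\phi$ is a pure Nash equilibrium, proving the proposition.

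Finally, to justify the remark that an equilibrium can be obtained by a single demand query, I would observe that $\phi$ is already of the form $f(S) - \sum_{j \in S} p_j$, where for each action $j \in \actions_i$ we set $p_j = c_j/\alpha_i$ (with the convention above when $\alpha_i = 0$). Maximizing this expression over $S \subseteq \bigcupdot_i \actions_i$ is precisely the definition of a demand query to $f$ at prices $p$, as introduced in \Cref{sec:oracle}. Hence one demand query returns a maximizer of $\phi$, which by the argument above is a Nash equilibrium of the contract $\contract$.
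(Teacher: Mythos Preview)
Your proposal is correct and follows essentially the same approach as the paper: both establish the result via the potential function $\phi(S) = f(S) - \sum_{i} c(S_i)/\alpha_i$ and observe that a maximizer of $\phi$ (obtainable via a single demand query) is a pure Nash equilibrium. The paper states this tersely in the paragraph preceding the proposition, whereas you spell out the verification that $u_i(S') - u_i(S) = \alpha_i\bigl(\phi(S') - \phi(S)\bigr)$ for $\alpha_i>0$ and handle the $\alpha_i=0$ edge case explicitly; this is exactly the intended argument, just with the details filled in.
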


\subsection{Constant-approximation for submodular rewards, with value and demand oracles}
\label{sec:model3-const}

A central result of~\cite{DuettingEFK24} is that, for submodular reward functions, there exists a polynomial-time algorithm, with value and demand oracles, that computes a contract yielding a constant-factor approximation to the optimal principal's utility. Remarkably, this guarantee holds for \emph{all} equilibria of the contract, ensuring robustness to equilibrium selection.

\begin{theorem}[Constant approximation for submodular rewards, with value and demand oracles~\citep{DuettingEFK24}]
\label{thm:const-approx-multi-multi}
In multi-agent multi-action settings, with submodular reward functions and access to value and demand oracles, there exists a polynomial-time algorithm that computes a contract $\contract$ such that every equilibrium of $\contract$ yields a constant-factor approximation to the optimal principal's utility.
\end{theorem}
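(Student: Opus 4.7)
The plan is to follow the template of~\Cref{sec:model2-submodular}, but with two new twists forced by the combined model: each agent now controls a whole subset, so the ``single-agent'' leg of the decomposition lemma reduces to a nontrivial single-agent multi-action subproblem rather than to a scalar; and a single contract may admit several equilibria, so I need an argument that is uniform over all Nash equilibria. First I would prove a multi-action analogue of~\Cref{lem:onebigagent}: for submodular $f$, $g(\sstar) \le \max_{i \in \agents} g_i^{\star} + f(\sstar \cap \actions')$, where $g_i^{\star}$ is the optimal principal utility obtainable from contracting with agent $i$ alone, and $\actions' = \{j : c_j/f(\{j\}) \le 1/2\}$ is the set of ``small'' actions (pooled across agents).

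For the first term I would run, for each agent $i$ separately, the single-agent demand-oracle FPTAS of~\Cref{thm:FPTAS-monotone} on the restricted function $f(\cdot \cup \emptyset_{-i})$ with $\alpha_j = 0$ for $j \ne i$. Since only one agent is incentivized, the induced game is trivially uniquely-solved, so ``every equilibrium'' is automatic on this side. For the second term I would lift the price-based approach: guess $v \in \{(1+\eta)^k\}$ with $v \approx f(\sstar \cap \actions')$, set prices $p_j = \tfrac12 \sqrt{c_j\,v}$ on every $j \in \actions'$, call the demand oracle of $f$ once to obtain $T$, and then apply the submodular scaling lemma from~\Cref{sec:model2-submodular} (\Cref{lem:xosscaling} specialized to submodular $f$) to extract $U \subseteq T$ with $f(U) = \Theta(v)$ and $f(j \mid U \setminus \{j\}) \ge \tfrac12 p_j$ for every $j \in U$. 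Finally, set
\[
\alpha_i \;=\; 2 \max_{j \in U_i}\; \frac{c_j}{f(j \mid U \setminus \{j\})} \;\le\; 4\sqrt{c(U_i)/v},
\]
where $U_i = U \cap \actions_i$; by~\Cref{lem:stronger-sumci}-style bounds on $\sum_i \sqrt{c(U_i)}$, this keeps $1 - \sum_i \alpha_i$ bounded away from $0$ by a constant.

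The heart of the proof, and what makes the combined setting genuinely different, is the robustness step. Fix any pure NE $S'$ of the designed contract $\contract$, and write $S_i' = S' \cap \actions_i$. The NE inequalities give, for every $i$ and every $j \in S_i'$, $f(j \mid S' \setminus \{j\}) \ge c_j / \alpha_i$, and for every $j \in \actions_i \setminus S_i'$ priced under our rule, $f(j \mid S') \le c_j / \alpha_i$. Using submodularity, I would charge every $j \in U \setminus S'$ to its marginal inside $S' \cup U$, combine this with the marginal guarantee from the scaling lemma to show $f(S') \ge c \cdot f(U)$ for an absolute constant $c$, and then conclude that $(1-\sum_i \alpha_i) f(S') = \Omega(f(U)) = \Omega(v) = \Omega(g(\sstar))$. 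Taking the better of the two candidate contracts (the per-agent single-agent FPTAS output and the small-actions contract above) then yields a constant-factor guarantee that holds at every equilibrium.

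The main obstacle I expect is precisely this last robustness step. Unlike~\Cref{sec:model2-submodular}, where $S'$ can differ from the intended set only by individual agents flipping their single bit, here each agent can reshuffle its entire $S_i'$; submodularity of $f$ alone does not immediately translate the agent-by-agent best-response conditions into a global lower bound on $f(S')$. I anticipate needing a careful argument that uses the potential function $\phi$ of~\Cref{prop:potential-function} together with the multiplicative slack built into the scaling lemma, so that any reshuffling that drops too much value from $U$ is contradicted by the potential inequality $\phi(S') \ge \phi(U_i, S'_{-i})$ for some agent $i$. Getting this bookkeeping tight --- and in particular ensuring that the marginal lower bounds of the scaling lemma survive the replacement of $U$ by $S'$ under the demand oracle --- is where I expect most of the technical work to lie.
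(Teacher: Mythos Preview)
Your high-level plan mirrors the paper's: decompose into a ``single large agent'' case handled by the single-agent FPTAS and a ``small actions'' case handled by a price-based demand query plus scaling, then argue robustness across all equilibria. You also correctly identify the robustness step as the crux. But the paper's resolution of that step is a concrete device you do not have: the pair \emph{subset stability} (\Cref{def:subset-deviations}) and the \emph{doubling lemma} (\Cref{lemma:anyequilibrium}). Rather than trying to prove directly that every equilibrium $S'$ of your contract $\contract$ satisfies $f(S') = \Omega(f(U))$, the paper only asks that the computed set be \emph{subset stable} under $\contract$ (stable against each agent dropping to a subset of her current actions --- strictly weaker than NE), and then \emph{outputs the doubled contract} $2\contract + \vec{\epsilon}$ rather than $\contract$ itself. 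The doubling lemma then shows, via the potential function, that any equilibrium of $2\contract + \vec{\epsilon}$ has value at least $f(S)/2$. Your instinct to combine the potential $\phi$ with ``multiplicative slack built into the scaling lemma'' is pointed in the right direction, but without the doubling trick your charging argument does not close: the per-action marginal bounds you carry from the scaling lemma are stated relative to $U$, not relative to an arbitrary equilibrium $S'$, and submodularity pushes the inequalities the wrong way when you try to swap $S'$ in for $U$.

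Two further places where your outline diverges from the paper and where I think you would get stuck. First, the paper does \emph{not} price individual actions by $p_j = \tfrac{1}{2}\sqrt{c_j\,v}$; it prices agent bundles by $\gamma\sqrt{\sum_{j \in S_i} c_j}$ and (approximately) solves the non-additive demand problem $\max_S\bigl\{ f(S) - \gamma \sum_i \sqrt{c(S_i)}\bigr\}$. This bundle pricing is what simultaneously yields subset stability and a constant bound on $\sum_i \alpha_i$; with your per-action prices, the step ``by \Cref{lem:stronger-sumci}-style bounds on $\sum_i \sqrt{c(U_i)}$'' does not obviously sum to a constant --- the demand inequality for $T$ controls $\sum_{j\in T}\sqrt{c_j}$ only up to $f(T)/\sqrt{v}$, and $f(T)$ can be far larger than $v$ before scaling. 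Second, your claim that the single-agent leg is ``trivially uniquely-solved'' and that the decomposition $g(\sstar) \le \max_i g_i^\star + f(\sstar \cap \actions')$ simply carries over is precisely what the paper flags as nontrivial: the loss of monotonicity (lowering one agent's share can cause another to reduce effort) means that even showing it suffices to incentivize a single large agent already requires the subset-stability/doubling machinery. So the same missing lemma bites in both branches of your case split, not only in the robustness step for small actions.
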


\begin{remark}
Since demand queries for GS functions can be simulated using poly-many value queries (see~\Cref{thm:gs-greedy}), \Cref{thm:const-approx-multi-multi} also implies a constant-factor approximation for GS rewards with value oracles alone. Notably, this problem is known to be $\mathsf{NP}$-hard even for additive $f$ with binary actions (see~\Cref{sec:model2-additive}).
\end{remark}

The proof combines several new ideas, two of which are \emph{subset stability} and the \emph{doubling lemma}. Subset stability relaxes Nash equilibrium by requiring only that agents gain no benefit from deviating to subsets of their chosen actions (even if other types of deviations may still be profitable).

\begin{definition}[Subset stability~\citep{DuettingEFK24}]
\label{def:subset-deviations}
A set of actions $S$ is \emph{subset stable} with respect to a contract $\contract$ if, for all agents $i$ and all $S'_i \subseteq S_i$,  it holds that 
$\alpha_i f(S_i,S_{-i}) - c(S_i) \;\geq\; \alpha_i f(S'_i,S_{-i}) - c(S'_i)$.
\end{definition}


The doubling lemma states that if $S$ is subset stable under $\contract$, then any equilibrium of the scaled contract $2\contract+\vec{\epsilon}$ attains reward at least $f(S)/2$.

\begin{lemma}[Doubling lemma~\citep{DuettingEFK24}]
\label{lemma:anyequilibrium}
If $f$ is submodular, and $S$ is subset stable with respect to a contract $\contract$, then for every $\epsilon>0$, any equilibrium $\Seq{}$ of $2\contract+\vec{\epsilon}$ satisfies $f(\Seq{}) \geq \tfrac{1}{2}f(S)$, where $\vec{\epsilon}=(\epsilon,\ldots,\epsilon) \in \reals_+^n$. 
\end{lemma}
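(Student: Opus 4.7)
The plan is to establish the upper bound $f(S) - f(\Seq) \le \tfrac{1}{2} f(S)$ via a three-step argument that combines the NE condition at $\Seq$ (under the scaled contract $2\contract + \vec\epsilon$) with a \emph{fine-grained} form of subset stability, and then aggregates across agents via two applications of submodularity. The critical insight I need is that subset stability, when applied with $S'_i = S_i \setminus \{j\}$ for each action $j \in S_i$, yields the per-action bound $c_j \le \alpha_i \, f(j \mid S \setminus \{j\})$. This is substantially stronger than the naive per-set bound $c(S_i) \le \alpha_i f(S)$ (obtained by taking $S'_i = \emptyset$), and will be what prevents a factor-of-$n$ blowup when I sum across agents.

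First, for each agent $i$ I would apply the NE condition at $\Seq$ with the deviation $T_i = \Seq_i \cup S_i$, using additivity of $c$, to obtain
\[
(2\alpha_i + \epsilon)\bigl[f(\Seq_i \cup S_i,\Seq_{-i}) - f(\Seq)\bigr] \;\le\; c(S_i \setminus \Seq_i) \;\le\; c(S_i).
\]
Second, I would sum the per-action subset-stability bounds over all $j \in S$, grouped by owner:
\[
\sum_i \frac{c(S_i)}{\alpha_i} \;\le\; \sum_{j \in S} f(j \mid S \setminus \{j\}) \;\le\; f(S),
\]
where the last step is the standard submodular inequality: for any ordering $a_1,\ldots,a_k$ of $S$, $f(S) = \sum_\ell f(a_\ell \mid \{a_1,\ldots,a_{\ell-1}\}) \ge \sum_\ell f(a_\ell \mid S \setminus \{a_\ell\})$.

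Third, combining these two families of inequalities via $2\alpha_i + \epsilon > 2\alpha_i$,
\[
\sum_i \bigl[f(\Seq_i \cup S_i,\Seq_{-i}) - f(\Seq)\bigr] \;\le\; \sum_i \frac{c(S_i)}{2\alpha_i + \epsilon} \;\le\; \frac{1}{2}\sum_i \frac{c(S_i)}{\alpha_i} \;\le\; \frac{1}{2} f(S).
\]
A second application of submodularity---incrementally adding the disjoint $S_i$'s to $\Seq$, one agent at a time, where each marginal is bounded by the marginal added directly to $\Seq$---yields $f(\Seq \cup S) - f(\Seq) \le \sum_i [f(\Seq_i \cup S_i,\Seq_{-i}) - f(\Seq)]$. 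Finally, monotonicity of $f$ gives $f(\Seq \cup S) \ge f(S)$, so $f(S) - f(\Seq) \le \tfrac{1}{2} f(S)$, i.e., $f(\Seq) \ge \tfrac{1}{2} f(S)$.

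The main obstacle I anticipate is precisely the factor-of-$n$ loss that the naive approach incurs. Using only the weaker bound $c(S_i) \le \alpha_i f(S)$ would yield $\sum_i c(S_i)/(2\alpha_i+\epsilon) < \tfrac{n}{2} f(S)$, which is vacuous for $n \ge 2$. The crux is therefore recognizing that the per-action subset-stability bounds combine with the submodular ``leave-one-out'' identity $\sum_{j \in S} f(j \mid S \setminus \{j\}) \le f(S)$ to compress the cost sum back to $f(S)$; this compression is exactly what makes the doubling factor in $2\contract + \vec\epsilon$ just large enough to yield the $\tfrac{1}{2}$ guarantee. Edge cases where $\alpha_i = 0$ are forced by subset stability (with $S'_i = \emptyset$) to satisfy $c(S_i) = 0$, and are handled via the paper's convention that $c(S_i)/\alpha_i = 0$ in that regime.
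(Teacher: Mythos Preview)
Your proof is correct. The survey paper does not include its own proof of this lemma (it is stated as a cited result from \cite{DuettingEFK24}), so there is no in-paper argument to compare against; your three-step argument---NE deviation to $\Seq_i \cup S_i$, per-action subset stability combined with the submodular leave-one-out bound $\sum_{j\in S} f(j\mid S\setminus\{j\})\le f(S)$, and a second submodularity telescoping across agents---is exactly the standard route and all steps check out, including the $\alpha_i=0$ edge case via the paper's $0/0=0$ convention.
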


\begin{proof}
[Proof of~\Cref{thm:const-approx-multi-multi}] 
The proof reduces the analysis to two cases: either no agent is large in the optimal contract, or a single agent is incentivized. While a related decomposition was employed in~\citep{DuettingEFK23}, its extension beyond binary actions is far from immediate, due to the lack of monotonicity mentioned above.

{\bf Case 1:} No agent is ``large'' in the optimal contract; namely, for the optimal contract $\contract^\star$ and the optimal equilibrium $\sstar$ of $\contract^\star$, $f(\sstar_i)$ is bounded away from $f(S^\star)$ for every $i \in \agents$.  

In this case, an algorithm is provided that, with value oracle access, computes a contract $\contract$ together with a subset-stable set $S$ satisfying $\sum_{i \in \agents} \alpha_i \leq \tfrac{1}{4}$ and achieving $f(S) = \Omega(1)\cdot f(\sstar)$.
Then, the doubling lemma shows that $2 \contract + \vec{\epsilon}$ is a good approximation of the optimal contract. 

Notably, although subset stability is easier to guarantee than full stability, verifying it directly still involves checking exponentially many conditions. To address this, \cite{DuettingEFK24} identify a sufficient condition for subset stability, which uncovers an
interesting connection to a particular form of bundle prices. That is, it suffices to select a set that maximizes $f(S) - \gamma \sum_{i \in \agents} \sqrt{\sum_{j \in S_i} c_i}$ for some $\gamma >0$. This objective can be interpreted as computing a demand set with respect to specific non-additive bundle prices. For this latter problem, an approximation algorithm is provided.

{\bf Case 2:} Only one agent is incentivized in the optimal contract $\contract^\star$; that is, $\alpha^\star_i = 0$ for all but one agent $i$. 
Recall that in the single-agent case, there exists an FPTAS even for general $f$, under demand oracle access (see~\Cref{thm:FPTAS-monotone}). 
In the multi-agent setting, other equilibria may emerge, involving additional agents (beyond the intended one) taking actions.
By appealing once more to the concept of subset stability and the doubling lemma, it can be shown that the loss to the principal's utility due to these agents can be bounded.

The final step shows that the principal's optimal utility can be approximated within a constant factor by focusing on the two identified cases. In particular, it demonstrates that when the optimal solution involves a large agent, it is sufficient to incentivize a single agent. Establishing this is nontrivial due to the absence of monotonicity noted earlier. The argument relies on two structural properties: every equilibrium is subset stable, and subset stability is preserved under the removal of actions, together with an additional application of the doubling lemma.
\end{proof}

\subsection{Extensions and related models}
\label{sec:model3-extensions}

\subsubsection{Budget constraints and additional objectives}
\label{sec:model3-budgets}

\cite{budget-multi-working} extend the budgeted variant of the multi-agent binary-action model of~\cite{FeldmanTPS25} to the multi-agent, multi-action setting. This framework combines two sources of complexity: combinatorial actions and budget constraints. They show that for submodular rewards, no randomized polynomial-time algorithm can approximate the optimal budget-feasible principal's utility within any finite factor, even under demand-oracle access.

Interestingly, either source of complexity alone admits a constant-factor approximation: (i) the multi-agent binary-action setting with submodular rewards and budgets (see~\Cref{sec:model2-budgets}), and (ii) the unbudgeted multi-agent multi-action setting with submodular rewards (\Cref{thm:const-approx-multi-multi}). Thus, it is the combination of the two that leads to intractability.

On the positive side, when rewards are GS, a deterministic polynomial-time constant-approximation is obtained using only value queries. These results highlight a separation between budgeted and unbudgeted settings in combinatorial contracts and delineate GS as a tractable frontier for budgeted contract design in multi-agent multi-action settings.

Finally, for additive rewards they present an FPTAS, showing that arbitrary approximation is feasible under any budget. This is the first FPTAS for multi-agent multi-action settings, even without budget constraints.

\subsubsection{Beyond pure Nash equilibrium}
\label{sec:model3-beyond-ne}
As discussed in~\Cref{sec:model2-beyond-ne}, \cite{mixed_eq_working_paper} offer a comprehensive comparison of the principal's utility under different equilibrium notions, including pure, mixed, correlated (CE), and coarse-correlated (CCE) equilibria.
For submodular and XOS rewards, they show that moving to richer equilibrium concepts (up to CCE) can increase utility by at most a constant factor, whereas for subadditive rewards the gap can be polynomial in the number of agents. Supermodular rewards exhibit a further distinction between binary and combinatorial actions: with binary actions there is no gap between pure NE and CCE, but with combinatorial actions a separation emerges: correlated equilibria yield the same utility as pure equilibria, while coarse-correlated equilibria can achieve arbitrarily higher utility. Finally, for general monotone rewards, the gap can be arbitrarily large, even between mixed and pure equilibria.

\subsubsection{Multiple outcomes and randomized contracts}
\label{sec:multi-agent-randomized}

\cite{CacciamaniEtAl24} study a general, explicitly represented, multi-agent multi-action model with multiple outcomes, where each action profile induces a probability distribution over outcomes. They introduce a class of \emph{randomized contracts}, consisting of a distribution over recommended action profiles, together with a classic contract for each agent for each action profile. Their equilibrium concept resembles correlated equilibrium. 
They show that the gap between
randomized and deterministic contracts in this setting can be unbounded. 
On the algorithmic side, they cast the optimal randomized contract problem as a quadratic program and design a scheme that computes $(1+\epsilon)$-approximate randomized contracts in polynomial time. 

\subsection{Open problems}
\label{sec:model3-open}

The multi-agent multi-action model raises several natural questions for future work:

\begin{itemize}

\item Extension to XOS rewards.
A central open question is whether the positive results for submodular rewards extend to XOS. The current analysis does not apply even with binary actions, though constant-factor approximation guarantees might still be possible under weaker benchmarks; for instance, by coupling contracts with recommended equilibria \citep{DuettingEFK23}, or by evaluating contracts against their worst-case equilibrium. Since the proofs rely heavily on submodularity (e.g., via the doubling lemma), extending them to XOS will require new techniques.

\item Approximation schemes for GS.
Another important question is whether the problem admits a PTAS or FPTAS when rewards are GS. This remains open even in the binary-action case, though a hardness result (if exists) may be easier to obtain for multiple actions.

\item Restricted domains. Restricting the domain to few actions or few agents raises new possibilities. With a constant number of actions, can submodular rewards admit constant-factor approximations with value queries alone, as in the binary-action case \citep{DuettingEFK23}, unlike the hardness under an arbitrary number of actions \citep{EzraFS24}? With a constant number of agents, can we achieve exact optimization for GS rewards, PTAS/FPTAS for submodular rewards, or constant-factor guarantees for subadditive rewards (with demand oracles)?
\end{itemize}

\section{Concluding remarks and future directions}
\label{sec:conclusions}

This article explores the emerging field of combinatorial contract design, highlighting models in which agents face exponentially many actions or interact strategically in complex teams. We reviewed structural insights, algorithmic techniques, and hardness barriers across three central settings. These studies reveal both surprising tractable cases and strong impossibility results, showing how structural properties of reward functions, together with considerations such as budget and fairness, shape computational guarantees.

A key limitation of the current literature is its near-exclusive focus on binary outcomes. The classic model of \cite{Holmstrom79} and its algorithmic counterpart by \cite{DuttingRT19} consider multiple outcomes, a feature that is both more realistic and conceptually richer. Some results surveyed here extend to multiple outcomes, but only under linear contracts, where this restriction is no longer without loss of generality. Extending the combinatorial settings to richer outcome spaces remains an important open direction. It is still unclear how multiple outcomes should be modeled in these settings, and this challenge is largely unexplored. The introduction of multiple outcomes also opens several research avenues, such as understanding the power of ambiguous contracts \citep{DuettingFP23,DuettingFR25}, and quantifying the effectiveness of simple contracts \citep{DuttingRT19}. In a complementary direction, \cite{DuttingRT21} propose a model in which the outcome space itself is combinatorial (rather than the agents or actions).

Much remains to be explored. Immediate directions include characterizing the frontier of tractable reward functions and tightening existing approximation bounds. Broader directions include extending these models to Bayesian environments with private information~\citep{GuruganeshSW21,AlonDT21}, or to learning-based approaches under partial information~\citep{HoSV16,ZhuEtAl22}. Another promising line of work is incorporating inspection and monitoring of actions \citep{EzraLR24} into these combinatorial settings. Importantly, each of these directions is already highly non-trivial in the simpler non-combinatorial setting.

Looking ahead, emerging AI agents raise a fresh set of challenges and opportunities for contract design. Their growing computational capabilities, their ability to operate over vast and evolving action spaces, and their departure from classical assumptions about humans call for new models and analysis.
Taken together, the directions outlined in this article chart a new frontier at the intersection of computation and economics.

\bibliographystyle{plainnat}
\bibliography{icm_references}
\end{document}